% This document is compiled using pdfLaTeX
% You can switch XeLaTeX/pdfLaTeX/LaTeX/LuaLaTeX in Settings
\documentclass[12pt]{article}
\usepackage[utf8]{inputenc}
\usepackage{color}
\usepackage{fullpage}
\usepackage{amsmath}
\usepackage{amssymb, amsfonts}
\usepackage{amsthm} % 根据 amsthm 的手册, amsthm 的加载要在 amsmath 之后
\newtheorem{assumption}{Assumption}

\def\tr{\mathrm{tr}}
\def\diag{\mathrm{diag}}

\def\bm{\boldsymbol}
\usepackage{bm}
\usepackage{mathrsfs}
\usepackage{xr}
\usepackage{tablefootnote}
\usepackage{tabularx}
\usepackage{amsfonts}
\usepackage{graphicx,epstopdf}
\usepackage{float}
\usepackage{color}
\usepackage{booktabs}
\usepackage{natbib}

\textwidth 6.2in \setlength{\topmargin}{-0.5in}

\parskip 0.3cm

\parindent 0.7cm
\usepackage{geometry}
\geometry{left=2.5cm,right=2.5cm,top=2.5cm,bottom=3.5cm}

\newtheorem{theorem}{Theorem}

\newtheorem{lemma}{Lemma}
\def\D{{\mathcal D}}

\numberwithin{equation}{section}
\newtheorem{coro}{Corollary}

\allowdisplaybreaks[4]

\def\X{{\bf X}}

\def\diag{\mbox{diag}}

\def\D{{\mathcal D}}

\def\I{{\bf{I}}}

\def\tilde{\widetilde}

\def\tr{{\rm tr}}
\def\bms{{\bf \Sigma}}

\def\diag{\hbox{diag}}

\def\D{{\bf D}}

\def\I{{\bf I}}

\def\R{{\bf R}}

\def\trans{^\top}

\def\0{{\bf 0}}
\def\1{{\bf 1}}
\def\vec{\mathrm{vec}}

\def\bth{{\bm\theta}}
\allowdisplaybreaks[4]

\title{Spatial-Sign based Maxsum Test for High Dimensional Location Parameters}
\author{Jixuan Liu, Long Feng, Ping Zhao and Zhaojun Wang\\
School of Statistics and Data Science, KLMDASR, LEBPS, and LPMC,\\ Nankai University}
\date{\today}

\begin{document}

\maketitle
\begin{abstract}
In this study, we explore a robust testing procedure for the high-dimensional location parameters testing problem. Initially, we introduce a spatial-sign based max-type test statistic, which exhibits excellent performance for sparse alternatives. Subsequently, we demonstrate the asymptotic independence between this max-type test statistic and the spatial-sign based sum-type test statistic (Feng and Sun, 2016). Building on this, we propose a spatial-sign based max-sum type testing procedure, which shows remarkable performance under varying signal sparsity. Our simulation studies underscore the superior performance of the procedures we propose.

{\it Keywords:} Asymptotic independence, High dimensional data, Scalar-invariant, Spatial-sign.

\end{abstract}
\section{Introduction}
The testing of location parameters is a crucial and extensively researched area in multivariate statistics with a fixed dimension. The conventional Hotelling’s $T^2$ test is commonly applied, but it fails in high-dimensional scenarios where the variable’s dimension $p$ exceeds the sample sizes $n$. Consequently, numerous efforts have been made to develop a high-dimensional mean test procedure. One straightforward approach is to substitute the Mahalanobis distance with the Euclidean distance. For the two-sample location problem, \cite{bai1996effect} employed the $L_2$-norm of the difference between two sample means. \cite{chen2010two} eliminated some redundant terms in \cite{bai1996effect}'s test statistics and made no assumptions about the relationship between the dimension and sample sizes. \cite{srivastava2009test}, \cite{park2013test}, and \cite{feng2015two} suggested some scalar-invariant test statistics that replace the sample covariance matrix in Hotelling’s $T^2$ test statistics with its diagonal matrix. All these methods are built on the assumption of normal distribution or diverging factor models, which perform poorly for heavy-tailed distributions. For instance, the well-known multivariate t-distribution does not meet the above assumption. Therefore, numerous studies have also considered robust high-dimensional test procedures.

In traditional multivariate analysis, numerous methods have been developed to extend classic univariate rank and signed rank techniques to a multivariate context. A significant method is based on spatial signs and ranks, utilizing the so-called Oja median \citep{oja2010multivariate}. \cite{wang2015high} proposed a high-dimensional spatial-sign test that replaces the scatter matrix with the identity matrix for a one-sample location problem. Similarly, \cite{feng2016} proposed a high-dimensional spatial sign test that replaces the scatter matrix with its diagonal matrix, which has a scalar-invariant property. Furthermore, \cite{feng2016multivariate} considered the high-dimensional two-sample location problem based on the spatial-sign method. Feng, Liu, and Ma (2021) devised an inverse norm sign test that considers not only the direction of the observations but also the modulus of the observation. \cite{huang2023high} extended the inverse norm sign test for a high-dimensional two-sample location problem. \cite{feng2020high} demonstrated that the spatial-rank method also performs well for a high-dimensional two-sample problem. All these methods are constructed using the $L_2$-norm of the spatial median, which performs well under dense alternatives, meaning many variables have non-zero means. However, it is well-known that these sum-type test procedures perform poorly for sparse alternatives, where only a few variables have non-zero means.

In high-dimensional settings, numerous max-type test procedures have been introduced to detect sparse alternatives. \cite{CLX14} proposed a test statistic for the high-dimensional two-sample mean problem, which is based on the maximum difference between the means of two samples’ variables under the Gaussian or sub-Gaussian assumption. For heavy-tailed distributions, \cite{cheng2023} established a Gaussian approximation for the sample spatial median over the class of hyperrectangles and constructed a max-type test procedure using a multiplier bootstrap algorithm. However, their proposed test statistic is not scalar-invariant, and they did not provide the limit null distribution of their test statistic. The multiple bootstrap algorithm is also time-consuming. In this paper, we first introduce a novel Bahadur representation of the scaled sample spatial median and then construct a new max-type test statistic. We demonstrate that the limit null distribution of the proposed test statistic is still a Type I Gumbel distribution. We also establish the consistency of the proposed max-type test procedure. Simulation studies further illustrate its superiority over existing methods under sparse alternatives and heavy-tailed distributions.

In practical scenarios, it’s often unknown whether the alternative is dense or sparse. This has led to numerous studies proposing an adaptive strategy that combines the sum-type test and max-type test. For high-dimensional mean problems, \cite{xu2016adaptive} integrated different $L_r$-norms of the sample means. \cite{he2021} introduced a family of U-statistics as an unbiased estimator of the $L_r$-norm of the mean vectors, { covariance
matrices and regression coefficients}, demonstrating that U-statistics of different finite orders are asymptotically independent, normally distributed, and also independent from the maximum-type test statistic. \cite{feng2022asymptotic} relaxed the covariance matrix assumption to establish independence between the sum-type test statistic and the max-type test statistic. There are also many other studies showing the asymptotic independence between the sum-type test statistics and the max-type test statistic for other high-dimensional problems. For instance, \cite{wu2019adaptive} and \cite{wu2020regularization} examined the coefficient test in high-dimensional generalized linear models. \citet{feng2022a} looked at the cross-sectional independence test in high-dimensional panel data models. \citet{yu2022jasa} focused on testing the high-dimensional covariance matrix. \cite{feng2022testingwhite} considered the high-dimensional white noise test, while \citet{wang2023} looked at high-dimensional change point inference. \cite{ma2023adaptive} and \cite{Yu2023PE} considered testing the alpha of high-dimensional time-varying and linear factor pricing models, respectively.

However, all these methods assume a normal or other light-tailed distributions. There's a gap in the literature when it comes to considering the asymptotic independence between the sum-type test statistic and the max-type test statistic under heavy-tailed distributions.  In this paper, we first establish the asymptotic independence between \cite{feng2016}'s sum-type test statistic and a newly proposed spatial sign-based max-type test statistic for high dimensional one sample location parameter problem. We then propose a Cauchy combination test procedure \citep{liu2020} to handle general alternatives. Both simulation studies and theoretical results demonstrate the advantages of our newly proposed methods.

This paper are organized as follow. Section 2 introduce Bahadur representation of the scaled spatial median and establish the max-type test statistic. In section 3, we prove the asymptotic independence between the sum-type test statistic and the new proposed max-type test statistic and construct the Cauchy combination test procedure. Section 4 show some simulation studies. We give a real data application in Section 5. Some discussion are stated in Section 6. All the technical details are in the Appendix.

\textbf{Notations:} For $d$-dimensional $\boldsymbol x$, we use the notation $\Vert \boldsymbol x\Vert$ and $\Vert \boldsymbol x\Vert_\infty$ to denote its Euclidean norm and maximum-norm respectively. Denote $a_n\lesssim b_n$ if there exists constant $C$, $a_n\leq C b_n$ and $a_n \asymp b_n$ if  both $a_n\lesssim b_n$ and $b_n\lesssim a_n$ hold. Let $\psi_\alpha(x)=\exp \left(x^\alpha\right)-1$ be a function defined on $[0, \infty)$ for $\alpha>0$. Then the Orlicz norm $\|\cdot\|_{\psi_\alpha}$ of a $\boldsymbol X$ is defined as $\|\boldsymbol X\|_{\psi_\alpha}=\inf \left\{t>0, \mathbb{E}\left\{\psi_\alpha(|\boldsymbol X| / t)\right\} \leqslant 1\right\}$. Let $\operatorname{tr}(\cdot)$ be a trace for matrix, $\lambda_{min}(\cdot)$ and $\lambda_{max}(\cdot )$ be the minimum and maximum eigenvalue for symmetric martix. $\mathbf I_p$ represents a p-dimensional identity matrix, and $ \operatorname{diag}\{v_1,v_2,\cdots,v_p\}$ represents the diagonal matrix with entries $\boldsymbol v=(v_1,v_2,\cdots,v_p)$. For $a, b \in \mathbb{R}$, we write $a \wedge b=\min \{a, b\}$.

\section{Max-type test}

Let $\boldsymbol X_1, \ldots, \boldsymbol X_n$ be a sequence of independent and identically distributed (i.i.d.) $p$-dimensional random vectors from a population $X$ with cumulative distribution function $F_X$ in $\mathbb{R}^p$. We consider the following model:
\begin{align}\label{modelx}
\boldsymbol X_i=\boldsymbol\theta+v_i\mathbf\Gamma \boldsymbol W_i,
\end{align}
where $\boldsymbol\theta$ is location parameter, $\boldsymbol W_i$ is a
p-dimensional random vector with independent components, $E(\boldsymbol W_i)=0$, $\mathbf \Sigma=\mathbf \Gamma\mathbf\Gamma^\top$, $v_i$ is a nonnegative univariate random variable and is  independent with the spatial sign of $\boldsymbol W_i$. The distribution of $\boldsymbol X$ depends on $\mathbf\Gamma$ through the shape matrix. Model (\ref{modelx}) encompasses a wide range of frequently utilized multivariate models and distribution families, such as the independent components model \citep{nordhausen2009signed,ilmonen2011semiparametrically,yao2015sample} and the family of elliptical distributions \citep{hallin2006semiparametrically,oja2010multivariate,fang2018symmetric}.

In this paper, we focus on the following one sample testing problem
\begin{align}
H_0: \boldsymbol{\theta}=\mathbf{0} \text { versus } \quad H_1: \boldsymbol{\theta} \neq \mathbf{0}.
\end{align}
The spatial sign function is defined as $U(\mathbf{x})=\|\mathbf{x}\|^{-1} \mathbf{x} I(\mathbf{x} \neq \mathbf{0})$. In traditional fixed $p$ circumstance, the following so-called "inner centering and inner standardization" sign-based procedure is usually used (cf., Chapter 6 of \cite{oja2010multivariate})
$$
Q_n^2=n p \overline{\boldsymbol{U}}^T \overline{\boldsymbol{U}},
$$
where $\overline{\boldsymbol{U}}=\frac{1}{n} \sum_{i=1}^n \hat{\boldsymbol{U}}_i, \hat{\boldsymbol{U}}_i=U\left(\mathbf{S}^{-1 / 2} \boldsymbol{X}_{i}\right), \mathbf{S}^{-1 / 2}$ are Tyler's scatter matrix (cf., Section 6.1.3 of \cite{oja2010multivariate}). $Q_n^2$ is affine-invariant and can be regarded as a nonparametric counterpart of Hotelling's $T^2$ test statistic by using the spatial-signs instead of the original observations $\boldsymbol{X}_{i}$ 's. However, when $p>n, Q_n^2$ is not defined as the matrix $\mathbf{S}^{-1 / 2}$ is not available in high-dimensional settings.

In high-dimensional settings, \cite{wang2015high} proposed a method where Tyler's scatter matrix is replaced by the identity matrix. This led to the following test statistic:
\begin{align*}
T_{WPL}=\sum_{i<j}U^T(\bm X_i)U(\bm X_j).
\end{align*}
Building on this, \cite{feng2016} extended the method and introduced a scalar-invariant spatial-sign based test procedure, which will be detailed in subsection \ref{sumtypesec}.
Both methods utilize sum-type test statistics, which perform well under dense alternatives where many elements of $\bm \theta$ are nonzeros. However, their power decreases under sparse alternatives where only a few elements of $\bm \theta$ are nonzeros.

It is well-known that max-type tests have good performance under sparse alternatives \citep{CLX14}. Therefore, \cite{cheng2023} first provided the Bahadur representation of the classic spatial median $\tilde{\bm \theta}$, defined as:
\begin{align}
\tilde{\bm \theta}=\arg\min_{\bm \theta}\sum_{i=1}^n||\X_i-\bm \theta||.
\end{align}
They then proposed a max-type test procedure based on Gaussian approximation.
While this approach is robust and effective in high-dimensional settings, it loses scalar information of different variables and is not scalar-invariant. In real-world scenarios, different components may have entirely different physical or biological readings, and their scales would not be identical. Moreover, due to the unequal scale of $\tilde{\bm \theta}$, it is not possible to derive the limited null distribution of $||\tilde{\bm \theta}||_{\infty}$ even under weak correlation assumption.
In this paper, we first provide the Bahadur representation and Gaussian approximation of the location estimator proposed in \cite{feng2016multivariate}. We then propose a new max-type test statistic and establish its limit null distribution under some mild conditions.

\subsection{Bahadur representation and Gaussian approximation }

Motivated by \cite{feng2016multivariate}, we suggest to find a pair of diagonal matrix $\mathbf D$ and vector $\boldsymbol{\theta}$ for each sample that simultaneously satisfy
\begin{equation}\label{eq:HR}
\frac{1}{n} \sum_{i=1}^n U\left(\boldsymbol{\epsilon}_i\right)=0 \text { and } \frac{p}{n} \operatorname{diag}\left\{\sum_{i=1}^n U\left(\boldsymbol{\epsilon}_i\right) U\left(\boldsymbol{\epsilon}_i\right)^\top\right\}=\mathbf{I}_{p},
\end{equation}
where $\boldsymbol{\epsilon}_i=\mathbf{D}^{-1 / 2}\left(\boldsymbol{X}_i-\boldsymbol{\theta}\right)$. ($\mathbf D$, $\left.\boldsymbol{\theta}\right)$ can be viewed as a simplified version of HettmanspergerRandles (HR) estimator , which is proposed
 in \cite{hettmansperger2002practical},  without considering the off-diagonal elements of $\mathbf{S}$. We can adapt the recursive algorithm of \cite{feng2016multivariate} to solve Equation \ref{eq:HR}. That is, repeat the following three steps until convergence:\par
(i) $\boldsymbol{\epsilon}_i \leftarrow \mathbf{D}^{-1 / 2}\left(\mathbf{X}_i-\boldsymbol{\theta}\right), \quad j=1, \cdots, n$;\par
(ii) $\boldsymbol{\theta} \leftarrow \boldsymbol{\theta}+\frac{\mathbf{D}^{1 / 2} \sum_{j=1}^n U\left(\boldsymbol{\epsilon}_i\right)}{\sum_{j=1}^n\left\|\epsilon_i\right\|^{-1}}$;\par
(iii) $\mathbf{D} \leftarrow p \mathbf{D}^{1 / 2} \operatorname{diag}\left\{n^{-1} \sum_{j=1}^n U\left(\boldsymbol{\epsilon}_i\right) U\left(\boldsymbol{\epsilon}_i\right)^\top\right\} \mathbf{D}^{1 / 2}$.\par
The resulting estimators of location and diagonal matrix are denoted as $\hat{\boldsymbol{\theta}}$ and $\hat{\mathbf{D}}$. The sample mean and sample variances can be utilized as initial estimators. Regrettably, no evidence has been found to confirm the convergence of the aforementioned algorithm, even in low-dimensional scenarios, despite its consistent practical effectiveness. The existence or uniqueness of the HR estimator mentioned above also lacks proof. This topic certainly warrants further investigation.

In this section, we investigate some theoretical properties based on maximum-norm  about $\hat{\boldsymbol\theta}$. Similar to the proof of Lemma 1 and Theorem 1 in \cite{cheng2023}, we give the Bahadur representation of $\hat{\bf D}^{-1/2}(\hat{\boldsymbol\theta}-\boldsymbol\theta)$ and the Gaussian approximation result for $\hat{\bf D}^{-1/2}(\hat{\boldsymbol\theta}-\boldsymbol\theta)$ over hyperrectangles. Based on Gaussian approximation, we can easily derive the limiting distribution of $\hat{\boldsymbol\theta}$ based on the maximum-norm.

For $i=1,2,\cdots,n$, we denote
$\boldsymbol U_i=U(\mathbf D^{-1/2}(\boldsymbol X_i-\boldsymbol \theta))$ and $R_i=\Vert \mathbf D^{-1/2}(\boldsymbol X_i-\boldsymbol \theta)\Vert$ as the scale-invariant spatial-sign and radius of $\boldsymbol X_i-\boldsymbol \theta$, respectively. The moments of $R_i$ is defined as   $\zeta_k=\mathbb{E}\left(R_i^{-k}\right) $ for  $k=1,2,3,4$.  Denote $\boldsymbol W_i=\left(W_{i, 1}, \ldots, W_{i, p}\right)^{\top}$, the assumption is as follows.
\begin{assumption}\label{max1}
    $W_{i, 1}, \ldots, W_{i, p}$ are i.i.d. symmetric random variables with $\mathbb{E}\left(W_{i, j}\right)=0, \mathbb{E}\left(W_{i, j}^2\right)=$ 1 , and $\left\|W_{i, j}\right\|_{\psi_\alpha} \leqslant c_0$ with some constant $c_0>0$ and $1 \leqslant \alpha \leqslant 2$.
\end{assumption}
\begin{assumption}\label{max2}
    The moments $\zeta_k=\mathbb{E}\left(R_i^{-k}\right)$ for $k=1,2,3,4$ exist for large enough $p$. In addition, there exist two positive constants $\underline{b}$ and $\bar{B}$ such that $\underline{b} \leqslant \lim \sup _p \mathbb{E}\left(R_i / \sqrt{p}\right)^{-k} \leqslant \bar{B}$ for $k=1,2,3,4$.
\end{assumption}
\begin{assumption}\label{max3}
    The shape matrix $\R=\mathbf D^{-1/2}\mathbf\Gamma\mathbf\Gamma^\top \mathbf D^{-1/2}=\left(\sigma_{j \ell}\right)_{p \times p}$  satisfies $\max _{j=1,\cdots,p}\sum_{\ell=1}^p\left|\sigma_{j \ell}\right| \leqslant a_0(p)$ where $a_0(p)$ is a constant depending  only on  dimension $p$. In addition, $\lim\inf_{p\rightarrow \infty}\min_{j=1,2,\cdots,p}d_j>\underline{d}$ for some constant $\underline d>0$, where $\mathbf D=\operatorname{diag}\{d_1^2,d_2^2,\cdots,d_p^2\}$.
\end{assumption}

\begin{remark}{1}
Assumption \ref{max1} is the same as Condition C.1 in \cite{cheng2023}, which ensure that $\bm \theta$ in model (\ref{modelx}) is the population spatial median and $W_{i,j}$ has a sub-exponential distribution. If $\bm W_i\sim N(\bm 0, \I_p)$, $\bm X_i$ follows a elliptical symmetric distribution. Assumption \ref{max2} extend the Assumption 1 in \cite{zou2014multivariate}, which indicates that $\zeta_k\asymp p^{-k/2}$. {  It is a mild assumption and introduced to avoid $\boldsymbol X_i$ from concentrating too much near $\boldsymbol\theta$. For three commonly used distribution, multivariate normal, student-$t$ and mixtures of multivariate normal distributions, Assumption 2 are satisfied. It also ensures the existence of the moments $\mathbb E R_i^{-k}$, $k=1,2,3,4$. For example, for standard multivariate normal distribution, $\mathbb E R_i^{-4}$ equals to $1/(p-2)(p-4))$ which restricts the dimension $p>4$.  See also discussions in  \cite{zou2014multivariate,cheng2023} on similar assumptions.} Assumption \ref{max3} means the correlation between those variables could not be too large, which is similar to the matrix class in \cite{bickel2008covariance}.
\end{remark}

The following lemma shows a Bahadur representation of $\hat{\bm\theta}$,  which is the basis of Gaussian approximation result in Theorem \ref{thm1}.
\begin{lemma}\label{lemma1}
    (Bahadur representation) Assume Assumptions \ref{max1}-\ref{max3} with $a_0(p)\asymp p^{1-\delta}$ for some positive constant $\delta\leq 1/2$ hold. If $\log p=o(n^{1/3})$ and $\log n=o(p^{1/3\wedge \delta})$, then
    $$
    n^{1/2}\hat{\mathbf D}^{-1/2}(\hat{\boldsymbol \theta}-\boldsymbol \theta)=n^{-1/2}\zeta_1^{-1}\sum_{i=1}^n \boldsymbol U_i+C_n,
    $$
    where
$$
\begin{aligned}
\Vert C_n\Vert_\infty=O_p\{n^{-1/4}\log^{1/2}(np)+p^{-(1/6\wedge \delta/2)}\log^{1/2}(np)+n^{-1/2}(\log p)^{1/2}\log^{1/2}(np)\}.
\end{aligned}
$$
\end{lemma}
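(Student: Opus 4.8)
The plan is to treat $(\hat{\boldsymbol\theta},\hat{\mathbf D})$ as a $Z$-estimator defined by the estimating equations (\ref{eq:HR}) and to linearize the location equation $\sum_{i=1}^n U(\hat{\mathbf D}^{-1/2}(\boldsymbol X_i-\hat{\boldsymbol\theta}))=\mathbf 0$ around the truth $(\boldsymbol\theta,\mathbf D)$. Writing $\boldsymbol\epsilon_i=\mathbf D^{-1/2}(\boldsymbol X_i-\boldsymbol\theta)$, $\boldsymbol U_i=U(\boldsymbol\epsilon_i)$ and $R_i=\|\boldsymbol\epsilon_i\|$, the gradient of the spatial sign is $\nabla U(\boldsymbol\epsilon_i)=R_i^{-1}(\mathbf I_p-\boldsymbol U_i\boldsymbol U_i^\top)$, so a Taylor expansion of the estimating equation gives, to first order,
\[
\mathbf 0=\frac1n\sum_{i=1}^n\boldsymbol U_i-\left(\frac1n\sum_{i=1}^n R_i^{-1}(\mathbf I_p-\boldsymbol U_i\boldsymbol U_i^\top)\right)\mathbf D^{-1/2}(\hat{\boldsymbol\theta}-\boldsymbol\theta)+\boldsymbol r_n,
\]
where $\boldsymbol r_n$ collects the quadratic Taylor remainder and the contribution of $\hat{\mathbf D}-\mathbf D$. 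The leading Jacobian has population mean $\mathbb E[R_i^{-1}(\mathbf I_p-\boldsymbol U_i\boldsymbol U_i^\top)]$; using that under model (\ref{modelx}) the radius $R_i$ is independent of the sign $\boldsymbol U_i$, together with $\mathbb E[\boldsymbol U_i\boldsymbol U_i^\top]$ having diagonal close to $p^{-1}\mathbf I_p$ by the standardization in (\ref{eq:HR}), this mean equals $\zeta_1(1-p^{-1})\mathbf I_p$ up to an off-diagonal error governed by the correlation bound $a_0(p)$ in Assumption \ref{max3}. Inverting and rescaling by $n^{1/2}$ then produces the asserted leading term $n^{-1/2}\zeta_1^{-1}\sum_{i=1}^n\boldsymbol U_i$, with all higher-order pieces absorbed into $C_n$.

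A crucial simplification, which I would isolate first, is the \emph{orthogonality between location and scale}: the derivative of $\mathbb E[U(\mathbf D^{-1/2}(\boldsymbol X-\boldsymbol\theta))]$ with respect to the diagonal entries of $\mathbf D$ vanishes at the truth because $U$ is odd and each $W_{i,j}$ is symmetric by Assumption \ref{max1}. Consequently $\hat{\mathbf D}-\mathbf D$ enters $\boldsymbol r_n$ only at second order, so I need merely a crude preliminary consistency rate for $\hat{\mathbf D}$ (and for $\hat{\boldsymbol\theta}$) rather than a full Bahadur expansion of the scale estimator. Replacing $\mathbf D$ by $\hat{\mathbf D}$ in the final rescaling is then legitimate at the price of a term that is subsumed in $C_n$.

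With this in hand I would carry out the expansion in three controlled pieces, all measured in the entrywise maximum norm. First, establish preliminary rates $\|\mathbf D^{-1/2}(\hat{\boldsymbol\theta}-\boldsymbol\theta)\|_\infty$ and the relative error of $\hat{\mathbf D}$ by the same concentration machinery as in \cite{cheng2023}, exploiting the sub-exponential tails of Assumption \ref{max1} and the moment control $\zeta_k\asymp p^{-k/2}$ of Assumption \ref{max2}, which keeps $R_i\asymp\sqrt p$ with high probability and hence keeps the expansion away from the origin where $U$ is singular. Second, show the empirical Jacobian concentrates around its mean in sup-norm via a coordinatewise maximal inequality; this yields the $n^{-1/2}(\log p)^{1/2}\log^{1/2}(np)$ term. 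Third, bound the quadratic Taylor remainder, which—combined with the preliminary rate and a maximal inequality over the $p$ coordinates—contributes the $n^{-1/4}\log^{1/2}(np)$ term, while the deviation of the Jacobian mean from $\zeta_1\mathbf I_p$ (the off-diagonal part of $\mathbb E[\boldsymbol U_i\boldsymbol U_i^\top]$ and the $p^{-1}$ correction) produces the dimension term $p^{-(1/6\wedge\delta/2)}\log^{1/2}(np)$, where $\delta$ enters through $a_0(p)\asymp p^{1-\delta}$ in Assumption \ref{max3}.

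The main obstacle I anticipate is the uniform, sup-norm control of the sign-based empirical process over a shrinking neighborhood of $(\boldsymbol\theta,\mathbf D)$: one needs stochastic equicontinuity simultaneously across all $p$ coordinates, so that the difference between the empirical sign sums at $(\hat{\boldsymbol\theta},\hat{\mathbf D})$ and at $(\boldsymbol\theta,\mathbf D)$ can be linearized with a negligible remainder. Making the powers of $p$ come out correctly requires carefully tracking the scalings $\nabla U=O(R_i^{-1})=O(p^{-1/2})$ and $\nabla^2 U=O(R_i^{-2})=O(p^{-1})$ against the $O(p^{-1/2})$ size of each coordinate of $\boldsymbol U_i$, and controlling the maxima through the Orlicz-norm bounds to generate the $\log p$ and $\log(np)$ factors; the rate conditions $\log p=o(n^{1/3})$ and $\log n=o(p^{1/3\wedge\delta})$ are exactly what render each of the three terms $o_p(1)$.
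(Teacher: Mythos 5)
Your overall architecture is essentially the paper's: the paper also linearizes the estimating equation $\sum_i U(\hat{\mathbf D}^{-1/2}(\boldsymbol X_i-\hat{\boldsymbol\theta}))=\mathbf 0$ (via an expansion of $(1+x)^{-1/2}$ rather than an explicit gradient formula), the Jacobian appears there as the scalar $n^{-1}\sum_i\hat R_i^{-1}$ times the identity minus the matrix $\hat{\mathbf Q}=n^{-1}\sum_i\hat R_i^{-1}\hat{\boldsymbol U}_i\hat{\boldsymbol U}_i^\top$, and the preliminary rate for $\|\hat{\mathbf D}^{-1/2}\hat{\boldsymbol\theta}\|_\infty$ is obtained by a self-bounding inequality exploiting $p^{-1}a_0(p)=p^{-\delta}\to0$. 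The genuine gap is in your justification of the key Jacobian step: you assert that under model (\ref{modelx}) the radius $R_i$ is independent of the sign $\boldsymbol U_i$. That is false at the level of generality of Assumption \ref{max1}. The model only postulates that $v_i$ is independent of the spatial sign of $\boldsymbol W_i$; since $R_i=v_i\|\mathbf D^{-1/2}\boldsymbol\Gamma\boldsymbol W_i\|$, the radius also involves $\|\boldsymbol W_i\|$, which is dependent on the direction $U(\boldsymbol W_i)$ unless $\boldsymbol W_i$ is spherically distributed — i.e., your shortcut is exactly the elliptical-symmetry assumption that this lemma is designed to avoid. The paper handles this through Lemma \ref{Qjl} (adapting Lemma A3 of \cite{cheng2023}), which bounds the entries of $\hat{\mathbf Q}$ using the sub-exponential component structure and the concentration of $\|\mathbf D^{-1/2}\boldsymbol\Gamma\boldsymbol W_i\|$ around $\sqrt p$, with no independence. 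Moreover, the $\zeta_1p^{-7/6}$ entry error in that lemma is precisely where the exponent $1/6$ in the term $p^{-(1/6\wedge\delta/2)}\log^{1/2}(np)$ of $\|C_n\|_\infty$ comes from; under your independence assumption this contribution would not arise at all, so your argument as written cannot account for the rate you are asked to prove — the step needs the harder concentration lemma, not a repair of wording.

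Two secondary misattributions are worth noting, since they affect whether your error budget closes. First, the term $n^{-1/2}(\log p)^{1/2}\log^{1/2}(np)$ does not come from concentration of the empirical Jacobian; in the paper it is the product of the uniform scale-estimation error $\max_j|\hat d_j-d_j|=O_p(n^{-1/2}(\log p)^{1/2})$ (Lemma \ref{lemma:D_rate}, which itself requires a coordinatewise linearization of the joint location–scale equations plus a union bound, so more than a "crude preliminary consistency rate" must be proved for $\hat{\mathbf D}$) with the sup-norm $O_p(\log^{1/2}(np))$ of the leading term, entering multiplicatively through $\hat{\boldsymbol U}_i=(1+H_u)\boldsymbol U_i$ and through $n^{-1}\zeta_1^{-1}\sum_i\hat R_i^{-1}=1+O_p(n^{-1/2}(\log p)^{1/2})$; this multiplicative structure is, in effect, the rigorous version of your orthogonality remark. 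Second, the term $n^{-1/4}\log^{1/2}(np)$ arises from $\max_{1\le i\le n}\delta_{2i}=O_p(n^{-1/4})$, which is controlled via the Markov bound $\max_{1\le i\le n}R_i^{-2}=O_p(\zeta_1^2n^{1/2})$ — a maximum over the $n$ observations, not a maximal inequality over the $p$ coordinates as your sketch suggests.
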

\begin{remark}{2}
\cite{feng2016multivariate} derived the Bahadur representation of the estimator $\hat{\bm \theta}$, where the remainder term $||C_n||$ is $o_p(\zeta_1^{-1})$, assuming a symmetric elliptical distribution. In this context, we provide the rate of the remainder term $C_n$ subject to a maximum-norm constraint. It's important to note that in this Lemma, we scale the spatial-median estimator $\hat{\bm \theta}$ by $\hat{\bf D}^{-1/2}$. This is a departure from much of the existing literature on the Bahadur representation of the spatial median, which does not exhibit scalar invariance. Such works include \cite{zou2014multivariate}, \cite{cheng2019testing},  \cite{li2022asymptotic}, and  \cite{cheng2023}.
\end{remark}

Let $\mathcal{A}^{\text {re }}=\left\{\prod_{j=1}^p\left[a_j, b_j\right]:-\infty \leqslant a_j \leqslant b_j \leqslant \infty, j=1, \ldots, p\right\}$ be the class of rectangles in $\mathbb R^p$. Based on the Bahadur representation of $\hat{\boldsymbol\theta}$, we acquire the following Gaussian approximation of $\hat{\mathbf D}^{-1/2}(\hat{\boldsymbol{\theta}}-\boldsymbol{\theta})$ in rectangle $\mathcal{A}^{\text {re }}$.
\begin{lemma}\label{lemma2}
    (Gaussian approximation) Assume Assumptions \ref{max1}-\ref{max3} with $a_0(p)\asymp p^{1-\delta}$ for some positive constant $\delta \leqslant 1 / 2$ hold. If $\log p=o\left(n^{1 / 5}\right)$ and $\log n=o\left(p^{1 / 3 \wedge \delta}\right)$, then
$$
\rho_n\left(\mathcal{A}^{\mathrm{re}}\right)=\sup _{A \in \mathcal{A}^{\mathrm{re}}}\left|\mathbb{P}\left\{n^{1 / 2}\hat{\mathbf D}^{-1/2}\left(\hat{\boldsymbol{\theta}}-\boldsymbol{\theta}\right) \in A\right\}-\mathbb{P}(\boldsymbol G \in A)\right| \rightarrow 0,
$$
as $n \rightarrow \infty$, where $\boldsymbol G \sim N\left(0, \zeta_1^{-2} \mathbf \Sigma_u\right)$ with $\mathbf \Sigma_u=\mathbb{E}\left(\boldsymbol U_1 \boldsymbol U_1^{\top}\right)$.
\end{lemma}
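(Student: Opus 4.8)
The plan is to combine the Bahadur representation of Lemma \ref{lemma1} with a high-dimensional central limit theorem over hyperrectangles of the Chernozhukov--Chetverikov--Kato (CCK) type, and then to absorb the remainder term by a Gaussian anti-concentration argument. Write $\boldsymbol S_n := n^{-1/2}\zeta_1^{-1}\sum_{i=1}^n \boldsymbol U_i$, so that Lemma \ref{lemma1} gives $n^{1/2}\hat{\mathbf D}^{-1/2}(\hat{\boldsymbol\theta}-\boldsymbol\theta)=\boldsymbol S_n+C_n$. The $\boldsymbol U_i$ are i.i.d.\ with $\mathbb E(\boldsymbol U_i)=\mathbf 0$: under centering at the true $\boldsymbol\theta$ we have $\mathbf D^{-1/2}(\boldsymbol X_i-\boldsymbol\theta)=v_i\mathbf D^{-1/2}\mathbf\Gamma\boldsymbol W_i$, and the symmetry of the $W_{i,j}$ in Assumption \ref{max1} makes this vector symmetric about the origin, so its spatial sign has mean zero. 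Consequently $\boldsymbol S_n$ is a normalized sum of independent centered vectors with covariance exactly $\zeta_1^{-2}\mathbf\Sigma_u$, matching the target law $\boldsymbol G$.

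First I would establish the Gaussian approximation for the leading term $\boldsymbol S_n$. Each coordinate of $\boldsymbol U_i$ is bounded by $1$ in absolute value, and since $\zeta_1\asymp p^{-1/2}$ under Assumption \ref{max2} (taking $k=1$ gives $\sqrt p\,\zeta_1\in[\underline b,\bar B]$ asymptotically), each coordinate of $\zeta_1^{-1}\boldsymbol U_i$ is of constant order, so the sub-exponential moment hypotheses of the CCK theorem hold trivially. The key structural input is a uniform lower bound $\min_{1\le j\le p}\mathrm{Var}(\zeta_1^{-1}U_{1,j})\ge c>0$ on the diagonal of $\zeta_1^{-2}\mathbf\Sigma_u$. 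This follows cleanly from the inner standardization imposed in (\ref{eq:HR}): its population version forces $p\,\mathrm{diag}\{\mathbf\Sigma_u\}=\mathbf I_p$, so $(\mathbf\Sigma_u)_{jj}\asymp p^{-1}$ and hence $\zeta_1^{-2}(\mathbf\Sigma_u)_{jj}=\Theta(1)$ uniformly in $j$. With this normalization in hand, the CCK hyperrectangle bound of the form $\sup_{A\in\mathcal A^{\mathrm{re}}}|\mathbb P(\boldsymbol S_n\in A)-\mathbb P(\boldsymbol G\in A)|\lesssim(\mathrm{poly}\log(pn)/n)^{c}$ (the same version used in \cite{cheng2023}) tends to $0$ precisely under $\log p=o(n^{1/5})$; this is the source of the strengthening of the rate relative to Lemma \ref{lemma1}.

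Next I would absorb $C_n$ by anti-concentration. For $A\in\mathcal A^{\mathrm{re}}$ and $\epsilon>0$, the inclusion $\{\boldsymbol S_n+C_n\in A\}\subseteq\{\boldsymbol S_n\in A^{\epsilon}\}\cup\{\|C_n\|_\infty>\epsilon\}$, where $A^{\epsilon}$ is the $\epsilon$-enlargement (again a rectangle), gives $\mathbb P(\boldsymbol S_n+C_n\in A)\le\mathbb P(\boldsymbol G\in A^{\epsilon})+\rho_n(\mathcal A^{\mathrm{re}})+\mathbb P(\|C_n\|_\infty>\epsilon)$. Nazarov's Gaussian anti-concentration inequality, together with the variance lower bound just established, bounds $\mathbb P(\boldsymbol G\in A^{\epsilon})-\mathbb P(\boldsymbol G\in A)\lesssim\epsilon\sqrt{\log p}$. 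A symmetric inclusion delivers the matching lower bound, so it remains to pick $\epsilon=\epsilon_n\to0$ with $\epsilon_n\sqrt{\log p}\to0$ and $\mathbb P(\|C_n\|_\infty>\epsilon_n)\to0$. By the explicit rate of $\|C_n\|_\infty$ in Lemma \ref{lemma1}, such a choice exists exactly when $\|C_n\|_\infty\sqrt{\log p}=o_p(1)$, which the conditions $\log p=o(n^{1/5})$ and $\log n=o(p^{1/3\wedge\delta})$ guarantee. Combining the two displays then yields $\rho_n(\mathcal A^{\mathrm{re}})\to0$.

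The main obstacle I anticipate is the interface between the two arguments, namely matching the Bahadur remainder against the anti-concentration scale $\sqrt{\log p}$. The term $p^{-(1/6\wedge\delta/2)}\log^{1/2}(np)$ in $\|C_n\|_\infty$ must remain negligible after multiplication by $\sqrt{\log p}$, and it is precisely this requirement that forces the dimension-side condition $\log n=o(p^{1/3\wedge\delta})$; the $n$-side term $n^{-1/4}\log^{1/2}(np)\sqrt{\log p}$ is comfortably controlled under $\log p=o(n^{1/5})$. Care is also needed to ensure that the same constant lower bound on the coordinate variances feeds correctly into both the CCK bound and Nazarov's inequality, so that the normalizations are consistent throughout.
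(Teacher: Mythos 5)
Your proposal follows essentially the same route as the paper's proof: the Bahadur representation of Lemma \ref{lemma1} to split off the remainder $C_n$, a CCK-type high-dimensional CLT for the leading i.i.d.\ sum (the paper uses the version in Koike (2021), which yields the error $O(\{n^{-1}\log^5(np)\}^{1/6})$ and hence the condition $\log p=o(n^{1/5})$), and Nazarov's anti-concentration inequality to absorb a shift of size $\eta_n L_{n,p}$, with $\mathbb P(\Vert C_n\Vert_\infty>\eta_n L_{n,p})\to 0$ supplied by Lemma \ref{lemma1}. The only procedural difference is that the paper first proves convergence over orthants $\{x\le t\}$ and then upgrades to all hyperrectangles via Corollary 5.1 of Chernozhukov et al.\ (2017), whereas you run the enlargement argument on rectangles directly; both are valid since the $\epsilon$-enlargement of a rectangle is again a rectangle.

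One justification in your write-up is incorrect, although the claim it supports is true and is exactly what the paper cites. You assert that since $|U_{i,j}|\le 1$ and $\zeta_1\asymp p^{-1/2}$, ``each coordinate of $\zeta_1^{-1}\boldsymbol U_i$ is of constant order, so the sub-exponential moment hypotheses of the CCK theorem hold trivially.'' Boundedness by one only gives $|\zeta_1^{-1}U_{i,j}|\le \zeta_1^{-1}\asymp\sqrt p$, which is not constant order and is useless at the scale the CLT requires. The actual input — the paper's Lemma \ref{LemmaA4}, taken from Cheng et al.\ (2023) — is that $\|\zeta_1^{-1}U_{i,j}\|_{\psi_\alpha}\lesssim \bar B$, $\mathbb E\{(\zeta_1^{-1}U_{i,j})^4\}\lesssim\bar M^2$ and $\mathbb E\{(\zeta_1^{-1}U_{i,j})^2\}\gtrsim \underline m$, which rest on the concentration of $R_i$ around $\sqrt p$ under Assumptions \ref{max1}--\ref{max3}, not on $|U_{i,j}|\le 1$. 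Your alternative derivation of the variance lower bound from the population version of the standardization (\ref{eq:HR}) is a legitimate substitute for the lower-bound half, but the Orlicz/fourth-moment upper bounds still require the Lemma \ref{LemmaA4}-type argument; with that citation replacing the word ``trivially,'' your proof matches the paper's.
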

The Gaussian approximation for $\hat{\boldsymbol{\theta}}$ indicates that the probabilities $\mathbb{P}\left\{n^{1 / 2}\hat{\mathbf D}^{-1/2}\left(\hat{\boldsymbol{\theta}}-\boldsymbol{\theta}\right) \in A\right\}$ can be approximated by that of a centered Gaussian random vector with covariance matrix $\zeta_1^{-2} \mathbf \Sigma_u$ for hyperrectangles $A \in \mathcal{A}^{\text {re }}$.

Since the region $\mathcal A^t=\left\{\prod_{j=1}^p\left[a_j, b_j\right]:-\infty = a_j \leqslant b_j=t, j=1, \ldots, p\right\}$ used in the following corollary is contained in the set $\mathcal{A}^{\mathrm{re}}$, it is clear that the Corollary \ref{coro1} follows.
\begin{coro}\label{coro1}
Under the assumptions of Lemma \ref{lemma2}, as $n\rightarrow \infty$, we have
$$
\rho_n=\sup _{t \in \mathbb{R}}\left|\mathbb{P}\left(n^{1 / 2}\Vert\hat{\mathbf D}^{-1/2}(\hat{\boldsymbol{\theta}}-\boldsymbol{\theta})\Vert_{\infty} \leqslant t\right)-\mathbb{P}\left(\Vert\boldsymbol G\Vert_{\infty} \leqslant t\right)\right| \rightarrow 0 .
$$
where $\boldsymbol G \sim N\left(0, \zeta_1^{-2} \mathbf \Sigma_u\right)$ .
\end{coro}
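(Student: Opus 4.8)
The plan is to obtain the corollary as an immediate consequence of the hyperrectangle Gaussian approximation in Lemma \ref{lemma2}, by exploiting the fact that every sublevel set of the maximum-norm is itself a hyperrectangle. The key observation is that, for any fixed $t$,
$$
\{\boldsymbol x \in \mathbb{R}^p : \|\boldsymbol x\|_\infty \leqslant t\} = \prod_{j=1}^p [-t, t] =: A_t,
$$
and $A_t \in \mathcal{A}^{\mathrm{re}}$ whenever $t \geqslant 0$, since each factor $[-t,t]$ is an admissible interval with $a_j = -t \leqslant b_j = t$. Consequently the event $\{n^{1/2}\|\hat{\mathbf D}^{-1/2}(\hat{\boldsymbol\theta}-\boldsymbol\theta)\|_\infty \leqslant t\}$ coincides with $\{n^{1/2}\hat{\mathbf D}^{-1/2}(\hat{\boldsymbol\theta}-\boldsymbol\theta) \in A_t\}$, and likewise $\{\|\boldsymbol G\|_\infty \leqslant t\} = \{\boldsymbol G \in A_t\}$.

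First I would dispose of the range $t < 0$: because the maximum-norm is nonnegative, both probabilities in $\rho_n$ vanish there, so the supremand is zero and the supremum reduces to $t \geqslant 0$. Next, for each $t \geqslant 0$, the set identity above gives
$$
\left|\mathbb{P}\left(n^{1/2}\|\hat{\mathbf D}^{-1/2}(\hat{\boldsymbol\theta}-\boldsymbol\theta)\|_\infty \leqslant t\right) - \mathbb{P}(\|\boldsymbol G\|_\infty \leqslant t)\right| = \left|\mathbb{P}\left(n^{1/2}\hat{\mathbf D}^{-1/2}(\hat{\boldsymbol\theta}-\boldsymbol\theta) \in A_t\right) - \mathbb{P}(\boldsymbol G \in A_t)\right|,
$$
and since $A_t \in \mathcal{A}^{\mathrm{re}}$ the right-hand side is at most $\rho_n(\mathcal{A}^{\mathrm{re}})$. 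As this bound does not depend on $t$, taking the supremum over $t \in \mathbb{R}$ yields $\rho_n \leqslant \rho_n(\mathcal{A}^{\mathrm{re}})$. Finally, Lemma \ref{lemma2} furnishes $\rho_n(\mathcal{A}^{\mathrm{re}}) \to 0$ under the stated conditions $\log p = o(n^{1/5})$ and $\log n = o(p^{1/3 \wedge \delta})$, whence $\rho_n \to 0$.

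There is essentially no analytical obstacle, since Lemma \ref{lemma2} already carries the full probabilistic burden (the anti-concentration and comparison estimates underlying a nondegenerate Gaussian limit are embedded in the hyperrectangle bound). The only point meriting care is the reduction itself: that the two-sided band $\prod_{j=1}^p[-t,t]$ is a legitimate member of $\mathcal{A}^{\mathrm{re}}$, and that the per-$t$ discrepancy is dominated uniformly by the single quantity $\rho_n(\mathcal{A}^{\mathrm{re}})$. I note that the sentence preceding the corollary describes this containment via one-sided regions; I would instead use the two-sided bands $\prod_{j=1}^p[-t,t]$, which are the genuine sublevel sets of $\|\cdot\|_\infty$ and are equally contained in $\mathcal{A}^{\mathrm{re}}$.
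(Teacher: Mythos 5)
Your proposal is correct and takes essentially the same route as the paper: the paper also deduces the corollary in one line by observing that the relevant sublevel sets lie in $\mathcal{A}^{\mathrm{re}}$, so that the per-$t$ discrepancy is dominated by $\rho_n(\mathcal{A}^{\mathrm{re}})$ from Lemma \ref{lemma2}. Your substitution of the two-sided bands $\prod_{j=1}^p[-t,t]$ for the paper's one-sided regions $\prod_{j=1}^p(-\infty,t]$ is a worthwhile tightening, since the one-sided sets are the sublevel sets of $\max_j x_j$ rather than of $\Vert \cdot\Vert_\infty$, but it does not change the substance of the argument.
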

Taking into account the relationships between $\mathbf \Sigma_u$ and $p^{-1}\R $, we propose a more straightforward Gaussian approximation.
\begin{lemma}\label{lemma3}
    (Variance approximation) Suppose $\boldsymbol G\sim N\left(0, \zeta_1^{-2} \mathbf \Sigma_u\right)$ and $\boldsymbol Z\sim N\left(0, \zeta_1^{-2}p^{-1}\mathbf R \right)$, under the assumptions of Lemma \ref{lemma2}, as $(n,p)\rightarrow \infty$, we have
    $$
    \sup _{t \in \mathbb{R}}\left|\mathbb{P}\left(\Vert \boldsymbol Z\Vert_{\infty}\leqslant t\right)-\mathbb{P}\left(\Vert \boldsymbol G\Vert_{\infty} \leqslant t\right)\right| \rightarrow 0 .
    $$
\end{lemma}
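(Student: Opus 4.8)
The plan is to regard both $\boldsymbol G$ and $\boldsymbol Z$ as centered Gaussian vectors whose covariance matrices are entrywise close, and to deduce closeness of the two maximum-norm laws from a Gaussian comparison inequality. Writing $\mathbf\Sigma^G=\zeta_1^{-2}\mathbf\Sigma_u$ and $\mathbf\Sigma^Z=\zeta_1^{-2}p^{-1}\mathbf R$, the comparison bound of Chernozhukov, Chetverikov and Kato states that for centered Gaussian maxima
\begin{equation*}
\sup_{t\in\mathbb R}\left|\mathbb P\bigl(\max_j Z_j\le t\bigr)-\mathbb P\bigl(\max_j G_j\le t\bigr)\right|\le C\,\Delta_0^{1/3}\bigl\{1\vee\log(p/\Delta_0)\bigr\}^{2/3},\qquad \Delta_0=\max_{j,k}\bigl|\mathbf\Sigma^G_{jk}-\mathbf\Sigma^Z_{jk}\bigr|,
\end{equation*}
whenever the diagonal entries are bounded away from $0$ and $\infty$. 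To pass from the coordinatewise maximum to $\Vert\cdot\Vert_\infty$ I would apply this to the $2p$-dimensional vectors $(\boldsymbol G^\top,-\boldsymbol G^\top)^\top$ and $(\boldsymbol Z^\top,-\boldsymbol Z^\top)^\top$, whose coordinatewise maxima equal $\Vert\boldsymbol G\Vert_\infty$ and $\Vert\boldsymbol Z\Vert_\infty$; this merely replaces $p$ by $2p$ and leaves $\Delta_0$ unchanged. The diagonal condition is met because $\mathbf\Sigma^G_{jj}=\zeta_1^{-2}/p\asymp1$ by Assumption \ref{max2} (which gives $\zeta_1^{-2}\asymp p$), and likewise $\mathbf\Sigma^Z_{jj}\asymp1$ under Assumptions \ref{max2}--\ref{max3}. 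Everything therefore reduces to proving $\Delta_0\to0$ at a polynomial rate.

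The heart of the argument is the entrywise closeness of $\mathbf\Sigma_u$ and $p^{-1}\mathbf R$. From the model, $\boldsymbol U_1=\boldsymbol Y/\Vert\boldsymbol Y\Vert$ with $\boldsymbol Y=\mathbf D^{-1/2}\mathbf\Gamma\boldsymbol W_1$ and $\mathbb E(\boldsymbol Y\boldsymbol Y^\top)=\mathbf R$, so that $(\mathbf\Sigma_u)_{jk}=\mathbb E\bigl(Y_jY_k/\Vert\boldsymbol Y\Vert^2\bigr)$. For every pair $(j,k)$, including $j=k$, I would use $\Vert\boldsymbol Y\Vert^{-2}-p^{-1}=(p-\Vert\boldsymbol Y\Vert^2)/(p\Vert\boldsymbol Y\Vert^2)$ to write
\begin{equation*}
(\mathbf\Sigma_u)_{jk}-\frac1p\sigma_{jk}=\frac1p\,\mathbb E\bigl[U_{1j}U_{1k}(p-\Vert\boldsymbol Y\Vert^2)\bigr],
\end{equation*}
and then apply the Cauchy--Schwarz inequality to obtain
\begin{equation*}
\Bigl|(\mathbf\Sigma_u)_{jk}-\tfrac1p\sigma_{jk}\Bigr|\le\frac1p\sqrt{\mathbb E\bigl(U_{1j}^2U_{1k}^2\bigr)}\,\sqrt{\mathrm{Var}\bigl(\Vert\boldsymbol Y\Vert^2\bigr)}.
\end{equation*}
Handling all entries uniformly this way avoids having to argue exact diagonal matching between $\mathbf\Sigma_u$ and $p^{-1}\mathbf R$.

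It then remains to control the two factors. The first is the fourth-moment bound $\mathbb E(U_{1j}^2U_{1k}^2)=O(p^{-2})$, which I would get from the concentration of the radius $\Vert\boldsymbol Y\Vert^2$ about its mean $\mathrm{tr}(\mathbf R)\asymp p$: on the high-probability event $\{\Vert\boldsymbol Y\Vert^2\ge p/2\}$ one has $U_{1j}^2U_{1k}^2\le 4Y_j^2Y_k^2/p^2$ with $\mathbb E(Y_j^2Y_k^2)=O(1)$ under Assumption \ref{max1}, while on the complement $U_{1j}^2U_{1k}^2\le1$ and the deviation probability is exponentially small by a Hanson--Wright-type bound for sub-exponential $\boldsymbol W_1$ (Assumption \ref{max1}). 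The second factor obeys $\mathrm{Var}(\Vert\boldsymbol Y\Vert^2)\asymp\mathrm{tr}(\mathbf R^2)=\sum_{j,k}\sigma_{jk}^2\le p\max_j\sum_\ell|\sigma_{j\ell}|\le p\,a_0(p)\asymp p^{2-\delta}$ by Assumption \ref{max3}. Combining these gives $\max_{j,k}|(\mathbf\Sigma_u)_{jk}-p^{-1}\sigma_{jk}|=O(p^{-1}\cdot p^{-1}\cdot p^{1-\delta/2})=O(p^{-1-\delta/2})$, hence $\Delta_0=\zeta_1^{-2}O(p^{-1-\delta/2})=O(p^{-\delta/2})$, and the comparison bound yields $O\bigl(p^{-\delta/6}(\log p)^{2/3}\bigr)\to0$. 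I expect the main obstacle to be the fourth-moment estimate $\mathbb E(U_{1j}^2U_{1k}^2)=O(p^{-2})$: because the direction and radius of $\boldsymbol Y$ are genuinely dependent in the independent-components model, the factor $\Vert\boldsymbol Y\Vert^{-2}$ must be tamed through radial concentration, and one has to verify that the small-radius exceptional event contributes negligibly rather than corrupting the $p^{-2}$ rate.
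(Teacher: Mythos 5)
Your proposal is correct, and at the top level it is the same argument as the paper's: both reduce the lemma to (a) an entrywise covariance bound $\Delta_0=\max_{1\leq j,k\leq p}\left|p(\mathbf\Sigma_u)_{jk}-\sigma_{jk}\right|=O(p^{-\delta/2})$ and (b) the Gaussian comparison inequality of Chernozhukov, Chetverikov and Kato (the paper's Lemma \ref{LemmaJ2}), whose constant is under control because the diagonal entries of both covariance matrices are of order one by Assumption \ref{max2}. The difference lies in how (a) is obtained. The paper disposes of it in one line by citing Lemma \ref{LemmaA4}(iii), an ingredient imported from Cheng, Peng and Zou (2023), where it is proved by expanding the normalizing denominator of $\boldsymbol U_1$; you instead prove it from scratch via the exact identity
\[
(\mathbf\Sigma_u)_{jk}-\tfrac1p\sigma_{jk}=\tfrac1p\,\mathbb E\bigl[U_{1j}U_{1k}\bigl(p-\Vert\boldsymbol Y\Vert^2\bigr)\bigr],
\]
followed by Cauchy--Schwarz, the radial-concentration bound $\mathbb E(U_{1j}^2U_{1k}^2)=O(p^{-2})$, and $\mathrm{Var}(\Vert\boldsymbol Y\Vert^2)\lesssim \tr(\mathbf R^2)\leq p\,a_0(p)\asymp p^{2-\delta}$; this identity-based route is self-contained, avoids any Taylor expansion of $\Vert\boldsymbol Y\Vert^{-1}$, and recovers exactly the rate $O(p^{-\delta/2})$, hence the same final bound $O(p^{-\delta/6}(\log p)^{2/3})\to0$. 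You also make explicit the dimension-doubling device needed to pass from coordinatewise maxima to $\Vert\cdot\Vert_\infty$ in the comparison lemma, which the paper leaves implicit. Two caveats, neither fatal: first, your lower-tail step requires a quadratic-form concentration inequality valid for $\psi_\alpha$ entries with $1\leq\alpha\leq2$ rather than the classical sub-Gaussian Hanson--Wright; since only a bound $o(p^{-2})$ on $\mathbb P(\Vert\boldsymbol Y\Vert^2<p/2)$ is needed, any estimate of the form $\exp(-cp^{c'})$ suffices, and one can even avoid exponential inequalities entirely by a $2m$-th moment Markov bound with fixed $m>2/\delta$. Second, your variance step uses $\mathbb E\Vert\boldsymbol Y\Vert^2=\tr(\mathbf R)=p$, i.e.\ the unit-diagonal normalization of the shape matrix; this is indeed the paper's convention, and your argument tolerates the $O(p^{1-\delta/2})$ slack in that normalization in any case.
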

By integrating Corollary \ref{coro1} and Lemma \ref{lemma3}, we can readily derive the principal theorem of Gaussian approximation.
\begin{theorem}\label{thm1}
   Assume Assumptions \ref{max1}-\ref{max3} with $a_0(p)=p^{1-\delta}$ for some positive constant $\delta \leqslant 1 / 2$ hold. If $\log p=o\left(n^{1 / 5}\right)$ and $\log n=o\left(p^{1 / 3 \wedge \delta}\right)$, then
 $$
\tilde\rho_n=\sup _{t \in \mathbb{R}}\left|\mathbb{P}\left(n^{1 / 2}\Vert\hat{\mathbf D}^{-1/2}(\hat{\boldsymbol{\theta}}-\boldsymbol{\theta})\Vert_{\infty} \leqslant t\right)-\mathbb{P}\left(\Vert \boldsymbol Z\Vert_{\infty} \leqslant t\right)\right| \rightarrow 0 ,
$$
where $\boldsymbol Z \sim N\left(0, \zeta_1^{-2}p^{-1}\R \right)$.
\end{theorem}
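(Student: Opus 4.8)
The plan is to obtain the result by a direct triangle-inequality decomposition that chains together the two Gaussian-comparison statements already established. First I would note that the hypotheses imposed here---Assumptions \ref{max1}--\ref{max3} with $a_0(p)=p^{1-\delta}$, together with $\log p=o(n^{1/5})$ and $\log n=o(p^{1/3\wedge\delta})$---are exactly the hypotheses of Lemma \ref{lemma2}, and hence of its Corollary \ref{coro1} and of Lemma \ref{lemma3} as well, since the equality $a_0(p)=p^{1-\delta}$ is a special case of the condition $a_0(p)\asymp p^{1-\delta}$ assumed there. Consequently both auxiliary conclusions are available verbatim, with no additional verification of conditions.

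Next, for each fixed $t\in\mathbb R$ I would insert the intermediate probability $\mathbb P(\Vert\boldsymbol G\Vert_\infty\leqslant t)$, where $\boldsymbol G\sim N(0,\zeta_1^{-2}\mathbf\Sigma_u)$ is the Gaussian vector common to Corollary \ref{coro1} and Lemma \ref{lemma3}. Writing $A_1(t)=\bigl|\mathbb P(n^{1/2}\Vert\hat{\mathbf D}^{-1/2}(\hat{\boldsymbol\theta}-\boldsymbol\theta)\Vert_\infty\leqslant t)-\mathbb P(\Vert\boldsymbol G\Vert_\infty\leqslant t)\bigr|$ and $A_2(t)=\bigl|\mathbb P(\Vert\boldsymbol G\Vert_\infty\leqslant t)-\mathbb P(\Vert\boldsymbol Z\Vert_\infty\leqslant t)\bigr|$, the ordinary triangle inequality gives
$$
\bigl|\mathbb P(n^{1/2}\Vert\hat{\mathbf D}^{-1/2}(\hat{\boldsymbol\theta}-\boldsymbol\theta)\Vert_\infty\leqslant t)-\mathbb P(\Vert\boldsymbol Z\Vert_\infty\leqslant t)\bigr|\leqslant A_1(t)+A_2(t).
$$
Taking the supremum over $t$ and using subadditivity of the supremum, this yields $\tilde\rho_n\leqslant\rho_n+\sup_{t\in\mathbb R}A_2(t)$.

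Finally, I would invoke Corollary \ref{coro1} to conclude $\rho_n\to0$ and Lemma \ref{lemma3} to conclude $\sup_{t\in\mathbb R}A_2(t)\to0$; adding these two bounds forces $\tilde\rho_n\to0$, which is the assertion. The only point needing attention is the bookkeeping check that the single intermediate law $N(0,\zeta_1^{-2}\mathbf\Sigma_u)$ used to split the difference is precisely the vector $\boldsymbol G$ appearing in both Corollary \ref{coro1} and Lemma \ref{lemma3}, so that the two estimates glue with no leftover term. I do not expect a genuine analytic obstacle here: the substantive work---controlling the Bahadur remainder $C_n$ in Lemma \ref{lemma1}, establishing the hyperrectangle Gaussian approximation in Lemma \ref{lemma2}, and carrying out the covariance comparison between $\mathbf\Sigma_u$ and $p^{-1}\R$ in Lemma \ref{lemma3}---has all been done upstream, so this theorem follows as their immediate corollary.
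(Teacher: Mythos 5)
Your proposal is correct and is essentially identical to the paper's own proof: the paper also inserts the intermediate Gaussian $\boldsymbol G\sim N\left(0,\zeta_1^{-2}\mathbf\Sigma_u\right)$, bounds $\tilde\rho_n$ by the sum of the two suprema via the triangle inequality, and concludes from Lemma \ref{lemma2} and Lemma \ref{lemma3}. The only cosmetic difference is that you route the first term through Corollary \ref{coro1}, whereas the paper cites Lemma \ref{lemma2} directly; these are interchangeable since the corollary is exactly the specialization of the lemma to max-norm events.
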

\subsection{Max-type test procedure}
In order to guarantee that the maximum value of a sequence of normal variables adheres to a Gumbel limiting distribution, we introduce Assumption \ref{max4}. This assumption is employed to specify the necessary correlation among variables.
\begin{assumption}\label{max4}
    For some $\varrho \in(0,1)$, assume $|\sigma_{ij}|\leq \varrho$ for all $1\leq i<j\leq p$ and $p
    \geq 2$. Suppose $\left\{\delta_p ; p \geq 1\right\}$ and $\left\{\kappa_p ; p \geq 1\right\}$ are positive constants with $\delta_p=o(1 / \log p)$ and $\kappa=\kappa_p \rightarrow 0$ as $p \rightarrow \infty$. For $1 \leq i \leq p$, define $B_{p, i}=\left\{1 \leq j \leq p ;\left|\sigma_{i j}\right| \geq \delta_p\right\}$ and $C_p=\left\{1 \leq i \leq p ;\left|B_{p, i}\right| \geq p^\kappa\right\}$. We assume that $\left|C_p\right| / p \rightarrow 0$ as $p \rightarrow \infty$.
\end{assumption}
\begin{remark}{3}
Assumption \ref{max4} aligns with Assumption (2.2) in \cite{feng2022asymptotic}. This assumption stipulates that for each variable, the count of other variables that exhibit a strong correlation with it cannot be excessively large. To the best of our understanding, this is the least restrictive assumption in the literature that allows for the limiting null distribution of the maximum of correlated normal random variables to follow a Gumbel distribution.
Both Assumption \ref{max3} and \ref{max4} pertain to the correlation matrix $\R$. We examine two specific cases that satisfy both of these conditions. The first case is the classic AR(1) structure, denoted as $\R=(\rho^{|i-j|})_{1\le i,j\le p}, \rho\in (-1,1)$. In this scenario, $\sum_{l=1}^p |\sigma_{jl}|\to \frac{1}{1+\rho}$, which allows $\delta$ to be one in Assumption \ref{max3}. For Assumption \ref{max4}, we set $\delta_p=(\log p)^{-2}$, leading to $B_{p,i}=\{j:|i-j|\le -2\log \log p/\log |\rho|\}$. As a result, $|B_{p,i}|\le -4\log \log p/\log |\rho|<p^\kappa$ with $\kappa=5\log\log p/\log p$, which implies $|C_p|=0$ and Assumption \ref{max4} is satisfied. The second case involves a banded correlation matrix, where $\sigma_{ij}=0$ if $|i-j|> \ell$. Here, $\sum_{j=1}^p|\sigma_{ij}|=O(\ell)$ and $|B_{p,i}|\le \ell$ for $\delta_p=(\log p)^{-2}$. Therefore, Assumptions \ref{max3} and \ref{max4} will hold if $\ell =o(p^\kappa)$ for any $\kappa\to 0$.
\end{remark}

 Suppose Assumption \ref{max1}-\ref{max4} hold, by the Theorem 2 in \cite{feng2022asymptotic}, we can see that $p\zeta_1^2\max _{1 \leq i \leq p} Z_i^2-2 \log p+\log \log p$ converges to a Gumbel distribution with cdf $F(x)=\exp \left\{-\frac{1}{\sqrt{\pi}} e^{-x / 2}\right\}$ as $p \rightarrow \infty$. In combining with the Theorem \ref{thm1} we can conclude that,
\begin{equation}\label{eq:Tmax0}
    \mathbb P\left(n^{1/2}\left\|\hat{\mathbf D}^{-1 / 2}(\hat{\boldsymbol{\theta}}-\boldsymbol{\theta})\right\|_{\infty}^2 p \zeta_1^2-2 \log p+\log \log p \leq x\right) \rightarrow \exp \left\{-\frac{1}{\sqrt{\pi}} e^{-x / 2}\right\}.
\end{equation}
Next we replace $E\left(R^{-1}\right)$ with its estimators. We denote $\hat{R}_i=\Vert\hat{\mathbf{D}}^{-1 / 2}(\mathbf{X}_i-\hat{\boldsymbol{\theta}})\Vert$. Then the estimator is defined as $\hat \zeta_{1}:=\frac{1}{n} \sum_{i=1}^n \hat{R}_i^{-1}$, and the proof of consistency is shown in Lemma \ref{lemma:consistency_zeta}. Because the convergence rate of maximum is very slow, we propose a adjust max-type test statistic which based on the scalar-invariant spatial median $\hat{\boldsymbol \theta}$,
$$
T_{M A X}=n\left\|\hat{\mathbf{D}}^{-1 / 2} \hat{\boldsymbol{\theta}}\right\|_{\infty}^2  \hat{\zeta}_{1}^2p\cdot \left(1-n^{-1 / 2}\right).
$$
\begin{theorem}\label{thm:max_dist}
Suppose the assumptions in Theorem \ref{thm1} and Assumption 4 hold. Under the null hypothesis, as $(n,p)\to \infty$, we have
$$
P\left(T_{M A X}-2 \log p+\log \log p \leq x\right) \rightarrow \exp \left\{-\frac{1}{\sqrt{\pi}} e^{-x / 2}\right\}.
$$
\end{theorem}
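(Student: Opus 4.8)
The plan is to treat $T_{MAX}$ as a small multiplicative perturbation of the statistic whose Gumbel limit is recorded in (\ref{eq:Tmax0}), and then to transfer that limit by Slutsky's theorem. Under $H_0$ we have $\boldsymbol\theta=\mathbf 0$, so $\hat{\boldsymbol\theta}-\boldsymbol\theta=\hat{\boldsymbol\theta}$ and the quantity
$$
G_n:=n\left\|\hat{\mathbf D}^{-1/2}\hat{\boldsymbol\theta}\right\|_\infty^2 p\zeta_1^2
$$
converges in distribution to the Gumbel law with cdf $\exp\{-\pi^{-1/2}e^{-x/2}\}$ after centering, by (\ref{eq:Tmax0}), which combines Theorem \ref{thm1} with Theorem 2 of \cite{feng2022asymptotic}. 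Since $T_{MAX}=G_n\cdot r_n$ with $r_n:=(\hat\zeta_1/\zeta_1)^2(1-n^{-1/2})$, the entire argument reduces to showing that the multiplicative error $r_n$ is close enough to $1$ that it does not disturb the slowly varying centering at $2\log p$.

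Writing
$$
T_{MAX}-2\log p+\log\log p=\left(G_n-2\log p+\log\log p\right)+G_n(r_n-1),
$$
and noting that the first summand is $O_p(1)$ because it converges in distribution, it suffices to prove $G_n(r_n-1)=o_p(1)$. The Gumbel convergence of $G_n$ forces $G_n=2\log p-\log\log p+O_p(1)=O_p(\log p)$, so it is enough to establish the relative-error bound $r_n-1=o_p(1/\log p)$. To this end I would decompose
$$
r_n-1=\left(\frac{\hat\zeta_1^2}{\zeta_1^2}-1\right)\left(1-n^{-1/2}\right)-n^{-1/2}.
$$
The deterministic term contributes $G_n\cdot n^{-1/2}=O_p(\log p\cdot n^{-1/2})=o_p(1)$, because $\log p=o(n^{1/5})$ (hence $\log p=o(n^{1/2})$) under the assumptions of Theorem \ref{thm1}; this is exactly why the finite-sample factor $(1-n^{-1/2})$ is asymptotically negligible. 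For the stochastic term I would invoke the consistency of $\hat\zeta_1$ from Lemma \ref{lemma:consistency_zeta}, sharpened to the relative rate $\hat\zeta_1/\zeta_1=1+O_p(n^{-1/2})$, which gives $\hat\zeta_1^2/\zeta_1^2-1=O_p(n^{-1/2})=o_p(1/\log p)$ by the same growth condition. Combining the two pieces yields $G_n(r_n-1)=o_p(1)$, and Slutsky's theorem then delivers the claimed Gumbel limit.

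The main obstacle is the rate in Lemma \ref{lemma:consistency_zeta}: mere consistency $\hat\zeta_1/\zeta_1\to_p1$ is not enough, since we require the sharper $o_p(1/\log p)$ relative error. To obtain it I would split $\hat\zeta_1-\zeta_1$ into a genuine empirical-average fluctuation $n^{-1}\sum_i R_i^{-1}-\zeta_1$, which is $O_p(n^{-1/2}\sqrt{\zeta_2-\zeta_1^2})=O_p(n^{-1/2}\zeta_1)$ by Assumption \ref{max2} (using $\zeta_k\asymp p^{-k/2}$), plus a plug-in perturbation term $n^{-1}\sum_i(\hat R_i^{-1}-R_i^{-1})$ arising from replacing $(\boldsymbol\theta,\mathbf D)$ by $(\hat{\boldsymbol\theta},\hat{\mathbf D})$. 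The latter is the delicate part: it must be controlled using the Bahadur representation of Lemma \ref{lemma1} together with the consistency of $\hat{\mathbf D}$, and shown to be of smaller order than $\zeta_1/\log p$ uniformly. Once this plug-in term is bounded at the required rate, the remaining steps are routine.
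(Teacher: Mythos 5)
Your proposal is correct and takes essentially the same route as the paper: both arguments rest on the Gumbel limit in (\ref{eq:Tmax0}) (Theorem \ref{thm1} combined with Theorem 2 of \cite{feng2022asymptotic}) and then remove the multiplicative factor $(\hat\zeta_1/\zeta_1)^2(1-n^{-1/2})$ using the consistency of $\hat\zeta_1$ from Lemma \ref{lemma:consistency_zeta}. If anything, you are more careful than the paper, whose proof absorbs this replacement into an unexplained $o(1)$ term; you correctly observe that, against the $2\log p$ centering, one needs the relative rate $\hat\zeta_1/\zeta_1-1=o_p(1/\log p)$, which indeed follows from the bounds established inside the proof of Lemma \ref{lemma:consistency_zeta} (together with the Bahadur representation) under the condition $\log p=o(n^{1/5})$.
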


According to Theorem \ref{thm:max_dist}, a level- $\alpha$ test will then be performed through rejecting $H_0$ when $T_{MAX}-2 \log p+\log \log p$ is larger than the $(1-\alpha)$ quantile $q_{1-\alpha}=-\log \pi-2 \log \log (1-\alpha)^{-1}$ of the Gumbel distribution $F(x)$.

The following theorem demonstrates the consistency of the proposed max-type test.
\begin{theorem}\label{thm:consistency}
Suppose the Assumptions in Theorem \ref{thm:max_dist} hold. For any given $\alpha\in (0,1)$, if $\Vert \boldsymbol \theta\Vert_\infty\geq \tilde C n^{-1/2} \{\log p-2\log\log (1-\alpha)^{-1}\}^{1/2}$ for some large enough constant $\tilde C$, we have
$$
\mathbb P(T_{MAX}-2\log p+\log \log p>q_{1-\alpha}\mid H_1)\rightarrow 1,
$$
as $n\rightarrow \infty$.
\end{theorem}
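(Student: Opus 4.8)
The plan is to show that, under the stated signal condition, $T_{MAX}$ exceeds its critical value $2\log p-\log\log p+q_{1-\alpha}$ with probability tending to one. Since $q_{1-\alpha}=-\log\pi-2\log\log(1-\alpha)^{-1}$ is a fixed constant, the critical value equals $2\log p\,(1+o(1))$, so it suffices to produce a lower bound for $T_{MAX}$ of the form $C'\log p$ with $C'>2$, the constant $C'$ being made large by enlarging $\tilde C$. Writing $T_{MAX}^{1/2}=\{np\hat\zeta_1^2(1-n^{-1/2})\}^{1/2}\,\|\hat{\mathbf D}^{-1/2}\hat{\boldsymbol\theta}\|_\infty$ and decomposing $\hat{\mathbf D}^{-1/2}\hat{\boldsymbol\theta}=\hat{\mathbf D}^{-1/2}\boldsymbol\theta+\hat{\mathbf D}^{-1/2}(\hat{\boldsymbol\theta}-\boldsymbol\theta)$, the reverse triangle inequality for $\|\cdot\|_\infty$ gives
$$T_{MAX}^{1/2}\ \ge\ \{np\hat\zeta_1^2(1-n^{-1/2})\}^{1/2}\Big(\|\hat{\mathbf D}^{-1/2}\boldsymbol\theta\|_\infty-\|\hat{\mathbf D}^{-1/2}(\hat{\boldsymbol\theta}-\boldsymbol\theta)\|_\infty\Big),$$
and I would bound the two pieces separately, treating the first as signal and the second as estimation noise.

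For the noise piece, Lemma \ref{lemma1} supplies $n^{1/2}\hat{\mathbf D}^{-1/2}(\hat{\boldsymbol\theta}-\boldsymbol\theta)=n^{-1/2}\zeta_1^{-1}\sum_{i=1}^n\boldsymbol U_i+C_n$ with $\|C_n\|_\infty=o_p(1)$ under the rate conditions; crucially this representation is stated for the true $\boldsymbol\theta$ and does not require $\boldsymbol\theta=\mathbf 0$, so it remains available under $H_1$. Because $p\zeta_1^2\asymp 1$ by Assumption \ref{max2} and $\hat\zeta_1/\zeta_1=1+o_p(1)$ by Lemma \ref{lemma:consistency_zeta}, the Gaussian approximation of Theorem \ref{thm1} together with \eqref{eq:Tmax0} yields $\{np\hat\zeta_1^2\}^{1/2}\|\hat{\mathbf D}^{-1/2}(\hat{\boldsymbol\theta}-\boldsymbol\theta)\|_\infty=\{2\log p\}^{1/2}(1+o_p(1))=O_p(\{\log p\}^{1/2})$. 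For the signal piece, I would select the index $k$ with $|\theta_k|=\|\boldsymbol\theta\|_\infty$ so that $\|\hat{\mathbf D}^{-1/2}\boldsymbol\theta\|_\infty\ge |\theta_k|/\hat d_k\ge\|\boldsymbol\theta\|_\infty/\max_j\hat d_j$; combining $p\hat\zeta_1^2\asymp 1$ with $\max_j\hat d_j=O_p(1)$ (consistency of $\hat{\mathbf D}$ and the boundedness of the scales implied by the shape-matrix normalization) gives $\{np\hat\zeta_1^2\}^{1/2}\|\hat{\mathbf D}^{-1/2}\boldsymbol\theta\|_\infty\ge c\,n^{1/2}\|\boldsymbol\theta\|_\infty$ for some constant $c>0$ with probability tending to one.

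Combining the two bounds and invoking the hypothesis $n^{1/2}\|\boldsymbol\theta\|_\infty\ge\tilde C\{\log p-2\log\log(1-\alpha)^{-1}\}^{1/2}$ gives $T_{MAX}^{1/2}\ge(c\tilde C-O_p(1))\{\log p\}^{1/2}$, whence $T_{MAX}\ge(c\tilde C-O_p(1))^2\log p$; choosing $\tilde C$ large enough that $(c\tilde C)^2>2$ makes $T_{MAX}$ eventually exceed $2\log p-\log\log p+q_{1-\alpha}$ with probability tending to one, which is the assertion. I expect the main obstacle to be the noise piece: one must certify that the scaled $\ell_\infty$ norm of the estimation error is only of order $\{\log p\}^{1/2}$ \emph{uniformly}, which needs both the Bahadur remainder control of Lemma \ref{lemma1} and a maximal bound on the leading spatial-sign average $n^{-1/2}\zeta_1^{-1}\sum_i\boldsymbol U_i$. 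This is precisely what the Gaussian approximation of Theorem \ref{thm1} delivers, but care is required to ensure that the random scaling factors $\hat\zeta_1$ and $\max_j\hat d_j$ concentrate at their deterministic orders ($\asymp 1$ and $O_p(1)$, respectively) so that signal and noise are compared on a common scale.
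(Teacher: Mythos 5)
Your proposal is correct and follows essentially the same route as the paper's own proof: a reverse triangle inequality splitting $T_{MAX}^{1/2}$ into a signal term and the centered statistic $n^{1/2}\Vert\hat{\mathbf D}^{-1/2}(\hat{\boldsymbol\theta}-\boldsymbol\theta)\Vert_\infty\hat\zeta_1 p^{1/2}(1-n^{-1/2})^{1/2}$, with the noise controlled at order $O_p(\{\log p\}^{1/2})$ because it has the null limiting (Gumbel) behavior, the $\hat{\mathbf D}$-versus-$\mathbf D$ discrepancy in the signal handled via Lemma \ref{lemma:D_rate}, and consistency secured by taking $\tilde C$ large. The only cosmetic differences are that the paper invokes the equality in distribution of the centered statistic with $T_{MAX}$ under $H_0$ rather than re-deriving the bound from Lemma \ref{lemma1} and Theorem \ref{thm1}, and it lower-bounds the signal through $\Vert\mathbf D^{-1/2}\boldsymbol\theta\Vert_\infty$ rather than through $\max_j\hat d_j=O_p(1)$ --- both versions rest on the same implicit comparability of $\Vert\mathbf D^{-1/2}\boldsymbol\theta\Vert_\infty$ with $\Vert\boldsymbol\theta\Vert_\infty$ that you flag.
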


Given a fixed significant level $\alpha$, the test $T_{MAX}$ attains consistency if $\Vert \boldsymbol \theta\Vert_\infty\geq \tilde C\sqrt{\log p/n}$, provided that $\tilde{C}$ is sufficiently large. This is the minimax rate optimal for testing against sparse alternatives, as stated in Theorem 3 of \cite{CLX14}. If $\tilde{C}$ is adequately small, then it becomes impossible for any $\alpha$-level test to reject the null hypothesis with a probability approaching one. Therefore, Theorem \ref{thm:consistency} also demonstrates the optimality of our proposed test $T_{MAX}$.

To show the high dimensional asymptotic relative efficiency, we consider a special alternative hypothesis: $$H_1: \bm \theta=(\theta_1,0,\cdots,0)^\top, \theta_1>0,$$ which means there are only one variable has nonzero mean. Let $x_{\alpha}=2\log p-\log \log p+q_{1-\alpha}$. In this case,
\begin{align*}
\mathbb P\left(\hat{d}_1^{-2}\hat{\theta}^2_1n\hat{\zeta}_1^2p\ge x_\alpha\right)\le \mathbb P\left(T_{MAX}\ge x_{\alpha}\right)\le \mathbb P\left(\hat{d}_1^{-2}\hat{\theta}^2_1n\hat{\zeta}_1^2p\ge x_\alpha\right)+\mathbb P\left(\max_{2\le i\le p}\hat{d}_i^{-2}\hat{\theta}^2_in\hat{\zeta}_1^2p\ge x_\alpha\right).
\end{align*}
Under this special alternative hypothesis, we can easily have $$\mathbb P\left(\max_{2\le i\le p}\hat{d}_i^{-2}\hat{\theta}^2_in\hat{\zeta}_1^2p\ge x_\alpha\right)\to \alpha,$$ and
\begin{align*}
\mathbb P\left(\hat{d}_1^{-2}\hat{\theta}^2_1n\hat{\zeta}_1^2p\ge x_\alpha\right)\to \Phi\left(-\sqrt{x_{\alpha}}+(np)^{1/2}d_1^{-1}\theta_1\zeta_1\right).
\end{align*}
So the power function of our proposed $T_{MAX}$ test is
\begin{align*}
\beta_{MAX}(\bm \theta)\in (\Phi\left(-\sqrt{x_{\alpha}}+(np)^{1/2}d_1^{-1}\theta_1\zeta_1\right),\Phi\left(-\sqrt{x_{\alpha}}+(np)^{1/2}d_1^{-1}\theta_1\zeta_1\right)+\alpha).
\end{align*}
Similarly, the power function of \cite{CLX14}'s test is
\begin{align*}
\beta_{CLX}(\bm \theta)\in (\Phi\left(-\sqrt{x_{\alpha}}+n^{1/2}\varsigma_1^{-1}\theta_1\right),\Phi\left(-\sqrt{x_{\alpha}}+n^{1/2}\varsigma_1^{-1}\theta_1\right)+\alpha),
\end{align*}
where $\varsigma^2_i$ is the variance of $X_{ki}, i=1,\cdots,p$. Thus, the asymptotic relative efficiency of $T_{MAX}$ with respective to \cite{CLX14}'s test could be approximated as
\begin{align*}
ARE(T_{MAX},T_{CLX})=\{E(R_i^{-1})\}^2E(R_i^2),
\end{align*}
which is the same as the asymptotic relative efficiency of \cite{feng2016}'s test with respective to \cite{srivastava2009test}'s test. If $\boldsymbol{X}_i$ are generated from standard multivariate $t$-distribution with $\nu$ degrees of freedom $(\nu>2)$,
$$
ARE(T_{MAX},T_{CLX})=\frac{2}{\nu-2}\left(\frac{\Gamma((\nu+1) / 2)}{\Gamma(\nu / 2)}\right)^2 .
$$
For different $\nu=3,4,5,6$, the above ARE are $2.54,1.76,1.51,1.38$, respectively. Under the multivariate normal distribution $(\nu=\infty)$, our $T_{MAX}$ test is the same powerful as \cite{CLX14}'s test. However, our $T_{MAX}$ test is much more powerful than \cite{CLX14}'s test under the heavy-tailed distributions.

\section{Maxsum test}
\subsection{Exisiting Sum-type test}\label{sumtypesec}
\cite{feng2016multivariate} proposed the following sum-type test statistic:
$$T_{SUM}=\frac{2}{n(n-1)} {\sum \sum}_{i<j} U\left(\hat{\mathbf{D}}_{i j}^{-1 / 2} \boldsymbol{X}_i\right)^T U\left(\hat{\mathbf{D}}_{i j}^{-1 / 2} \boldsymbol{X}_j\right),$$
where $\hat{\mathbf{D}}_{i j}$ are the corresponding diagonal matrix estimator using leave-two-out sample $\left\{{\bm X}_k\right\}_{k \neq i, j}^n$.

By \cite{feng2016multivariate}, we have the following theorem and assumptions:\par
\begin{assumption}\label{sum0}
     Variables $\left\{{\bm X}_1, \ldots, {\bm X}_n\right\}$ in the $n$-th row are independently and identically distributed (i.i.d.) from $p$-variate elliptical distribution with density functions   $\operatorname{det}(\boldsymbol{\Sigma})^{-1 / 2}\cdot$ $ g\left(\left\|\boldsymbol{\Sigma}^{-1 / 2}(\mathbf{x}-\boldsymbol{\theta})\right\|\right)$ where $\boldsymbol{\theta}$ 's are the symmetry centers and $\boldsymbol{\Sigma}$ 's are the positive definite symmetric $p \times p$ scatter matrices.
\end{assumption}

\begin{assumption}\label{sum1}
$\operatorname{tr}\left(\R^4\right)=o\left(\operatorname{tr}^2\left(\R^2\right)\right)$.
\end{assumption}
\begin{assumption}\label{sum2}
    (i) $\operatorname{tr}\left(\R^2\right)-p=o\left(n^{-1} p^2\right)$, (ii) $n^{-2} p^2 / \operatorname{tr}\left(\R^2\right)=O(1)$ and $\log p=o(n)$.
\end{assumption}
\begin{remark}{4}
    { Assumption \ref{sum1} is a common condition for sum-type test statistic in high dimensions, see  \cite{chen2010two,feng2016multivariate,feng2016}, which requires that the eigenvalues of $\R$ not diverge excessively. If all the eigenvalues of $\R$ are bounded, $\tr(\R^2)=O(p),\tr(\R^4)=O(p)$. So the Assumption \ref{sum1} holds trivially. In this case, Assumption \ref{sum2} becomes $p=O(n^2)$ and $p/n\rightarrow \infty$. Actually, Assumption \ref{sum2} indicates $p=O(n^2)$ and it is not necessary for the eigenvalues to be bounded. For $b$ unbounded eigenvalues with respect dimension $p$, the sufficient condition for Assumption \ref{sum1} is $\lambda_{(p)}/\lambda_{(1)}=o((p-b)^{1/2}b^{-1/4})$. For Assumption \ref{sum2}, a sufficient condition is $\lambda_i=1+o(p^{1/2}n^{-1/2})$. When the assumptions about the spectrum of $\R$ do not hold, there would be a bias therm in sum-type statistic that difficult to calculate and deserves to be investigated further , see the Supplemental Material of \cite{feng2016multivariate} for more details.}
\end{remark}

The following Lemma restate the Theorem 1 in \cite{feng2016}, which gives the asymptotic null distribution of $T_{SUM}$ under the symmetric elliptical distribution assumption.
\begin{lemma}\label{thm2}
Under Assumptions \ref{sum0}-\ref{sum2}. and $H_0$, as $\min \left(p, n\right) \rightarrow \infty, T_{SUM} / \sigma_n \stackrel{d}{\rightarrow} N(0,1)$, where $\sigma_n^2=\frac{2}{n(n-1) p^2} \operatorname{tr}\left({\R}^2\right)$.
\end{lemma}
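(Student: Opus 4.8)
The plan is to prove the stated limit by a martingale decomposition, treating $T_{SUM}$ as a degenerate U-statistic of order two whose kernel carries a data-dependent nuisance, namely the leave-two-out diagonal matrix $\hat{\mathbf{D}}_{ij}$. Throughout I work under $H_0$, so $\boldsymbol{\theta}=\mathbf{0}$ and each $\boldsymbol{X}_i=v_i\mathbf{\Gamma}\boldsymbol{W}_i$ is symmetric about the origin. \textbf{Step 1 (reduction to an oracle statistic).} Writing $\boldsymbol{U}_i=U(\mathbf{D}^{-1/2}\boldsymbol{X}_i)$, define the oracle statistic $T^{*}=\frac{2}{n(n-1)}\sum_{i<j}\boldsymbol{U}_i^\top\boldsymbol{U}_j$. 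The leave-two-out device makes $\hat{\mathbf{D}}_{ij}$ a function of $\{\boldsymbol{X}_k\}_{k\neq i,j}$ alone, hence independent of $(\boldsymbol{X}_i,\boldsymbol{X}_j)$, and it is consistent for $\mathbf{D}$ uniformly over pairs. First I would control $\max_{i<j}\|\hat{\mathbf{D}}_{ij}^{-1/2}\mathbf{D}^{1/2}-\mathbf{I}_p\|$ and, using the Lipschitz continuity of $U(\cdot)$ away from the origin together with Assumptions \ref{sum0}--\ref{sum2}, show $T_{SUM}-T^{*}=o_p(\sigma_n)$, so that the estimated standardization does not affect the limit law.

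\textbf{Step 2 (martingale central limit theorem).} With the filtration $\mathcal{F}_j=\sigma(\boldsymbol{U}_1,\dots,\boldsymbol{U}_j)$, set $D_{nj}=\frac{2}{n(n-1)}\big(\sum_{i<j}\boldsymbol{U}_i\big)^{\!\top}\boldsymbol{U}_j$ for $j=2,\dots,n$. Since $\boldsymbol{X}_j$ is symmetric and $\mathbb{E}(\boldsymbol{U}_j)=\mathbf{0}$, the array $\{(D_{nj},\mathcal{F}_j)\}_{j=2}^n$ is a martingale difference sequence and $T^{*}=\sum_{j=2}^{n}D_{nj}$. The plan is to invoke the martingale central limit theorem of Hall and Heyde, whose hypotheses reduce to (a) the conditional-variance condition $\sigma_n^{-2}\sum_{j}\mathbb{E}(D_{nj}^2\mid\mathcal{F}_{j-1})\cp 1$ and (b) a conditional Lindeberg condition, the latter secured through the Lyapunov bound $\sigma_n^{-4}\sum_{j}\mathbb{E}(D_{nj}^4)\to 0$.

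\textbf{Step 3 (variance and the two conditions).} A direct second-moment computation gives $\mathrm{Var}(T^{*})=\frac{2}{n(n-1)}\tr(\boldsymbol{\Sigma}_u^2)$ with $\boldsymbol{\Sigma}_u=\mathbb{E}(\boldsymbol{U}_1\boldsymbol{U}_1^\top)$. Because the inner standardization forces $\tr(\mathbf{R})=p$ and, under ellipticity, $\boldsymbol{\Sigma}_u=p^{-1}\mathbf{R}\{1+o(1)\}$, one gets $\tr(\boldsymbol{\Sigma}_u^2)=p^{-2}\tr(\mathbf{R}^2)\{1+o(1)\}$, which identifies $\sigma_n^2=\frac{2}{n(n-1)p^2}\tr(\mathbf{R}^2)$. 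For condition (a) I would expand $\mathbb{E}(D_{nj}^2\mid\mathcal{F}_{j-1})=\frac{4}{n^2(n-1)^2}\big(\sum_{i<j}\boldsymbol{U}_i\big)^{\!\top}\boldsymbol{\Sigma}_u\big(\sum_{i<j}\boldsymbol{U}_i\big)$ and bound the variance of its sum, which reduces to $\tr(\boldsymbol{\Sigma}_u^4)=o(\tr^2(\boldsymbol{\Sigma}_u^2))$, i.e.\ precisely Assumption \ref{sum1}. Condition (b) then follows from $\|\boldsymbol{U}_i\|=1$ and a fourth-moment trace estimate under the same assumption, with Assumption \ref{sum2} supplying the remaining growth constraints between $p$ and $n$.

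\textbf{The main obstacle} will be Step 1: because the kernel is not fixed but contains the data-dependent $\hat{\mathbf{D}}_{ij}$, $T_{SUM}$ is not a genuine U-statistic, and one must control the replacement error uniformly across all $\binom{n}{2}$ pairs while showing that the accumulated perturbation of the spatial signs is only $o_p(\sigma_n)$. The symmetry imposed in Assumption \ref{sum0} is exactly what keeps each kernel's conditional mean equal to zero, so that the statistic is degenerate and the martingale structure of Step 2 survives the estimated standardization; by comparison, verifying condition (a) under the mild spectral Assumption \ref{sum1} is fairly routine given the boundedness $\|\boldsymbol{U}_i\|=1$.
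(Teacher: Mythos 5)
Your proposal is correct and follows essentially the same route as the source this lemma rests on: the paper itself does not prove Lemma \ref{thm2} but cites Theorem 1 of \cite{feng2016}, and both that proof and the paper's own appendix proof of the generalized version (Theorem \ref{lemma_sum}/\ref{lemma5}) proceed exactly as you outline --- a Taylor/decomposition argument showing the leave-two-out standardization $\hat{\mathbf{D}}_{ij}$ contributes only $o_p(\sigma_n)$ (the paper's terms $A_{n1},\dots,A_{n6}$ play the role of your Step 1), followed by a martingale CLT for the degenerate U-statistic $\frac{2}{n(n-1)}\sum_{i<j}\boldsymbol{U}_i^\top\boldsymbol{U}_j$ with the conditional-variance condition reduced to $\tr(\boldsymbol{\Sigma}_u^4)=o(\tr^2(\boldsymbol{\Sigma}_u^2))$ via Assumption \ref{sum1}. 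Your identification of Step 1 as the main obstacle is also consonant with the paper, whose appendix devotes most of its effort to precisely those remainder bounds.
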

To broaden the application, we re-derive the limiting null distribution of $T_{SUM}$ under a more generalized model (\ref{modelx}).
\begin{theorem}\label{lemma_sum}
Under Assumptions \ref{max1}-\ref{max3},\ref{sum1}-\ref{sum2} and $H_0$, as $\min \left(p, n\right) \rightarrow \infty, T_{SUM} / \sigma_n \stackrel{d}{\rightarrow} N(0,1)$.
\end{theorem}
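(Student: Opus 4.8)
The plan is to mirror the argument behind Lemma \ref{thm2} (the elliptical case of \cite{feng2016multivariate}), isolate exactly where elliptical symmetry was used, and replace it with what the independent-components model (\ref{modelx}) together with Assumptions \ref{max1}--\ref{max3} supplies. The crucial first observation is that under $H_0$ we have $\boldsymbol X_i=v_i\mathbf\Gamma\boldsymbol W_i$, and since the spatial-sign map $U(\cdot)$ is scale-invariant, $U(\mathbf D^{-1/2}\boldsymbol X_i)=U(v_i\mathbf D^{-1/2}\mathbf\Gamma\boldsymbol W_i)=U(\mathbf D^{-1/2}\mathbf\Gamma\boldsymbol W_i)$, so the radial variable $v_i$ drops out entirely. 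Hence the signs entering $T_{SUM}$ depend on the model only through $\boldsymbol W_i$ and $\R=\mathbf D^{-1/2}\mathbf\Gamma\mathbf\Gamma^\top\mathbf D^{-1/2}$, and the statistic reduces to the same functional of the signs as in the elliptical case.

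First I would replace the leave-two-out estimators $\hat{\mathbf D}_{ij}$ by the population diagonal matrix $\mathbf D$, defining the oracle statistic $T^*=\frac{2}{n(n-1)}\sum_{i<j}\boldsymbol U_i^\top\boldsymbol U_j$ with $\boldsymbol U_i=U(\mathbf D^{-1/2}\mathbf\Gamma\boldsymbol W_i)$, and show $T_{SUM}-T^*=o_p(\sigma_n)$ through the consistency of $\hat{\mathbf D}_{ij}$ and a perturbation expansion of $U(\cdot)$ around $\mathbf D^{-1/2}\boldsymbol X_i$, exactly as in \cite{feng2016multivariate}; the moment bounds on $R_i^{-k}$ from Assumption \ref{max2} render each remainder negligible. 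By the symmetry of the $W_{i,j}$ in Assumption \ref{max1}, $\boldsymbol U_i$ is symmetric about the origin, so $\E(\boldsymbol U_i)=0$ and $T^*$ is a completely degenerate U-statistic.

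Next I would establish the CLT for $T^*$ via the martingale central limit theorem (Hall--Heyde). Writing $T^*=\sum_{j=2}^n D_{nj}$ with $D_{nj}=\frac{2}{n(n-1)}\boldsymbol U_j^\top\sum_{i<j}\boldsymbol U_i$ and $\mathcal F_j=\sigma(\boldsymbol U_1,\dots,\boldsymbol U_j)$, one has $\E(D_{nj}\mid\mathcal F_{j-1})=0$, so $\{D_{nj}\}$ is a martingale-difference array. A direct computation gives $\var(T^*)=\frac{2}{n(n-1)}\tr(\mathbf\Sigma_u^2)$ with $\mathbf\Sigma_u=\E(\boldsymbol U_1\boldsymbol U_1^\top)$, and the key identification is $\tr(\mathbf\Sigma_u^2)=p^{-2}\tr(\R^2)\{1+o(1)\}$, so that $\var(T^*)=\sigma_n^2\{1+o(1)\}$. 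I would then verify the two martingale-CLT conditions: (a) $\sum_j\E(D_{nj}^2\mid\mathcal F_{j-1})/\sigma_n^2\cp1$, which after taking conditional expectations reduces to showing that $\big(\sum_{i<j}\boldsymbol U_i\big)^\top\mathbf\Sigma_u\big(\sum_{i<j}\boldsymbol U_i\big)$ concentrates, a fluctuation governed by $\tr(\mathbf\Sigma_u^4)$ and analogous cross terms; and (b) the conditional Lyapunov bound $\sum_j\E(D_{nj}^4)/\sigma_n^4\to0$. Both are delivered by Assumption \ref{sum1}, which via the trace identification becomes $\tr(\mathbf\Sigma_u^4)=o(\tr^2(\mathbf\Sigma_u^2))$, while Assumption \ref{sum2} provides the required rates between $n$ and $p$.

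The main obstacle is the trace identification $\tr(\mathbf\Sigma_u^2)=p^{-2}\tr(\R^2)\{1+o(1)\}$ (and its analogue for $\tr(\mathbf\Sigma_u^4)$) under the independent-components model. In the elliptical case $\boldsymbol U_i$ is uniform on a fixed ellipsoid and these moments are available in closed form; here I only have coordinatewise independence of the $W_{i,j}$ and must instead exploit concentration of $R_i^2=\|\mathbf D^{-1/2}\mathbf\Gamma\boldsymbol W_i\|^2$ around $\tr(\R)\asymp p$. Writing $\boldsymbol U_i=\boldsymbol Y_i/\|\boldsymbol Y_i\|$ with $\boldsymbol Y_i=\mathbf D^{-1/2}\mathbf\Gamma\boldsymbol W_i$ and $\E(\boldsymbol Y_i\boldsymbol Y_i^\top)=\R$, the approximation $\mathbf\Sigma_u=\E(\boldsymbol Y_i\boldsymbol Y_i^\top/\|\boldsymbol Y_i\|^2)\approx p^{-1}\R$ follows by replacing $\|\boldsymbol Y_i\|^{-2}$ by $p^{-1}$ and bounding the error using the sub-exponential moment bound of Assumption \ref{max1}, the row-sum control of Assumption \ref{max3}, and the moment bounds on $R_i^{-k}$ of Assumption \ref{max2}; this is precisely the content underlying Lemma \ref{lemma3}, which I would invoke to transfer $\mathbf\Sigma_u$ to $p^{-1}\R$ in all the relevant trace quantities. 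Once this identification is secured, the martingale-CLT verification is routine and yields $T_{SUM}/\sigma_n\cd N(0,1)$.
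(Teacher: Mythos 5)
Your overall architecture is the same as the paper's: reduce $T_{SUM}$ to the oracle sign statistic (the paper's $A_{n1}$--$A_{n3}$ analysis, which under $H_0$ is all that survives of its decomposition), then prove a CLT for the degenerate U-statistic; your observations that $v_i$ cancels inside $U(\cdot)$ and that $\E(\boldsymbol U_i)=0$ by symmetry are correct. The genuine gap is in the step you yourself call the main obstacle. You propose to obtain $\tr(\mathbf\Sigma_u^2)=p^{-2}\tr(\R^2)\{1+o(1)\}$ (and the analogue for $\tr(\mathbf\Sigma_u^4)$) by replacing $\Vert\boldsymbol Y_i\Vert^{-2}$ with $p^{-1}$ entrywise and invoking Lemma \ref{lemma3}. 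But Lemma \ref{lemma3} (and the bound behind it, Lemma \ref{LemmaA4}(iii): $\mathbf\Sigma_{u,jl}=p^{-1}\sigma_{jl}+O(p^{-1-\delta/2})$ uniformly) is a sup-norm, Gaussian-comparison statement about maxima; it gives no Frobenius/trace control. If you feed the entrywise bound into $\tr(\mathbf\Sigma_u^2)=\sum_{j,l}\mathbf\Sigma_{u,jl}^2$, the accumulated error over $p^2$ entries is $O(p^{-\delta})$, plus a cross term of order $O(p^{-3\delta/2})$ after using Assumption \ref{max3}; since $\delta\le 1/2$, both can dominate the main term $p^{-2}\tr(\R^2)$, which is as small as $p^{-1}$ in the benchmark case of bounded eigenvalues ($\tr(\R^2)\asymp p$). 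So the $\{1+o(1)\}$ factor does not follow from your tools, and with it both the standardization by $\sigma_n$ and your martingale condition (a) collapse; the transfer of Assumption \ref{sum1} into $\tr(\mathbf\Sigma_u^4)=o(\tr^2(\mathbf\Sigma_u^2))$ fails for the same reason.

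What the paper does instead--and what your proof is missing--is a second reduction before the CLT: it Taylor-expands the sign itself, $\boldsymbol U_i=\mathbf D^{-1/2}\mathbf\Gamma U(\boldsymbol W_i)\{1+U(\boldsymbol W_i)^\top(\R-\mathbf I_p)U(\boldsymbol W_i)\}^{-1/2}$ (Equation \ref{eq:expansion_of_U_i}), and uses Lemma \ref{lemma1_like_2016} together with Assumption \ref{sum2}(i), $\tr(\R^2)-p=o(n^{-1}p^2)$, to show that replacing $\boldsymbol U_i$ by the linearized sign $\mathbf D^{-1/2}\mathbf\Gamma U(\boldsymbol W_i)$ perturbs the U-statistic by only $o_p(\sigma_n)$. (Notice that Assumption \ref{sum2}(i) plays no identifiable role anywhere in your argument--a red flag, since it is exactly the condition that licenses this step.) For the linearized statistic no trace identification is needed: $\E\{U(\boldsymbol W_i)U(\boldsymbol W_i)^\top\}=p^{-1}\mathbf I_p$ holds exactly under Assumption \ref{max1}, so its variance is exactly $\sigma_n^2$, and the martingale CLT of \cite{feng2016} applies verbatim with their Lemma 1 replaced by Lemma \ref{LemmaA4}; the crude upper bound $\tr(\mathbf\Sigma_u^2)=O(p^{-2}\tr(\R^2))$ (Equation \ref{eq:U_iU_j^2}) is then only needed for the negligibility of error terms, where an upper bound of the right order suffices. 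Inserting this linearization step would repair your proposal, and would in fact prove, rather than assume, the trace identification you rely on.
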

Similar to \cite{feng2016}, we propose the following estimator to estimate the trace term in $\sigma_n^2$
$$
\widehat{\operatorname{tr}\left({\R}^2\right)}=\frac{p^2}{n(n-1)} \sum_{i=1}^n \sum_{j \neq i}^n\left(U\left(\hat{\mathbf{D}}_{i j}^{-1 / 2}\left(\boldsymbol{X}_i-\hat{\boldsymbol{\theta}}_{i j}\right)\right)^T U\left(\hat{\mathbf{D}}_{i j}^{-1 / 2}\left(\boldsymbol{X}_j-\hat{\boldsymbol{\theta}}_{i j}\right)\right)\right)^2,
$$
where $\left(\hat{\boldsymbol{\theta}}_{i j}, \hat{\mathbf{D}}_{i j}\right)$ are the corresponding spatial median and diagonal matrix estimators using leave-two-out sample $\left\{\boldsymbol{X}_k\right\}_{k \neq i, j}^n$. Similar to the proof of Proposition 2 in \cite{feng2016multivariate}, we can easily obtain that $\widehat{\operatorname{tr}\left(\mathbf{R}^2\right)} / \operatorname{tr}\left(\mathbf{R}^2\right) \xrightarrow{p} 1$ as $p, n \rightarrow \infty$ under model (\ref{modelx}). Consequently, a ratio-consistent estimator of $\sigma_n^2$ under $H_0$ is $\hat{\sigma}_n^2=\frac{2}{n(n-1) p^2} \widehat{\operatorname{tr}\left(\mathbf{R}^2\right)}$. And then we reject the null hypothesis with $\alpha$ level of significance if $T_{SUM}/ \hat{\sigma}_n>z_\alpha$, where $z_\alpha$ is the upper $\alpha$ quantile of $N(0,1)$.

And we also re-derive the asymptotic distribution of $T_{SUM}$ under the following alternative hypothesis:
\begin{equation}\label{H1_dense}
   H_1: \boldsymbol{\theta}^\top \mathbf{D}^{-1} \boldsymbol{\theta}=O\left(\zeta_1^{-2} \sigma\right)\text{ and } \boldsymbol{\theta}^\top \R \boldsymbol{\theta}=o\left(\zeta_1^{-2} n p \sigma^2\right)
\end{equation}

\begin{theorem}\label{lemma5}
    Under Assumptions \ref{max1}-\ref{max3},\ref{sum1}-\ref{sum2} and the alternative hypothesis (\ref{H1_dense}), as $\min \left(p, n\right) \rightarrow \infty$,$$ \frac{T_{SUM}-\zeta_1^2 \boldsymbol{\theta}^\top \mathbf{D}^{-1} \boldsymbol{\theta}}{\sigma_n}  \stackrel{d}{\rightarrow} N(0,1).$$
\end{theorem}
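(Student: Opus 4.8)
The plan is to expand $T_{SUM}$ about the symmetric noise part of each observation, so that it splits into a leading term already governed by Theorem \ref{lemma_sum}, a deterministic mean shift, and fluctuation pieces that the two conditions in (\ref{H1_dense}) are tailored to suppress.

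First I would pass from the data-dependent $\hat{\mathbf D}_{ij}$ to the population diagonal matrix $\mathbf D$. Since $\hat{\mathbf D}_{ij}$ is computed from the leave-two-out subsample $\{\boldsymbol X_k\}_{k\neq i,j}$, it is independent of $(\boldsymbol X_i,\boldsymbol X_j)$; using the ratio-consistency of $\hat{\mathbf D}_{ij}$ (as in Proposition 2 of \citet{feng2016multivariate}) I would show $T_{SUM}=T_{SUM}^{\ast}+o_p(\sigma_n)$, where $T_{SUM}^{\ast}=\frac{2}{n(n-1)}\sum_{i<j}U(\mathbf D^{-1/2}\boldsymbol X_i)^{\top}U(\mathbf D^{-1/2}\boldsymbol X_j)$. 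Writing $\boldsymbol\eta=\mathbf D^{-1/2}\boldsymbol\theta$ and recalling $\mathbf D^{-1/2}(\boldsymbol X_i-\boldsymbol\theta)=R_i\boldsymbol U_i$, the Jacobian $\nabla U(\boldsymbol x)=\|\boldsymbol x\|^{-1}(\mathbf I_p-U(\boldsymbol x)U(\boldsymbol x)^{\top})$ gives the Taylor expansion $U(\mathbf D^{-1/2}\boldsymbol X_i)=\boldsymbol U_i+\boldsymbol a_i+\boldsymbol r_i$ with $\boldsymbol a_i=R_i^{-1}(\mathbf I_p-\boldsymbol U_i\boldsymbol U_i^{\top})\boldsymbol\eta$ and $\|\boldsymbol r_i\|=O(R_i^{-2}\|\boldsymbol\eta\|^2)$. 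Substituting and multiplying out yields $T_{SUM}^{\ast}=T_0+L+Q+(\text{remainder})$, where $T_0=\frac{2}{n(n-1)}\sum_{i<j}\boldsymbol U_i^{\top}\boldsymbol U_j$, $L$ collects the terms $\boldsymbol U_i^{\top}\boldsymbol a_j+\boldsymbol a_i^{\top}\boldsymbol U_j$, and $Q$ collects $\boldsymbol a_i^{\top}\boldsymbol a_j$.

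By Theorem \ref{lemma_sum}, $T_0/\sigma_n\cd N(0,1)$, so it remains to show $L$ and the remainder are $o_p(\sigma_n)$ and that $Q$ equals the claimed shift plus $o_p(\sigma_n)$. Under model (\ref{modelx}) the scale $v_i$ is independent of the sign of $\boldsymbol W_i$, so $\E\boldsymbol U_i=\boldsymbol 0$ and $\E(\boldsymbol U_i\boldsymbol U_i^{\top})=\mathbf\Sigma_u$, which is close to $p^{-1}\mathbf R$ (Lemma \ref{lemma3}), whence $\E(\boldsymbol U_i\boldsymbol U_i^{\top})$ and $\E\{R_i^{-1}(\mathbf I_p-\boldsymbol U_i\boldsymbol U_i^{\top})\}$ agree with $p^{-1}\mathbf R$ and $\zeta_1(\mathbf I_p-p^{-1}\mathbf R)$ respectively up to negligible error. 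Then $\E L=0$, and its H\'ajek projection is $g_1(\boldsymbol X_i)\approx\zeta_1\boldsymbol U_i^{\top}\boldsymbol\eta$, so $\var(L)\approx\frac{4\zeta_1^2}{np}\boldsymbol\eta^{\top}\mathbf R\boldsymbol\eta$; this is $o(\sigma_n^2)$ precisely under the second condition of (\ref{H1_dense}). For $Q$, independence across $i\neq j$ gives $\E Q\approx\zeta_1^2\boldsymbol\eta^{\top}(\mathbf I_p-p^{-1}\mathbf R)^2\boldsymbol\eta$; since $\lambda_{\max}(\mathbf R)\le a_0(p)\asymp p^{1-\delta}$ by Assumption \ref{max3}, the $p^{-1}\mathbf R$ corrections are $O(p^{-\delta}\|\boldsymbol\eta\|^2)$, whence $\E Q=\zeta_1^2\|\boldsymbol\eta\|^2(1+o(1))=\zeta_1^2\boldsymbol\theta^{\top}\mathbf D^{-1}\boldsymbol\theta\,(1+o(1))$, the stated mean shift. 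Its projection is $\approx\zeta_1\boldsymbol a_i^{\top}\boldsymbol\eta$, giving $\var(Q)\lesssim\frac{1}{n}\zeta_1^2\zeta_2\|\boldsymbol\eta\|^4\asymp\frac1n(\zeta_1^2\|\boldsymbol\eta\|^2)^2=O(\sigma_n^2/n)$ under the first condition of (\ref{H1_dense}), using $\zeta_k\asymp p^{-k/2}$ from Assumption \ref{max2}.

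Finally I would bound the remainder: the dominant piece $\frac{2}{n(n-1)}\sum_{i<j}\boldsymbol U_i^{\top}\boldsymbol r_j$ has mean zero and, by $\|\boldsymbol r_i\|=O(R_i^{-2}\|\boldsymbol\eta\|^2)$ and $\zeta_k\asymp p^{-k/2}$, variance of smaller order than $\var(L)$, with the $\boldsymbol a^{\top}\boldsymbol r$ and $\boldsymbol r^{\top}\boldsymbol r$ pieces smaller still; all are $o_p(\sigma_n)$ under the rate constraints. Collecting the pieces and invoking Slutsky gives $(T_{SUM}-\zeta_1^2\boldsymbol\theta^{\top}\mathbf D^{-1}\boldsymbol\theta)/\sigma_n=T_0/\sigma_n+o_p(1)\cd N(0,1)$. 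The main obstacle is the simultaneous control of the two interaction terms $L$ and $Q$ together with the Taylor remainder under the general, non-elliptical model (\ref{modelx}): unlike the elliptical case the moment identity $\E(\boldsymbol U_i\boldsymbol U_i^{\top})=p^{-1}\mathbf R$ and the radius--sign factorization hold only up to errors that must be shown negligible, and it is exactly the calibration of the two rates in (\ref{H1_dense}) against $\var(L)$ and $\var(Q)$ that closes the argument.
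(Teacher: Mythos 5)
Your proposal is correct and follows essentially the same route as the paper: the same decomposition of $T_{SUM}$ into the oracle sign U-statistic $T_0=\frac{2}{n(n-1)}\sum_{i<j}\boldsymbol U_i^\top\boldsymbol U_j$, a mean-shift term, and cross terms whose variances are calibrated exactly against the two conditions in (\ref{H1_dense}), built from the same ingredients (the rate $\max_j|\hat d_j-d_j|=O_p(n^{-1/2}(\log p)^{1/2})$ of Lemma \ref{lemma:D_rate}, the sign-moment approximations of Lemma \ref{LemmaA4}, and leave-two-out independence of $\hat{\mathbf D}_{ij}$ from $(\boldsymbol X_i,\boldsymbol X_j)$). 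The paper merely organizes the expansion differently: it keeps $\hat{\mathbf D}_{ij}$ inside an exact algebraic identity with normalizing scalars $(1+\alpha_{ij})^{-1/2}$ and reads the shift off the exact term $\frac{2}{n(n-1)}\sum_{i<j}R_i^{-1}R_j^{-1}\boldsymbol\theta^\top\mathbf D^{-1}\boldsymbol\theta$, whose mean is $\zeta_1^2\boldsymbol\theta^\top\mathbf D^{-1}\boldsymbol\theta$ exactly by independence, rather than your two-stage scheme (strip $\hat{\mathbf D}_{ij}$ first, then Taylor-expand $U$ via its Jacobian), where the shift emerges from $\|\mathbb E\boldsymbol a_i\|^2$ only up to $(1+o(1))$ corrections that must then be checked to be $o(\sigma_n)$ using the first condition of (\ref{H1_dense}).

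One step needs repair: you cannot simply cite Theorem \ref{lemma_sum} for $T_0/\sigma_n\stackrel{d}{\rightarrow}N(0,1)$. First, $T_0$ is not $T_{SUM}$ under $H_0$ (it uses the true $\mathbf D$ and the centered observations, not $\hat{\mathbf D}_{ij}$ and the raw data), so the theorem does not literally apply to it; second, within this paper Theorem \ref{lemma_sum} has no independent proof — it is obtained precisely as the $\boldsymbol\theta=\boldsymbol 0$ special case of the present theorem — so the citation is circular in context. The fix is exactly the paper's final step: show $\frac{2}{n(n-1)}\sum_{i<j}\boldsymbol U_i^\top\boldsymbol U_j=\frac{2}{n(n-1)}\sum_{i<j}(\mathbf D^{-1/2}\mathbf\Gamma U(\boldsymbol W_i))^\top\mathbf D^{-1/2}\mathbf\Gamma U(\boldsymbol W_j)+o_p(\sigma_n)$ and invoke the martingale CLT argument of \cite{feng2016} with Lemma \ref{LemmaA4} replacing their Lemma 1; alternatively, run your own step 1 and decomposition at $\boldsymbol\theta=\boldsymbol 0$ first to establish the null/oracle CLT, and then feed it into the $H_1$ argument. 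With that substitution, your plan matches the paper's proof.
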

By Theorem \ref{lemma_sum} and \ref{lemma5}, the power function of $T_{SUM}$ can be approximated as
\begin{equation*}
\beta_{SUM}\left(\boldsymbol{\theta}\right)=\Phi\left(-z_\alpha+\frac{\zeta_1^2 n p \boldsymbol{\theta}^T \mathbf{D}^{-1} \boldsymbol{\theta}}{\sqrt{2 \operatorname{tr}\left(\mathbf{R}^2\right)}}\right).
\end{equation*}
Hence, $T_{SUM}$ is expected to perform well under the dense alternative hypothesis. For a more detailed discussion on the asymptotic relative efficiency of $T_{SUM}$ compared to other tests, refer to \cite{feng2016}. The power comparison between $T_{SUM}$ and $T_{MAX}$ will be addressed in the following subsection.

\subsection{Maxsum test}
In this subsection, we demonstrate that our proposed test statistic $T_{MAX}$ is asymptotically independent of the statistic $T_{SUM}$ presented in  \cite{feng2016}. This allows us to carry out a Cauchy $p$-value combination of the two asymptotically independent $p$-values, resulting in a new test. This test is tailored to accommodate both sparse and dense alternatives.

\begin{assumption}\label{ind1}
    There exist $C>0$ and $\varrho \in(0,1)$ so that $\max _{1 \leq i<j \leq p}\left|\sigma_{i j}\right| \leq \varrho$ and
 $\max _{1 \leq i \leq p} \sum_{j=1}^p \sigma_{i j}^2 \leq(\log p)^C$ for all $p \geq 3 ; p^{-1 / 2}(\log p)^C \ll \lambda_{\min }({\R}) \leq \lambda_{\max }(\R) \ll \sqrt{p}(\log p)^{-1}$ and $\lambda_{\max }(\R) / \lambda_{\min }(\R)=O\left(p^\tau\right)$ for some $\tau \in(0,1 / 4)$.
\end{assumption}
\begin{remark}{4}
Assumption \ref{ind1} is the same as the condition (2.3) in \cite{feng2022a}. As shown in \cite{feng2022a}, Assumption \ref{ind1} is more restrictive than Assumption \ref{max3}, \ref{max4} and \ref{sum1}. Under Assumption \ref{ind1}, we have $p^{1/2}(\log p)^{C}\lesssim \tr(\R^2)\lesssim p^{3/2}\log^{-1} p$. So Assumption \ref{sum2} will hold if $n=o(p^{3/2}
\log p)$ and $p^{3/4}(\log p)^{-C/2}=O(n)$. Intuitively speaking, if all the eigenvalues of $\R$ are bounded and $p/n\to c\in (0,\infty)$, all the assumptions \ref{max3}, \ref{max4}, \ref{sum1}, \ref{sum2} and \ref{ind1} hold.
\end{remark}

\begin{theorem}\label{thm3}
    Under Assumptions \ref{max1}-\ref{max4}, \ref{sum2}-\ref{ind1} and $H_0$, $T_{SUM}$ and $T_{MAX}$ are asymptotically independent, i.e.
\begin{align*}
\mathbb P\left(T_{SUM}/\sigma_n\le x, T_{MAX}-2\log p+\log\log p\le y\right)\to \Phi(x)F(y),
\end{align*}
as $n,p\to \infty$.
\end{theorem}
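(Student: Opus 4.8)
The plan is to express both test statistics as functionals of the common scale-invariant spatial signs $\boldsymbol U_i=U(\mathbf D^{-1/2}(\boldsymbol X_i-\boldsymbol\theta))$ and then show that their joint law factorizes. Write $U_{ik}$ for the $k$-th coordinate of $\boldsymbol U_i$ and set $V_k:=n^{-1/2}\sum_{i=1}^n U_{ik}$. Under $H_0$ the Bahadur representation of Lemma \ref{lemma1} gives $(n^{1/2}\hat{\mathbf D}^{-1/2}\hat{\boldsymbol\theta})_k=\zeta_1^{-1}V_k+C_{n,k}$, and since at the maximum $|V_k|\asymp\sqrt{2\log p/p}$ while $\zeta_1^2 p=O(1)$ and $\sqrt{\log p}\,\|C_n\|_\infty=o_p(1)$ under the rate conditions, together with the consistency of $\hat\zeta_1$ and $\hat{\mathbf D}$ this yields
$$
T_{MAX}=p\,\max_{1\le k\le p}V_k^2+o_p(1).
$$
In parallel, the $U$-statistic and leave-two-out structure of $T_{SUM}$ (as in Theorem \ref{lemma_sum}) reduces it, up to an $o_p(\sigma_n)$ remainder, to the degenerate bilinear form $\frac{2}{n(n-1)}\sum_{i<j}\boldsymbol U_i^\top\boldsymbol U_j$, whose martingale central limit theorem gives $S_n:=T_{SUM}/\sigma_n\cd N(0,1)$. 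Using $\|\boldsymbol U_i\|=1$, the identity $\sum_{i<j}\boldsymbol U_i^\top\boldsymbol U_j=\tfrac{n}{2}\big(\sum_k V_k^2-1\big)$ shows that both $S_n$ and $M_n:=T_{MAX}-2\log p+\log\log p$ are explicit functions of the squared coordinates of the weakly dependent, approximately Gaussian vector $\boldsymbol V$, whose covariance is close to $\zeta_1^{-2}p^{-1}\mathbf R$ by Theorem \ref{thm1} and Lemma \ref{lemma3}: $S_n$ is a centered, scaled partial sum $\sum_k V_k^2$, while $M_n$ is driven by $\max_k V_k^2$. The problem thus reduces to proving asymptotic independence of the sum and the maximum of the squared coordinates of $\boldsymbol V$.

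The core step is a conditional Poisson approximation. Let $N_p=\#\{k:pV_k^2>2\log p-\log\log p+y\}$, so that $\{M_n\le y\}=\{N_p=0\}$, and recall from \eqref{eq:Tmax0} and Theorem \ref{thm:max_dist} that $N_p\cd\mathrm{Poisson}(\lambda)$ with $\lambda=\pi^{-1/2}e^{-y/2}$. I would establish the joint limit through the factorial moments $\mathbb E\big[e^{\mathrm i t S_n}(N_p)_{(m)}\big]\to \mathbb E\big[e^{\mathrm i t S_n}\big]\,\lambda^m$ for each fixed $m$, where $(N_p)_{(m)}$ is the falling factorial. Expanding $(N_p)_{(m)}$ as a sum over distinct coordinate tuples $(k_1,\dots,k_m)$, two ingredients are needed for each tuple: first, a deletion argument showing that $S_n$ conditioned on (or with) the coordinates $\{k_1,\dots,k_m\}$ removed still satisfies the same Gaussian central limit theorem, because each individual $V_k^2$ contributes only $O_p(\log p/p)$ to $\sum_k V_k^2$, which is of smaller order than the standard deviation $(n-1)\sigma_n\gtrsim p^{-3/4}(\log p)^{C/2}$ of the sum under Assumption \ref{ind1}; second, that the exceedance probabilities over distinct coordinates multiply out, so the sum over tuples produces $\lambda^m$. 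Feeding the factorial-moment convergence into an inclusion--exclusion (Chen--Stein) bound then gives $\mathbb P(S_n\le x,\,N_p=0)\to\Phi(x)e^{-\lambda}=\Phi(x)F(y)$, which is the assertion.

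The main obstacle is precisely this last step: because $T_{SUM}$ and $T_{MAX}$ are built from the very same sign vectors — indeed $\sum_k V_k^2$ enters the sum statistic while $\max_k V_k^2$ drives the max — their dependence is genuine at every finite $p$, and the crux is to quantify that the $O_p(1)$ extreme coordinates determining $M_n$ make an asymptotically negligible \emph{and} asymptotically independent contribution to $S_n$. Carrying the Poisson approximation through \emph{jointly} with the event $\{S_n\le x\}$ requires simultaneous control of (i) the Gaussian approximation error over hyperrectangles, available from Theorem \ref{thm1}; (ii) the sub-exponential moment bounds of Assumption \ref{max1}, needed to bound the cross moments $\mathbb E[U_{ik}U_{i\ell}]$ and the higher-order terms entering the Stein coupling and the deletion estimates; and (iii) the weak-dependence conditions of Assumption \ref{ind1}, which bound $\max_i\sum_j\sigma_{ij}^2$ and the eigenvalue ratio $\lambda_{\max}(\mathbf R)/\lambda_{\min}(\mathbf R)$ and thereby guarantee that the exceedance events are sufficiently sparse and spread out for the Poisson limit to survive the conditioning. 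I expect that, exactly as in \cite{feng2022a}, Assumption \ref{ind1} is what is needed to push the conditional Poisson approximation through, with the heavy-tailed, sign-based nature of the problem entering only through the moments $\zeta_k$ already controlled by Assumption \ref{max2}.
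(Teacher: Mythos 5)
Your reductions are sound and coincide with the paper's own first steps: via the Bahadur representation (Lemma \ref{lemma1}), the expansion \eqref{eq:Tsum1} of $T_{SUM}$, and the identity $\sum_{i<j}\boldsymbol U_i^\top\boldsymbol U_j=\tfrac{n}{2}\bigl(\sum_k V_k^2-1\bigr)$, the problem does reduce to the joint law of the sum and the maximum of the squared coordinates of $\boldsymbol V$, exactly as in the paper's Equation \eqref{eq:independent_final}. The genuine gap is in your core step, the factorial-moment computation. To obtain $\sum_{|A|=m}\mathbb{E}\bigl[e^{\mathrm{i}tS_n}\prod_{k\in A}\ind{pV_k^2>2\log p-\log\log p+y}\bigr]\to\mathbb{E}\bigl[e^{\mathrm{i}tS_n}\bigr]\lambda^m$, you must evaluate, for each of the $\binom{p}{m}\asymp p^m/m!$ tuples, a joint probability of size $\asymp p^{-m}$ with relative error $o(1)$, i.e.\ absolute error $o(p^{-m})$. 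The tools you cite cannot deliver this: Theorem \ref{thm1} and Lemma \ref{lemma3} control only the Kolmogorov/hyperrectangle distance, with an error that is $o(1)$ but vastly larger than $p^{-m}$, so they say nothing about individual exceedance events; what is actually needed is a Cramér-type moderate-deviation expansion for the non-Gaussian sums $V_k$, uniformly over tuples and jointly with $S_n$, which you neither formulate nor prove. Moreover, your deletion argument addresses only the \emph{size} of the extreme coordinates' contribution to $S_n$ (indeed $O(\log p/\sqrt{\tr(\R^2)})=o(1)$); it does not decouple the exceedance events on a tuple $A$ from the remaining sum $S_n^{(-A)}$, which stay dependent both through $\R$ and through the constraint $\Vert\boldsymbol U_i\Vert=1$ (the coordinates of a sign vector are dependent even when $\R=\I_p$). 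For Gaussian vectors this decoupling is exact, via the regression decomposition of Lemma \ref{lemma_normal_decom} — that is how \cite{feng2022asymptotic}, and the paper's own proof of Theorem \ref{thm4}, proceed — but no analogue exists for sign vectors, and supplying a substitute is precisely the hard part that your proposal acknowledges as the "main obstacle" without resolving it. A further small flaw: $N_p\cd\mathrm{Poisson}(\lambda)$ does not follow from \eqref{eq:Tmax0} or Theorem \ref{thm:max_dist}; those results give only the void probability $\mathbb{P}(N_p=0)\to e^{-\lambda}$, so the Poisson convergence you take as input is itself part of what must be proved in the non-Gaussian setting.

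The paper's proof avoids non-Gaussian exceedance probabilities entirely. It smooths the maximum with $F_\beta(\boldsymbol z)=\beta^{-1}\log\bigl(\sum_j e^{\beta z_j}\bigr)$ and then runs a Lindeberg swap, replacing $\boldsymbol U_1,\dots,\boldsymbol U_n$ one at a time by independent Gaussian vectors $\boldsymbol Y_i\sim N(0,\mathbb{E}\boldsymbol U_1\boldsymbol U_1^\top)$: matching the first two conditional moments kills the first- and second-order Taylor terms of smooth test functions of $(W,V)$, and the third-order terms are bounded by $C\beta^2n^{-3/2}\log^3 n$ per swap (using the derivative bounds of Chernozhukov--Chetverikov--Kato) together with $\sum_d\mathbb{E}|W_d-W_{d,0}|^3\lesssim n^{-1/2}$, so with $\beta=n^{1/8}\log n$ the total discrepancy vanishes. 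This transfers the joint limit to the Gaussian surrogate, where Theorem 2 of \cite{feng2022asymptotic} applies verbatim. To push your route through instead, you would need to import moderate-deviation technology in the spirit of \cite{CLX14} and carry it jointly with the sum statistic — in effect re-proving the Gaussian-case theorem and its non-Gaussian extension simultaneously — which is exactly the work the swapping argument is designed to avoid.
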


According to Theorem \ref{thm3}, we suggest combining the corresponding $p$-values by using Cauchy Combination Method \citep{liu2020}, to wit,
\begin{align*}
    p_{CC}&=1-G[0.5\tan\{(0.5-p_{MAX})\pi\}+0.5\tan\{(0.5-p_{SUM})\pi\}],\\
    p_{MAX}&=1-F(T_{MAX}-2\log p+\log \log p),\\
    p_{SUM}&=1-\Phi(T_{SUM}/\hat{\sigma}_n),
\end{align*}
where $G(\cdot)$ is the CDF of the the standard Cauchy distribution. If the final $p$-value is less than some pre-specified significant level $\alpha\in(0,1)$, then we reject $H_0$.

Next, we consider the relationship between $T_{SUM}$ and $T_{MAX}$ under local alternative hypotheses:
\begin{equation}\label{H_1_comb}
H_1: \Vert\boldsymbol \theta\Vert=O(\zeta_1^{-2}\sigma),\Vert \R^{1/2}\boldsymbol\theta\Vert=o\left(\zeta_1^{-2} n p \sigma^2\right)\text{ and }\vert \mathcal A\vert=o\left(\frac{\lambda_{\min}(\R)[\text{tr}(\R^2)]^{1/2}}{(\log p)^C}\right),
\end{equation}
where $\mathcal A=\{i\mid \theta_i\not=0,1\leq i\leq p\}$, $\boldsymbol\theta=(\theta_1,\theta_2,\cdots,\theta_p)^\top$.
The following theorem establish the asymptotic independence between $T_{SUM}$ and $T_{MAX}$ under this special alternative hypothesis.
\begin{theorem}\label{thm4}
    Under Assumptions \ref{max1}-\ref{max4}, \ref{sum1}-\ref{ind1} and the alternative hypothesis (\ref{H_1_comb}), $T_{SUM}$ and $T_{MAX}$ are asymptotically independent, i.e.
\begin{align*}
&\mathbb P\left(T_{SUM}/\sigma_n\le x, T_{MAX}-2\log p+\log\log p\le y\right)\to \\
&\mathbb P\left(T_{SUM}/\sigma_n\le x\right)\mathbb P\left(T_{MAX}-2\log p+\log\log p\le y\right),
\end{align*}
as $n,p\to \infty$.
\end{theorem}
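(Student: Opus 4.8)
The plan is to reduce both statistics to their leading stochastic parts and to show that, under the local alternative \eqref{H_1_comb}, the signal $\boldsymbol\theta$ enters each statistic only through a \emph{deterministic} term; the random fluctuations that remain are then precisely those already shown to be asymptotically independent under $H_0$ in Theorem \ref{thm3}. First I would treat the max side. Because Lemma \ref{lemma1} is stated for a general center $\boldsymbol\theta$ (not only $\boldsymbol\theta=\mathbf 0$), its Bahadur representation gives
\begin{align*}
n^{1/2}\hat{\mathbf D}^{-1/2}\hat{\boldsymbol\theta}=n^{1/2}\mathbf D^{-1/2}\boldsymbol\theta+n^{-1/2}\zeta_1^{-1}\sum_{i=1}^n\boldsymbol U_i+C_n,\qquad \|C_n\|_\infty=o_p(1).
\end{align*}
The only non-random contribution is the shift $n^{1/2}\mathbf D^{-1/2}\boldsymbol\theta$, supported on $\mathcal A=\{i:\theta_i\neq0\}$. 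The sparsity bound $|\mathcal A|=o\bigl(\lambda_{\min}(\R)[\tr(\R^2)]^{1/2}(\log p)^{-C}\bigr)$ guarantees that at most $|\mathcal A|=o(p)$ coordinates carry this shift, so on the complement $\mathcal A^c$ (still of cardinality of order $p$) the argument of the maximum coincides with its $H_0$ form $n^{-1/2}\zeta_1^{-1}\sum_i\boldsymbol U_i$ up to the negligible $C_n$. Hence the extreme-value behaviour of $T_{MAX}$ is driven by the same coordinatewise linear forms $\{\sum_i U_{ij}\}_{j\in\mathcal A^c}$ as under the null, with the signal entering only as a deterministic translation of finitely relevant coordinates.

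Next I would treat the sum side. Under \eqref{H_1_comb} the conditions $\|\boldsymbol\theta\|=O(\zeta_1^{-2}\sigma)$ and $\|\R^{1/2}\boldsymbol\theta\|=o(\zeta_1^{-2}np\sigma^2)$ imply the hypotheses of \eqref{H1_dense} (using $\underline d\le d_j^2$ from Assumption \ref{max3}), so Theorem \ref{lemma5} applies and yields
\begin{align*}
\frac{T_{SUM}}{\sigma_n}=\frac{\zeta_1^2\,\boldsymbol\theta^\top\mathbf D^{-1}\boldsymbol\theta}{\sigma_n}+\frac{1}{\sigma_n}\cdot\frac{2}{n(n-1)}\sum_{i<j}\boldsymbol U_i^\top\boldsymbol U_j+o_p(1).
\end{align*}
The first term is a deterministic mean shift, while the fluctuation is the same noise quadratic form $\frac{2}{n(n-1)}\sum_{i<j}\boldsymbol U_i^\top\boldsymbol U_j$ that appears under $H_0$. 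The work on this side is to verify that the signal–noise cross terms arising when the leave-two-out $U$-statistic is expanded around $\boldsymbol\theta$ are $o_p(\sigma_n)$; the two norm bounds in \eqref{H_1_comb} are calibrated precisely so that these cross terms vanish and only the mean shift survives.

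Having isolated the deterministic parts, I would conclude by appeal to Theorem \ref{thm3}. After subtracting the deterministic mean shift from $T_{SUM}/\sigma_n$ and the deterministic translation from the argument of $T_{MAX}$, the two statistics are driven respectively by $\sum_{i<j}\boldsymbol U_i^\top\boldsymbol U_j$ and by $\{\sum_i U_{ij}\}_{j\in\mathcal A^c}$, exactly as in the null setting. Theorem \ref{thm3} shows these are asymptotically independent, so the joint limit is the product $\Phi(x)F(y)$; since adding deterministic quantities to a pair of asymptotically independent random variables preserves their asymptotic independence, the claimed factorization follows. The main obstacle is the uniform control of the two negligible pieces: showing that the Bahadur remainder $C_n$ of Lemma \ref{lemma1} is still $o_p(1)$ in $\|\cdot\|_\infty$ at a \emph{nonzero} center $\boldsymbol\theta$, and that the signal–noise cross terms in the sum-type $U$-statistic are $o_p(\sigma_n)$. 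Both rest on the local scaling imposed by \eqref{H_1_comb}, and in particular the sparsity bound on $|\mathcal A|$ is what keeps the few shifted coordinates in the maximum from becoming correlated with the sum-type fluctuation.
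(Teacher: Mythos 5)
Your first two steps do match the paper's opening moves: the sum side is expanded exactly as in Theorem \ref{lemma5}, giving the noise term $\frac{2}{n(n-1)}\sum_{i<j}\boldsymbol U_i^\top\boldsymbol U_j$ plus the deterministic shift $\zeta_1^2\boldsymbol\theta^\top\mathbf D^{-1}\boldsymbol\theta$, and the max side uses the Bahadur representation of Lemma \ref{lemma1} with the mean shift $\zeta_1\mathbf D^{-1/2}\boldsymbol\theta$ added to each summand. The gap is in your final reduction to Theorem \ref{thm3}. Under the alternative, $T_{MAX}$ is (up to negligible terms) $\max_j(\varphi_j+\mu_j)^2$, where $\boldsymbol\varphi$ is the normalized noise vector and $\mu_j$ vanishes off $\mathcal A$; this is \emph{not} of the form ``null max statistic plus a deterministic quantity''. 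It remains a random function of the signal-support coordinates $\boldsymbol\varphi_{\mathcal A}$, which can — and, whenever the test has nontrivial power, typically does — attain the maximum. Asymptotic independence is a property of two particular functionals: knowing from Theorem \ref{thm3} that $\Vert\boldsymbol\varphi\Vert^2$ is asymptotically independent of $\max_j|\varphi_j|$ says nothing about its joint behaviour with the \emph{different} functional $\max_j|\varphi_j+\mu_j|$, which depends on the vector $\boldsymbol\varphi_{\mathcal A}$ in a way the null max does not. So the principle ``adding deterministic quantities preserves asymptotic independence'' does not apply here: the deterministic quantity sits \emph{inside} the maximum and cannot be pulled outside it.

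What is actually needed — and what the paper proves — is that the sum-type noise is asymptotically independent of the \emph{pair} $(\boldsymbol\varphi_{\mathcal A},\ \max_{j\in\mathcal A^c}\varphi_j)$, i.e., of the whole vector of signal-support coordinates jointly with the null-type max on the complement. After the Lindeberg swap to the Gaussian analogue (as in the proof of Theorem \ref{thm3}), the paper writes $\Vert\boldsymbol\varphi\Vert^2=\Vert\boldsymbol\varphi_{\mathcal A}\Vert^2+\Vert\boldsymbol\varphi_{\mathcal A^c}\Vert^2$, notes that the first piece is negligible relative to $[\tr(\R^2)]^{1/2}$ by the cardinality bound in (\ref{H_1_comb}), and then applies the conditional Gaussian decomposition of Lemma \ref{lemma_normal_decom}: $\boldsymbol\varphi_{\mathcal A^c}=\boldsymbol E+\boldsymbol F$ with $\boldsymbol E$ exactly independent of $\boldsymbol\varphi_{\mathcal A}$ and $\boldsymbol F=\mathbf\Sigma_{U,\mathcal A^c,\mathcal A}\mathbf\Sigma^{-1}_{U,\mathcal A,\mathcal A}\boldsymbol\varphi_{\mathcal A}$. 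The crux is then the estimate $\vert\boldsymbol F^\top\boldsymbol F+2\boldsymbol E^\top\boldsymbol F\vert=o_p([\tr(\R^2)]^{1/2})$, proved as in Lemma S.7 of \cite{feng2022asymptotic}, and this is exactly where the condition $\vert\mathcal A\vert=o\left(\lambda_{\min}(\R)[\tr(\R^2)]^{1/2}/(\log p)^C\right)$ together with Assumption \ref{ind1} is consumed. Your proposal correctly intuits that the sparsity bound ``keeps the shifted coordinates from becoming correlated with the sum-type fluctuation'', but it never converts that intuition into an estimate; without the conditioning decomposition and the bound on $\boldsymbol F$, the appeal to Theorem \ref{thm3} does not close the argument.
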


According to \cite{li2023}, the Cauchy combination-based test has more power than the test based on the minimum of $p_{MAX}$ and $p_{SUM}$, which is also known as the minimal p-value combination. This is represented as $\beta_{M\wedge S, \alpha}=P(\min\{{\rm p}_{MAX},{\rm p}_{SUM}\}\leq 1-\sqrt{1-\alpha})$.

It is clear that:
\begin{align}\label{power_H1}
\beta_{M\wedge S, \alpha} &\ge P(\min\{{\rm p}_{MAX},{\rm p}_{SUM}\}\leq \alpha/2)\nonumber\\
&= \beta_{MAX,\alpha/2}+\beta_{SUM,\alpha/2}-P({\rm p}_{MAX}\leq \alpha/2, {\rm p}_{SUM}\leq \alpha/2)\nonumber\\
&\ge \max\{\beta_{MAX,\alpha/2},\beta_{SUM,\alpha/2}\}.
\end{align}

On the other hand, under the local alternative hypothesis (\ref{H_1_comb}), we have:
\begin{align}\label{power_H1np}
\beta_{M\wedge S, \alpha} \ge \beta_{MAX,\alpha/2}+\beta_{SUM,\alpha/2}-\beta_{MAX,\alpha/2}\beta_{SUM,\alpha/2}+o(1),
\end{align}
which is due to the asymptotic independence implied by Theorem \ref{thm4}.

For a small $\alpha$, the difference between $\beta_{MAX,\alpha}$ and $\beta_{MAX,\alpha/2}$ is small, and the same applies to $\beta_{SUM,\alpha}$. Therefore, according to equations \eqref{power_H1} and \eqref{power_H1np}, the power of the adaptive test is at least as large as, or even significantly larger than, that of either the max-type or sum-type test. For a detailed comparison of the performance of each test type under varying conditions of sparsity and signal strength, please refer to Table 1 in \cite{ma2024testing}.

\section{Simulation}
In this section, we incorporated various methods into our study:
\begin{itemize}
\item the proposed test $T_{MAX}$, referred as SS-MAX;
\item sum-type test proposed by \cite{feng2016}, referred as SS-SUM;
\item the proposed test $T_{CC}$, referred as SS-CC;
\item  max-type method proposed by \cite{CLX14}
, referred as
MAX;
\item sum-type method proposed by \cite{srivastava2009test}, referred as
SUM;
\item combination test proposed by \cite{feng2022asymptotic}, referred as
COM.
\end{itemize}

The following scenarios are firstly considered.
\begin{itemize}
\item[(I)] Multivariate normal distribution. $\X_i\sim N(\bm\theta,\bms)$.
\item[(II)] Multivariate $t$-distribution $t_{p,4}$.   $\X_{i}$'s are generated from $t_{p,4}$ with location parameter $\bth$ and scatter matrix $\bms$.
\item[(III)] Multivariate mixture normal distribution $\mbox{MN}_{p_n,\gamma,9}$. $\X_{i}$'s are generated from  $\gamma
f_{p_n}(\bm\theta,\bms)+(1-\gamma)f_{p_n}(\bm\theta,9\bms)$, denoted
by $\mbox{MN}_{p_n,\gamma,9}$, where $f_{p_n}(\cdot;\cdot)$ is the
density function of $p_n$-variate multivariate normal distribution.
$\gamma$ is chosen to be 0.8.
\end{itemize}
Here we consider the scatter matrix $\bms=(0.5^{|i-j|})_{1\le i,j\le p}$.   Two sample
sizes $n = 50, 100$ and three dimensions $p = 200, 400, 600$ are considered. All the findings in this section are derived from 1000 repetitions. Table \ref{tab:size} presents the empirical sizes of the six tests mentioned above. It was observed that the spatial-sign based tests–SS-MAX, SS-SUM, and SS-CC–are able to effectively manage the empirical sizes in a majority of scenarios. {     Under the normality assumption, the Type I error of the MAX method increases as the ratio \( p/n \) grows. This may be due to the component \( n \bar{x}_i/\hat{\sigma}_{ii} \) of the max statistic following a \( t(n) \) distribution, which deviates from the normal distribution. In contrast, the SS-MAX method is more suitable for heavy-tailed data. When \( p \) is fixed and \( n \) increases, leading to a closer approximation to the normal distribution, the MAX method exhibits improved control over the Type I error.
} However, when dealing with distributions that are not multivariate normal, the SUM test tends to have empirical sizes that fall below the nominal level. Similarly, the COM test also exhibits smaller sizes when operating under non-normal distributions.

To compare the power performance of each test, we consider $\bm \theta=(\kappa,\kappa,\kappa,0,\cdots,0)$ where the first $s$ components of $\bm\theta$ are all equal to $\kappa=\sqrt{0.5/s}$. Figure \ref{power} illustrates the power curves for each test. In the case of the multivariate normal distribution, SS-SUM and SUM exhibit similar performance, aligning with the findings of \cite{feng2016}. The spatial-sign based max-type test procedure, SS-MAX, is slightly less powerful than its mean-based counterpart, MAX. The two combination type test procedures demonstrate comparable performance in this scenario. However, when dealing with non-normal distributions, the spatial-sign based test procedures surpass the mean-based ones. Moreover, the newly proposed test, SS-CC, outperforms the others in most scenarios. In extremely sparse scenarios ($s<5$), SS-CC’s performance is akin to SS-MAX. In highly dense scenarios ($s>10$), SS-CC performs similarly to SS-SUM. However, when the signal is neither very sparse nor very dense, SS-CC proves to be the most effective among all test procedures. This underscores the superiority of our proposed max-sum procedures, not only in handling signal sparsity but also in dealing with heavy-tailed distributions.

        \begin{table}[!ht]
        \begin{center}
    	\caption{  {Empirical size comparison of various tests with a nominal level 5\%. \label{tab:size}} }
                    \vspace{0.5cm}
                     \renewcommand{\arraystretch}{1.2}
                     \setlength{\tabcolsep}{10pt}
    {
    		\begin{tabular}{l|ccc|ccc}
    			\hline\hline
    			& \multicolumn{3}{c}{$n=50$}&\multicolumn{3}{c}{$n=100$} \\ \hline
    $p$&200&400&600&200&400&600\\ \hline
    &\multicolumn{6}{c}{Multivariate Normal Distribution}\\ \hline
SS-MAX&0.051&0.061&0.049&0.025&0.04&0.032\\
SS-SUM&0.061&0.056&0.041&0.06&0.059&0.068\\
SS-CC&0.071&0.065&0.048&0.057&0.056&0.056\\
MAX&0.095&0.125&0.116&0.052&0.081&0.072\\
SUM&0.076&0.086&0.054&0.069&0.064&0.081\\
COM&0.095&0.108&0.089&0.063&0.072&0.058\\ \hline
    &\multicolumn{6}{c}{Multivariate $t_3$ Distribution}\\ \hline
SS-MAX&0.063&0.062&0.063&0.061&0.063&0.058\\
SS-SUM&0.067&0.053&0.064&0.062&0.064&0.053\\
SS-CC&0.058&0.061&0.068&0.058&0.052&0.059\\
MAX&0.044&0.052&0.047&0.033&0.036&0.04\\
SUM&0.005&0.001&0.001&0.002&0.001&0\\
COM&0.021&0.03&0.027&0.019&0.014&0.019\\ \hline
    &\multicolumn{6}{c}{Multivariate Mixture Normal Distribution}\\ \hline
SS-MAX&0.056&0.061&0.07&0.037&0.037&0.044\\
SS-SUM&0.066&0.05&0.051&0.054&0.058&0.061\\
SS-CC&0.067&0.058&0.064&0.052&0.046&0.059\\
MAX&0.037&0.042&0.056&0.031&0.038&0.03\\
SUM&0.004&0&0&0.007&0.002&0\\
COM&0.021&0.019&0.028&0.013&0.02&0.01\\ \hline \hline
    	\end{tabular}}
    \end{center}
    \end{table}

\begin{figure}[!ht]
\includegraphics[width=1\textwidth]{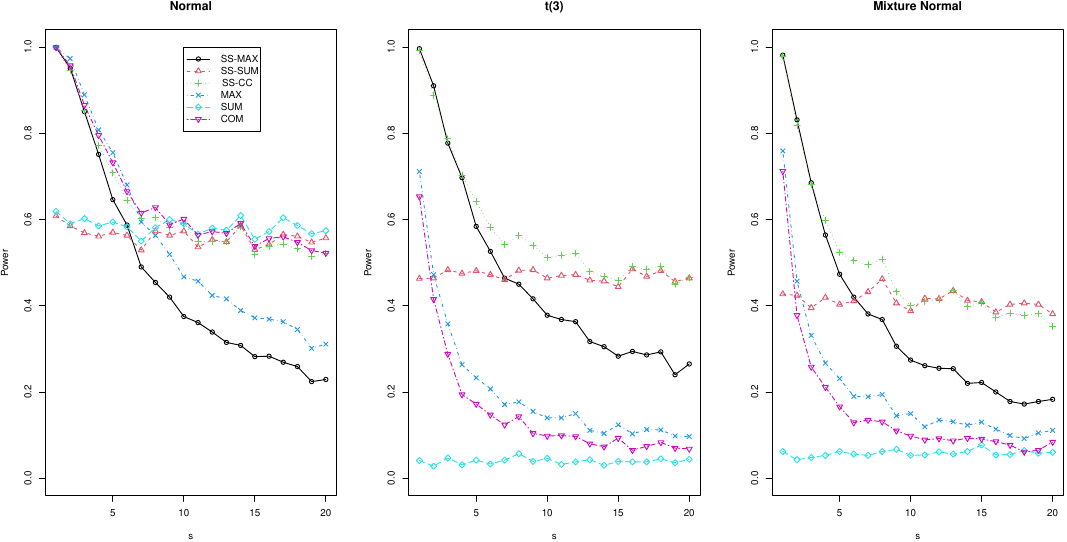}
\caption{Power of tests with different sparsity levels over $(n,p)=(100,200)$. \label{power}}
\end{figure}

Next, we consider the power comparison of those tests under different signal strength. Here we consider three sparsity level $s=2,20,50$ and the signal parameter $\kappa=\sqrt{\delta/s}$. Figures \ref{power1} through \ref{power3} present the power curves for various testing methods under Scenarios I to III. As the signal strength increases, the power of all tests also increases. Despite the presence of heavy-tailed distributions, spatial-sign based testing methods continue to surpass those based on means. Among all the tests, the proposed CC test consistently delivers the best performance.

\begin{figure}[!ht]
\includegraphics[width=1\textwidth]{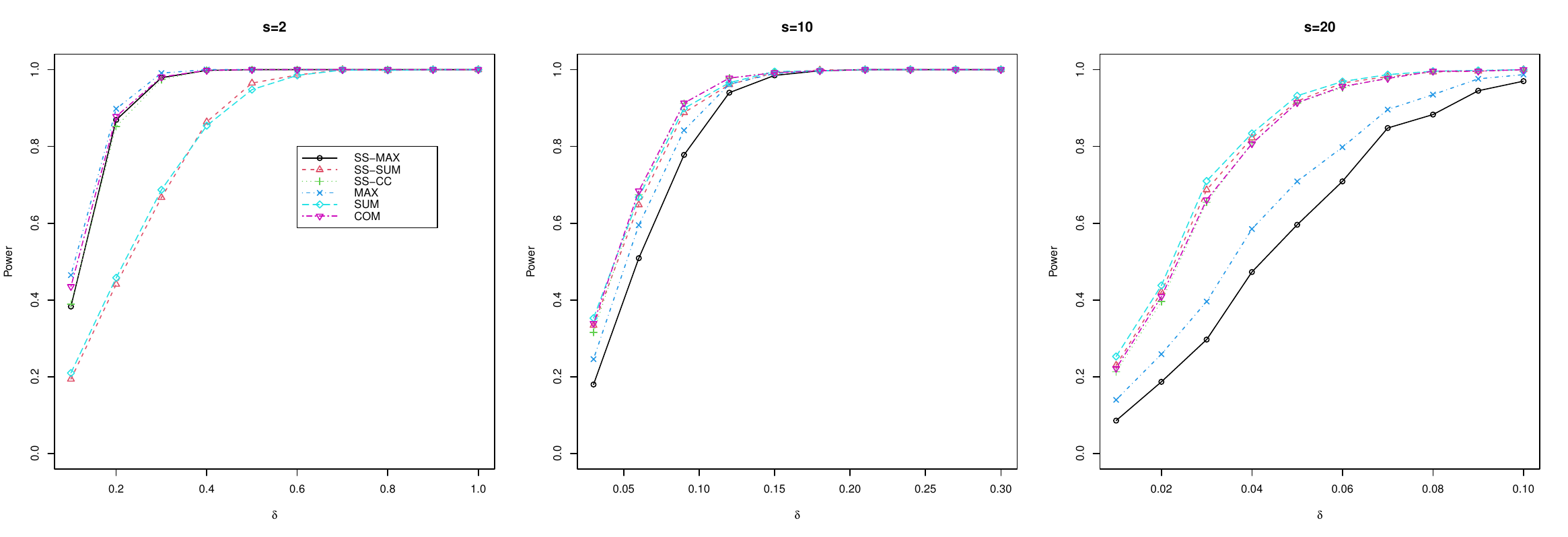}
\caption{Power of tests with different signal strength for multivariate normal distribution over $(n,p)=(100,200)$. \label{power1}}
\end{figure}

\begin{figure}[!ht]
\includegraphics[width=1\textwidth]{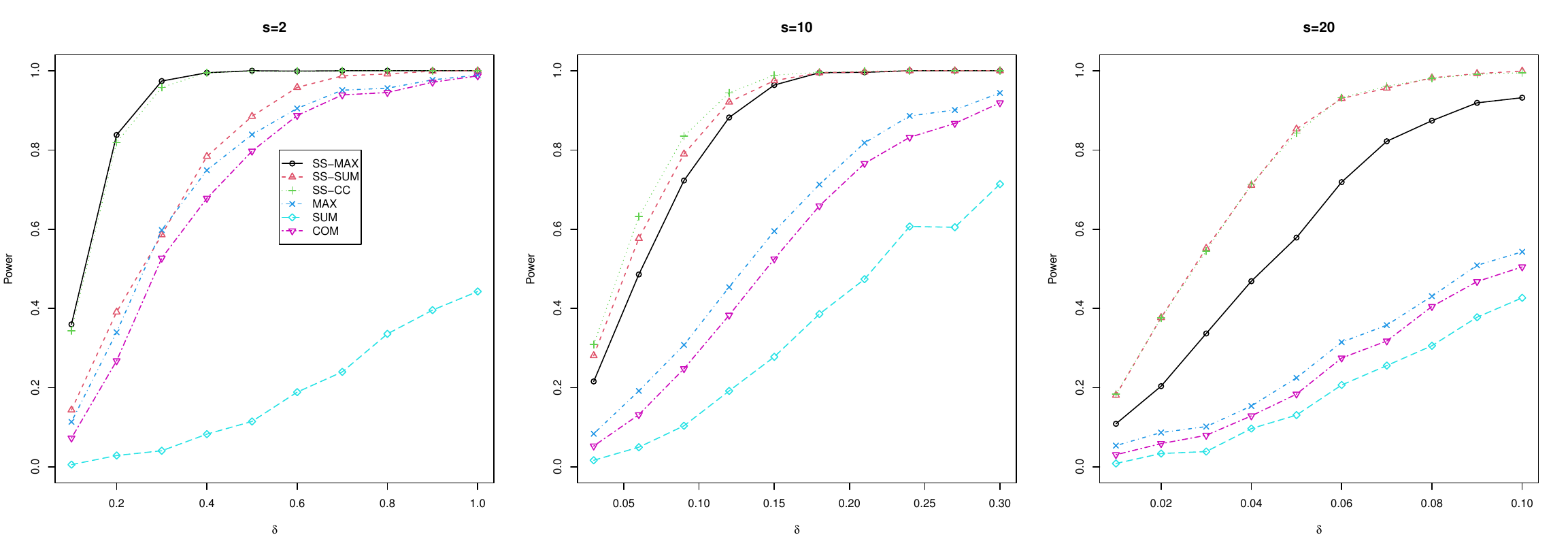}
\caption{Power of tests with different signal strength for multivariate $t_3$ distribution over $(n,p)=(100,200)$. \label{power2}}
\end{figure}

\begin{figure}[!ht]
\includegraphics[width=1\textwidth]{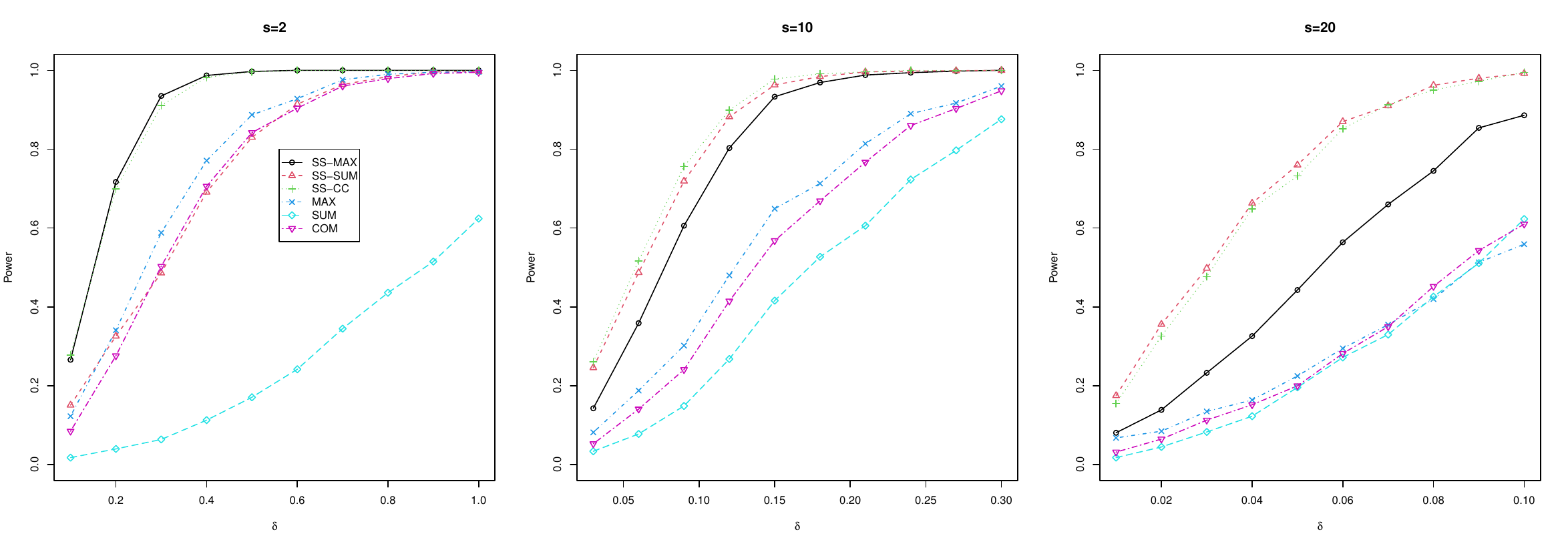}
\caption{Power of tests with different signal strength for multivariate mixture normal distribution over $(n,p)=(100,200)$. \label{power3}}
\end{figure}

As shown in \cite{feng2016}, for the sum-type test procedure, SS-SUM is more powerful than the non scalar-invariant test \citep{wang2015high}. Here we also compare our proposed test $T_{MAX}$ with \cite{cheng2023}'s test (abbreviated as CPZ hereafter) to show the importance of scalar-invariant for max-type test procedure. We consider two scatter matrix case for $\bm \Sigma$: (i) $\bm \Sigma=(0.5^{|i-j|})_{1\le i,j\le p}$; (ii) $\bm \Sigma=\D^{1/2}\R \D^{1/2}, \R=(0.5^{|i-j|})_{1\le i,j\le p}, \D=\diag\{d_1,\cdots,d_p\}$ where $d_i=1, i\le p/2, d_i=3, i >p/2$. The other settings are all the same as above. Table \ref{tab:size2} presents the empirical sizes of the SS-MAX and CPZ tests. Both tests are capable of controlling the empirical sizes in the majority of cases. Moreover, we conduct a power comparison of these two max-type tests under identical settings as previously mentioned, but with two distinct scatter matrix cases. Figures \ref{power4} and \ref{power5} depict the power curves of SS-MAX and CPZ under scatter matrix cases (i) and (ii), respectively. We observe that SS-MAX performs comparably to CPZ when all elements of the diagonal matrix of $\bm \Sigma$ are equal. However, SS-MAX exhibits greater power than CPZ when the elements of the diagonal matrix of the scatter matrix are unequal, underscoring the necessity of scalar-invariance.

        \begin{table}[!ht]
        \begin{center}
    	\caption{  {Empirical size comparison of SS-MAX and CPZ tests with a nominal level 5\%. \label{tab:size2}} }
                    \vspace{0.5cm}
                     \renewcommand{\arraystretch}{1.2}
                     \setlength{\tabcolsep}{10pt}
    {
    		\begin{tabular}{l|ccc|ccc}
    			\hline\hline
    			& \multicolumn{3}{c}{$n=50$}&\multicolumn{3}{c}{$n=100$} \\ \hline
    $p$&200&400&600&200&400&600\\ \hline
&\multicolumn{6}{c}{Scatter Matrix Case (i)}\\ \hline
&\multicolumn{6}{c}{Multivariate Normal Distribution}\\ \hline
SS-MAX&0.046&0.04&0.054&0.03&0.036&0.058\\
CPZ&0.076&0.054&0.076&0.056&0.08&0.088\\ \hline
&\multicolumn{6}{c}{Multivariate $t_3$ Distribution}\\ \hline
SS-MAX&0.06&0.09&0.096&0.06&0.054&0.054\\
CPZ&0.06&0.064&0.06&0.052&0.07&0.038\\ \hline
&\multicolumn{6}{c}{Multivariate Mixture Normal Distribution}\\ \hline
SS-MAX&0.072&0.06&0.072&0.038&0.052&0.038\\
CPZ&0.086&0.08&0.074&0.078&0.066&0.046\\ \hline \hline
&\multicolumn{6}{c}{Scatter Matrix Case (ii)}\\ \hline
&\multicolumn{6}{c}{Multivariate Normal Distribution}\\ \hline
SS-MAX&0.038&0.054&0.032&0.044&0.044&0.032\\
CPZ&0.07&0.052&0.074&0.058&0.056&0.05\\ \hline
&\multicolumn{6}{c}{Multivariate $t_3$ Distribution}\\ \hline
SS-MAX&0.064&0.08&0.068&0.058&0.068&0.064\\
CPZ&0.07&0.05&0.064&0.064&0.086&0.066\\ \hline
&\multicolumn{6}{c}{Multivariate Mixture Normal Distribution}\\ \hline
SS-MAX&0.062&0.056&0.048&0.038&0.036&0.05\\
CPZ&0.068&0.05&0.05&0.084&0.052&0.052\\ \hline \hline
    	\end{tabular}}
    \end{center}
    \end{table}

\begin{figure}[!ht]
\includegraphics[width=1\textwidth]{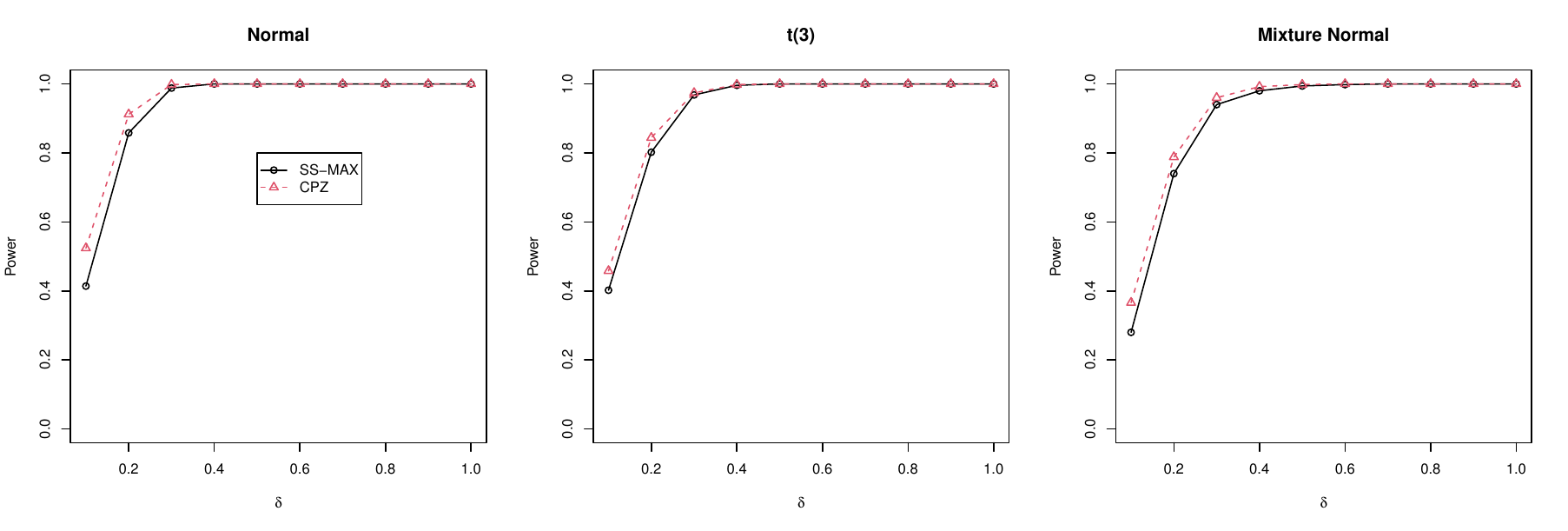}
\caption{Power of max-type tests with different signal strength for matrix case (i) over $(n,p)=(100,200)$. \label{power4}}
\end{figure}

\begin{figure}[!ht]
\includegraphics[width=1\textwidth]{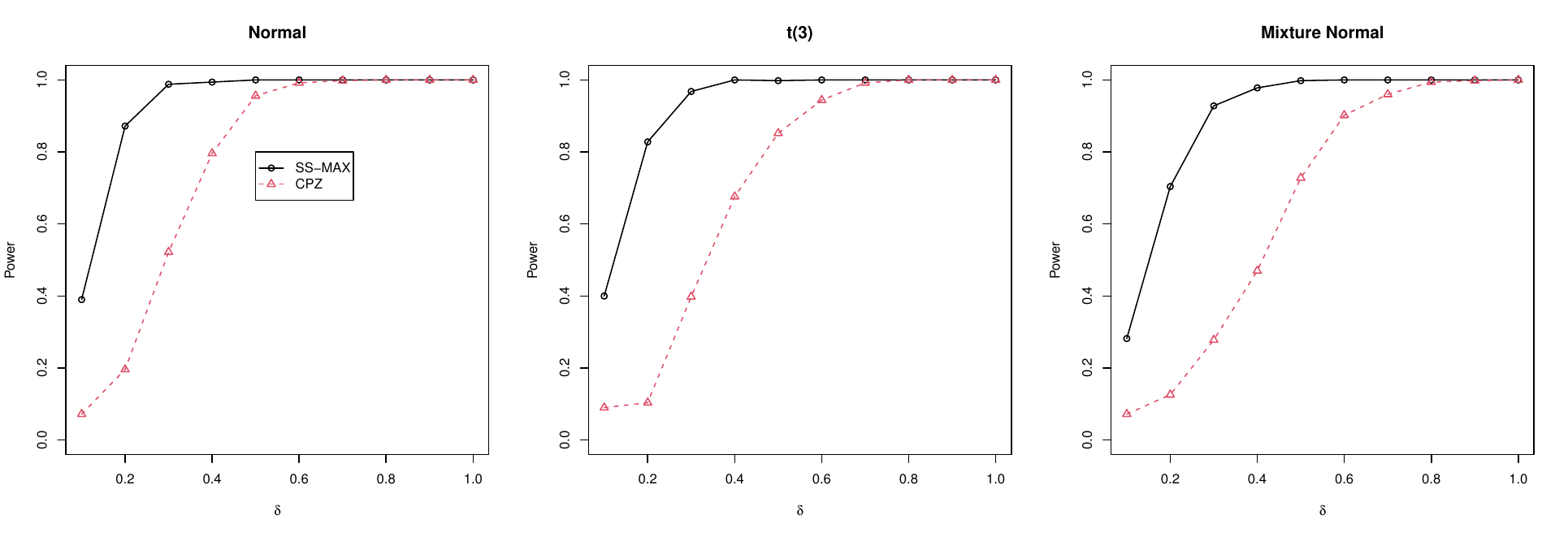}
\caption{Power of tests with different signal strength for matrix case (ii) over $(n,p)=(100,200)$. \label{power5}}
\end{figure}

\section{Application}
\subsection{US stocks data}\label{sec:5.1}
In this section, we utilize our methods to tackle a financial pricing problem. Our goal is to test whether the expected returns of all assets are equivalent to their respective risk-free returns. Let $X_{ij}\equiv R_{ij}-\textrm{rf}_i$ denote the excess return of the $j$th asset at
time $i$ for $i=1,\cdots,n$ and $j=1,\cdots,p$, where $R_{ij}$ is the
return on asset $j$ during period
$i$ and $\textrm{rf}_i$ is the risk-free return rate of all asset during period $i$. We
study the following pricing model
\begin{align}\label{rf}
X_{ij}=\mu_j+\xi_{ij},
\end{align}
for $i=1,\cdots,n$ and $j=1,\cdots,p$, or, in vector form,
$\X_i=\bm \mu+\bm\xi_i$,  where $\X_i=(X_{i1}, \dots, X_{ip})\trans$,
$\bm \mu=(\mu_1, \dots, \mu_p)\trans$, and $\bm\xi_i=(\xi_{i1}, \dots,
\xi_{ip})\trans$ is the zero-mean error vector. We consider the following hypothesis
\begin{align*}
H_0: \bm \mu=\bm 0~~\text{versus}~~ H_1: \bm \mu\not=\bm 0.
\end{align*}

We examined the weekly return rates of stocks that are part of the S\&P 500 index from January 2005 to November 2018. The weekly data were derived from the stock prices every Friday. Over time, the composition of the index changed and some stocks were introduced during this period. Therefore, we only considered a total of 424 stocks that were consistently included in the S\&P 500 index throughout this period. We compiled a total of 716 weekly return rates for each stock during this period, excluding Fridays that were holidays. Given the possibility of autocorrelation in the weekly stock returns, we applied the Ljung-Box test \cite{ljung1978measure} at a 0.05 level for zero autocorrelations to each stock. We retained 280 stocks for which the Ljung-Box test at a 0.05 level was not rejected. It's important to note that if we had used all 424 stocks, there might be autocorrelation between observations, which would violate our assumption and necessitate further studies.

Figure \ref{sd} show the histogram of standard deviation of those 280 securities. We found that the variances of those assets are obviously not equal. So the scalar-invariant test procedure is preferred. Thus, We apply the above six test procedures--SS-SUM,SS-MAX,SS-CC,MAX,SUM,COM on the total samples. All the tests reject the null hypothesis significantly. To evaluate the performance of our proposed tests and other competing tests for both small and large sample sizes, we randomly sampled $n=52K$ observations from the 716 weekly returns, where $K$ ranges from 3 to 8. This experiment was repeated 1000 times for each $n$ value.

Table \ref{R1} presents the rejection rates of six tests. We discovered that spatial-sign based test procedures outperform mean-based test procedures. This is primarily due to the heavy-tailed nature of asset returns. Figure \ref{qq} displays Q-Q plots of the weekly return rates of some stocks in the S\&P 500 index. We observed that all data deviate from a normal distribution and exhibit heavy tails. Additionally, sum-type tests perform better than max-type test procedures, mainly because the alternative is dense. Figure \ref{ttest} illustrates the $t$-test statistic for each stock. We noticed that many $t$-test statistics are larger than 2, and most of them are positive. Among these tests, the SS-CC test performs the best. Although the SS-SUM outperforms the SS-MAX, the SS-MAX still retains some power in all cases. As indicated in the theoretical results in Subsection 3.2, our proposed Cauchy Combination would be more powerful than both max-type and sum-type tests in this scenario. Therefore, the application of real data also demonstrates the superiority of our proposed maxsum test procedure.

It's worth noting that the rejection of the null hypothesis, which suggests that return rates are not solely composed of risk-free rates on average, aligns with the perspectives of numerous economists. Indeed, the consideration of a non-zero excess return rate and the attempt to model it has spurred a vast amount of research on factor pricing models in finance \citep{Sharpe1964CAPITALAP,Fama1993,Fama2015}. These models, which have many practical applications, operate under the Arbitrage Pricing Theory \citep{Rose1976}. Recently, numerous studies have focused on the high-dimensional alpha test under the linear factor pricing model, including works by \cite{Fan2015,Pesaran2017,feng2022high,liu2023high}. Notably, \cite{liu2023high} proposed a spatial-sign based sum-type test procedure for testing alpha for heavy-tailed distributions. It would be intriguing to extend the methods presented in this paper to propose a spatial sign based max-type and maxsum-type test procedures for testing alpha. This is an area that warrants further exploration.

\begin{figure}[!ht]
\begin{center}
\includegraphics[width=0.6\textwidth]{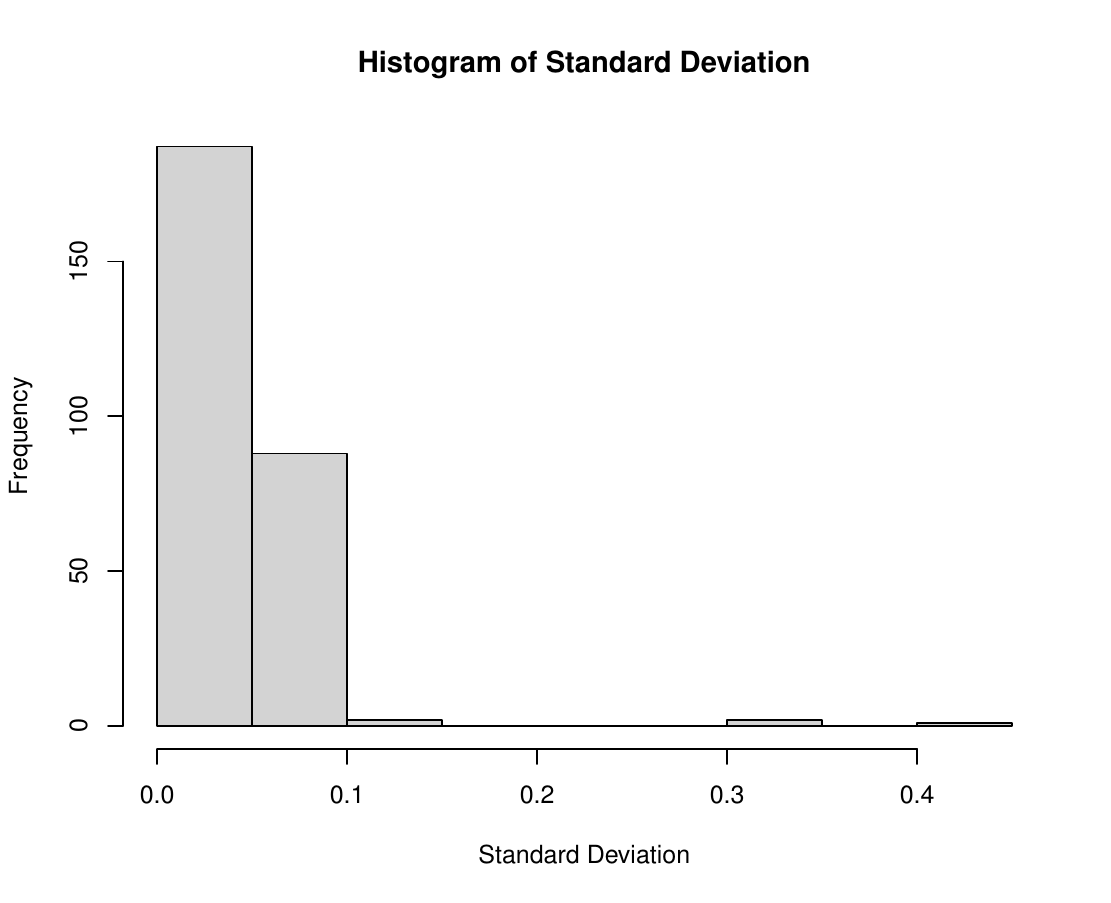}
\caption{Histogram of standard deviation of US securities. \label{sd}}
\end{center}
\end{figure}

\begin{figure}[!ht]
\includegraphics[width=\textwidth]{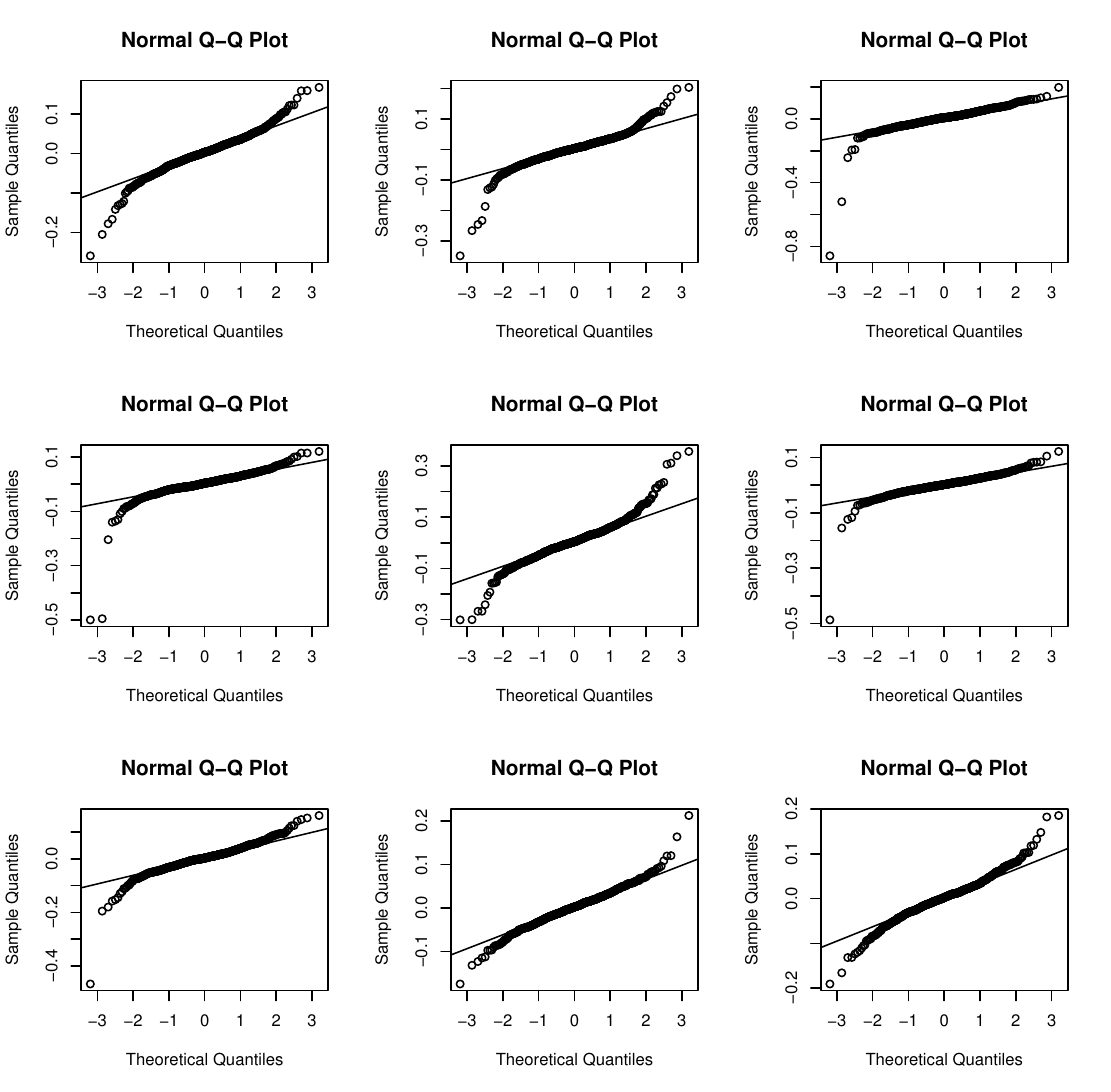}
\caption{Q-Q plots of the weekly return rates of some stocks with heavy-tailed distributions in the S\&P500 index. \label{qq}}
\end{figure}

\begin{table}[htbp]
 \centering
 \caption{{\color{black}The rejection rates of testing excess returns of
     the S\&P stocks for $p=280$ and $n=52K$ with $K=3,\cdots,8$.
For each $n$, we sampled  1000 data sets.} }
 \vspace{0.2cm}
 \begin{tabular}{ccccccc}\hline \hline
    &SS-MAX &SS-SUM & SS-CC&MAX&SUM&COM \\
$n=156$&0.295&0.361&0.380&0.124&0.219&0.204\\
$n=208$&0.364&0.448&0.458&0.128&0.217&0.206\\
$n=260$&0.424&0.542&0.556&0.140&0.276&0.246\\
$n=312$&0.506&0.633&0.645&0.137&0.272&0.236\\
$n=364$&0.652&0.738&0.758&0.143&0.317&0.282\\
$n=416$&0.753&0.821&0.843&0.163&0.339&0.303\\
\hline \hline
\end{tabular}\label{R1}
\end{table}

\begin{figure}[!ht]
\begin{center}
\includegraphics[width=0.6\textwidth]{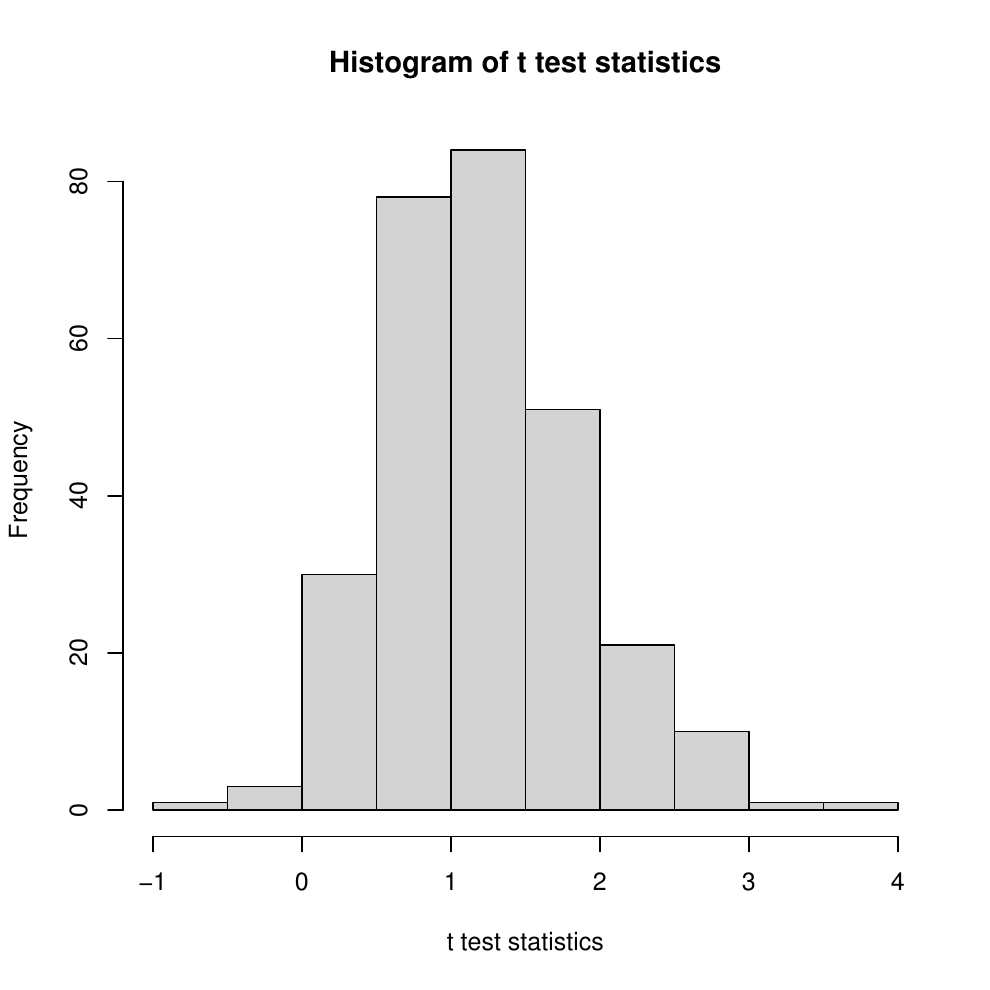}
\caption{$t$ test statistics of the weekly excess return rates of each stock. \label{ttest}}
\end{center}
\end{figure}
\subsection{  Paired colon dataset}
Another important application of the one-sample test discussed in this paper is assessing the mean difference between two paired samples. In this section, we utilize our methods to test the mean difference for paired samples and consider the colon dataset provided by \cite{alon1999broad}. The colon dataset includes gene expression data from 40 colon cancer patients, comprising 22 paired samples from normal and tumor colon tissue and additional 18 samples from tumor tissue, with each sample containing 2,000 gene expressions. Our objective is to assess whether the mean gene expression levels differ between normal and tumor tissues. To streamline the analysis, we exclude unpaired samples and retain only the $n=22$ paired normal and tumor tissue samples. To compare the methods, MAX, SUM and COM methods are also displayed. We observed that SUM test fails to reject the null hypothesis while others strongly reject if the significant level is set to $\alpha=0.05$, see Figure \ref{R2}. It is aligned with the simulation results of non-normal cases, suggesting a significant difference in gene expression between normal and tumor tissues, warranting further investigation. In addition, we find that the COM test successfully rejects the null hypothesis, whereas the SUM test does not.  This indicates that methods based on the theorem of the independence between the test statistic can enhance test power while ensuring that Type I error remains controlled, particularly when data sparsity is uncertain.

\begin{table}[htbp]
 \centering
 \caption{{\color{black}The p-values of testing the difference of
     gene expression levels of the normal and tumor colon tissues.} }
 \vspace{0.2cm}
 \begin{tabular}{cccccc}\hline \hline
    SS-MAX &SS-SUM & SS-CC&MAX&SUM&COM \\
$6.88\times 10^{-5}$&$6.71\times 10^{-7}$&$1.33\times 10^{-6}$&$1.68\times 10^{-3}$&$9.62\times 10^{-2}$&$3.30\times 10^{-3}$\\
\hline \hline
\end{tabular}\label{R2}
\end{table}

\section{Conclusion}
In this paper, we address a one-sample testing problem in high-dimensional settings for heavy-tailed distributions. We begin by providing a Bahadur representation and Gaussian approximation of the spatial median estimator, as discussed in \cite{feng2016multivariate}. Following this, we introduce a spatial-sign based max-type test procedure for sparse alternatives and establish the limit null distribution and consistency of the proposed max-type test statistic. Next, we reformulate the sum-type test statistic, originally proposed by \cite{feng2016}, under a general model. This sum-type test exhibits superior performance under dense alternatives. Finally, we demonstrate the asymptotic independence between the aforementioned max-type test statistic and the sum-type test statistic, given some mild conditions. We then propose a Cauchy combination test procedure, which performs exceptionally well under both sparse and dense alternatives. Both simulation studies and real data applications underscore the superiority of the proposed maxsum-type test procedure.

We propose several directions for future research. Firstly, the sum-type test statistic in \cite{feng2016} only takes into account the direction of the sample, neglecting the information of the sample's radius. \cite{feng2021inverse} introduced a more powerful inverse norm sign test. It would be intriguing to derive a max-type test statistic that also considers the radius of the sample. Furthermore, it remains an open question whether this new max-type test statistic maintains asymptotic independence with the sum-type test statistic proposed by \cite{feng2021inverse}.

Secondly, the newly proposed methods can be extended to address other high-dimensional testing problems. These include the high-dimensional two-sample location problem \citep{chen2010two,feng2016multivariate}, high-dimensional covariance matrix tests \citep{chen2010tests,li2012two,cutting2017testing,cheng2019testing}, testing the martingale difference hypothesis in high dimension \citep{chang2023testing} and high-dimensional white noise test \citep{paindaveine2016high,chang2017testing,feng2022testingwhite,zhao2023spatial}. Additionally, the alpha test in the high-dimensional linear factor pricing model is a significant problem that has been explored in practical applications.

Thirdly, our paper's theoretical results are predicated on the assumption of identical and independent distribution. However, there may occasionally be auto-correlations among the sample sizes. Recent literature, such as \cite{zhang2018gaussian} and \cite{chang2024central}, has considered the Gaussian approximation of the sample mean under a dependent assumption. Therefore, it would be intriguing to establish the Bahadur representation and Gaussian approximation of the spatial median in the context of dependent observations. Building on these findings, we can also suggest the implementation of max-type and maxsum-type testing methods for addressing high-dimensional location problem in the context of dependent observations \citep{ayyala2017mean,ma2024testing}.

\subsection*{Acknowledgement}
The authors thank the editor, the associate editor and three anonymous referees for helpful comments and discussions. This paper is partially supported by Shenzhen Wukong Investment Company, Tianjin Science Fund for Outstanding Young Scholar (23JCJQJC00150), the Fundamental Research Funds for the Central Universities under Grant No. ZB22000105 and 63233075, the China National Key R\&D Program (Grant Nos. 2019YFC1908502, 2022YFA1003703, 2022YFA1003802, 2022YFA1003803) and the National Natural Science Foundation of China Grants (Nos. 12271271, 11925106, 12231011, 11931001 and 11971247). Long Feng and Ping Zhao are the co-corresponding authors.

\section{Appendix}
Recall that $\mathbf D=\operatorname{diag}\{d_1^2,d_2^2,\cdots,d_p^2\}$. For $i=1,2,\cdots,n$,
$\boldsymbol U_i=U(\mathbf D^{-1/2}(\bm X_i-\boldsymbol \theta))$ and $R_i=\Vert \mathbf D^{-1/2}(\bm X_i-\boldsymbol \theta)\Vert$ as the scale-invariant spatial-sign and radius of $\bm X_i-\boldsymbol \theta$, where $U(\bm X)=\bm X/\Vert \bm X\Vert \mathbb I(\bm X\not= 0)$ is the multivariate sign function of $\bm X$, with $\mathbb I(\cdot)$ being the indicator function. The moments of $R_i$ is defined as   $\zeta_k=\mathbb{E}\left(R_i^{-k}\right) $ for  k=1,2,3,4.

 We denote the $\mathbf D$-estimated version $\bm U_i$ and $R_i$ as $\hat R_i=\Vert  \hat{\bf D}^{-1/2}(\bm X_i-\boldsymbol\theta)\Vert$ and
$\hat U_i=\Vert \hat{\bf D}^{-1/2}(\bm X_i-\boldsymbol\theta)/\Vert \hat{\bf D}^{-1/2}(\bm X_i-\boldsymbol\theta)\Vert$, respectively, $i=1,2,\cdots,n$.
\subsection{Proof of main lemmas}
\begin{lemma}\label{lemma1_like_2016}
    Under Assumption \ref{max1}, we have $\mathbb E ( U(\bm W_i)^\top \mathbf{M}  U(\boldsymbol{W}_i))^2=O(p^{-2}\tr(\mathbf{M}^\top \mathbf{M}))$.
\end{lemma}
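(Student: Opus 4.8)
The plan is to expand the square into a fourth-order form in the coordinates of the unit vector $\boldsymbol u:=U(\boldsymbol W_i)=\boldsymbol W_i/\|\boldsymbol W_i\|$ and exploit the coordinatewise symmetry granted by Assumption \ref{max1}. Writing $u_j=W_{i,j}/\|\boldsymbol W_i\|$, we have
$$
\mathbb{E}\big(\boldsymbol u^\top \mathbf M\boldsymbol u\big)^2=\sum_{j,k,l,m}M_{jk}M_{lm}\,\mathbb{E}[u_ju_ku_lu_m].
$$
Because each $W_{i,j}$ is symmetric and the coordinates are independent, flipping the sign of a single coordinate leaves the law of $\boldsymbol W_i$ unchanged while flipping the sign of $u_j$; hence $\mathbb{E}[u_ju_ku_lu_m]=0$ unless every index in $\{j,k,l,m\}$ occurs an even number of times. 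This collapses the quadruple sum to the two exchangeable moments $a:=\mathbb{E}[u_j^4]$ and $b:=\mathbb{E}[u_j^2u_k^2]$ for $j\neq k$, and bookkeeping the surviving pairings of positions yields
$$
\mathbb{E}\big(\boldsymbol u^\top \mathbf M\boldsymbol u\big)^2
= a\sum_{j}M_{jj}^2
+ b\Big[(\tr\mathbf M)^2+\tr(\mathbf M^\top\mathbf M)+\tr(\mathbf M^2)-3\sum_{j}M_{jj}^2\Big].
$$

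The crux is to show $a=O(p^{-2})$ and $b=O(p^{-2})$, for which I would control the random normalization $\|\boldsymbol W_i\|^2=\sum_{j}W_{i,j}^2$. Under Assumption \ref{max1} the summands $W_{i,j}^2-1$ are i.i.d., mean zero, and of sub-exponential type (since $\|W_{i,j}\|_{\psi_\alpha}\le c_0$ with $\alpha\ge1$ forces $\|W_{i,j}^2-1\|_{\psi_{\alpha/2}}$ bounded), so a Bernstein-type inequality gives $\mathbb{P}(\|\boldsymbol W_i\|^2<p/2)\le e^{-cp}$ for some $c>0$. On the event $\{\|\boldsymbol W_i\|^2\ge p/2\}$ I bound $u_j^4\le 4W_{i,j}^4/p^2$ and $u_j^2u_k^2\le 4W_{i,j}^2W_{i,k}^2/p^2$; taking expectations and using that $\mathbb{E}[W_{i,j}^4]$ is finite together with independence and $\mathbb{E}[W_{i,j}^2W_{i,k}^2]=1$ gives the $O(p^{-2})$ contribution. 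On the complementary event I use the trivial bounds $u_j^4\le1$, $u_j^2u_k^2\le1$ against the exponentially small probability $e^{-cp}$, which is negligible. This yields $a,b=O(p^{-2})$.

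Substituting $a,b=O(p^{-2})$ and using $\sum_jM_{jj}^2\le\tr(\mathbf M^\top\mathbf M)$ together with $|\tr(\mathbf M^2)|\le\tr(\mathbf M^\top\mathbf M)$, the display reduces to
$$
\mathbb{E}\big(\boldsymbol u^\top \mathbf M\boldsymbol u\big)^2=O\!\big(p^{-2}[(\tr\mathbf M)^2+\tr(\mathbf M^\top\mathbf M)]\big),
$$
which gives the asserted rate $O(p^{-2}\tr(\mathbf M^\top\mathbf M))$ for the matrices $\mathbf M$ arising in the later proofs: there the $(\tr\mathbf M)^2$ term is exactly the square of the mean $\mathbb{E}[\boldsymbol u^\top\mathbf M\boldsymbol u]=\tr(\mathbf M)/p$, so it cancels whenever the bound is applied to the centered quadratic form $\boldsymbol u^\top\mathbf M\boldsymbol u-\tr(\mathbf M)/p$. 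I expect the main obstacle to be precisely this decoupling of the normalization $\|\boldsymbol W_i\|^{-4}$ from the numerator: the concentration argument must be uniform enough that the rare event $\{\|\boldsymbol W_i\|^2<p/2\}$ does not interact with the heavy coordinates $W_{i,j}$, which is why I isolate it through the crude bound rather than attempting an exact evaluation of $\mathbb{E}[\|\boldsymbol W_i\|^{-4}]$.
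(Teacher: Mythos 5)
Your proposal is correct, and it follows a genuinely different route from the paper's, one that in fact exposes a defect in the lemma as stated. The paper's own proof is purely elementary: it never touches the distribution of $\|\boldsymbol W_i\|$, instead using exchangeability of the coordinates of $U(\boldsymbol W_i)$ and the constraint $\sum_j U(\boldsymbol W_i)_j^2=1$ to get the crude bounds $\mathbb{E}[U_j^4]\le p^{-1}$ and $\mathbb{E}[U_j^2U_k^2]\le p^{-2}$, then expands the square, kills odd terms by sign-symmetry, and bounds the surviving paired terms by Cauchy--Schwarz; in particular it nowhere needs the sub-exponential part of Assumption \ref{max1}. Your concentration step (restricting to $\{\|\boldsymbol W_i\|^2\ge p/2\}$, whose complement is exponentially rare -- note that for $\alpha<2$ a sub-Weibull Bernstein bound gives $e^{-cp^{\alpha/2}}$ rather than $e^{-cp}$, though a Chernoff lower-tail bound for nonnegative summands does give $e^{-cp}$, and either is far more than enough) buys the sharper estimate $\mathbb{E}[U_j^4]=O(p^{-2})$, which you need to control the diagonal contribution $a\sum_j M_{jj}^2$ by $p^{-2}\tr(\mathbf M^\top\mathbf M)$. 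More importantly, your exact bookkeeping of the pairings is what reveals the unavoidable $(\tr\mathbf M)^2$ term: since $\mathbb{E}[U(\boldsymbol W_i)^\top\mathbf M U(\boldsymbol W_i)]=\tr(\mathbf M)/p$, one always has $\mathbb{E}[(U(\boldsymbol W_i)^\top\mathbf M U(\boldsymbol W_i))^2]\ge(\tr\mathbf M)^2/p^2$, so the stated bound $O(p^{-2}\tr(\mathbf M^\top\mathbf M))$ is false for general $\mathbf M$ (take $\mathbf M=\mathbf I_p$: the left side equals $1$ while the right side is $p^{-1}$). The paper's proof obscures this point, as its bound of the diagonal-diagonal sum $\sum_{l,s}a_{ll}a_{ss}\mathbb{E}[U_l^2U_s^2]$ by $p^{-1}\frac{p^2}{p^4}\tr(\mathbf M^\top\mathbf M)$ does not follow from the preceding inequalities and cannot hold in general. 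Both arguments are rescued by exactly the observation you make at the end: in the only places the lemma is invoked (the proof of Lemma \ref{lemma:D_rate}), it is applied to $\mathbf M=\mathbf R-\mathbf I_p$, which has zero diagonal, so $\tr\mathbf M=0$, the term $a\sum_j M_{jj}^2$ vanishes as well, and your general bound $O\bigl(p^{-2}[(\tr\mathbf M)^2+\tr(\mathbf M^\top\mathbf M)]\bigr)$ reduces to the asserted one. In short, your proof is correct and effectively supplies the missing hypothesis ($\tr\mathbf M=0$, or more generally $(\tr\mathbf M)^2=O(\tr(\mathbf M^\top\mathbf M))$) under which the lemma is actually true.
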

\begin{proof}
    By Cauchy inequality and Assumption \ref{max1}, we have
$$
\begin{aligned}
&\mathbb E U(\bm W_i)_l^2 U(\bm W_i)_k^2 \leq\frac{1}{p^2}\mathbb E\sum_{s=1}^p\sum_{t=1}^p U(\bm W_i)_s^2 U(\bm W_i)_t^2=p^{-2}\\
&\mathbb E U(\bm W_i)_l^4 \leq\frac{1}{p}\mathbb E\sum_{s=1}^p U(\bm W_i)_s^4 \leq\frac{1}{p}\mathbb E\sum_{s=1}^p\sum_{t=1}^p U(\bm W_i)_s^2 U(\bm W_i)_t^2= p^{-1},\\
\end{aligned}
$$
and
$$
\mathbb E\left( U(\bm W_i)_l U(\bm W_i)_k U(\bm W_i)_s U(\bm W_i)_t\right) \leq \sqrt{E\left(U(\bm W_i)_l^2 U(\bm W_i)_k^2\right) E\left(U(\bm W_i)_s^2 U(\bm W_i)_t^2\right)}.
$$
By the Cauchy inequality,
$$
\sum_{i,k,s,t}a_{lk}a_{st}\leq \sqrt{\sum_{l,k}a_{lk}^2\sum_{s,t}a_{st}^2}\\
\leq \sqrt{\sum_{l,k}^p a_{lk}^2\sum_{s,t}^pa_{st}^2}=\text{tr}(\mathbf M^\top \mathbf M).
$$
Thus, we get
$$
\begin{aligned}
&E\left(\left(U(\bm W_i)^\top \bf{M} U(\bm W_i)\right)^2\right)\\
=&\sum_{l\not= k=1}^p \sum_{s\not= t=1}^p a_{l k} a_{s t} \mathbb E\left( U(\bm W_i)_l U(\bm W_i)_k U(\bm W_i)_s U(\bm W_i)_t\right) +\sum_{l=1}^p \sum_{s=1}^p a_{l l} a_{s s} \mathbb E\left( U(\bm W_i)_l^2 U(\bm W_i)_s^2 \right) \\
\leq & p^{-2}\frac{p^4-p^2}{p^4}\text{tr}(\mathbf{M}^\top \mathbf M)+ p^{-1}\frac{p^2}{p^4}\text{tr}(\mathbf M^\top \mathbf M)=O(p^{-2}\text{tr}(\bf M^\top \bf M)).
\end{aligned}
$$

\end{proof}
\begin{lemma}\label{lemma:D_rate}
Under Assumptions \ref{max1}and \ref{sum2}, we have $\max_{j=1,2,\cdots p} (\hat d_j -d_j )=O_p(n^{-1/2}(\log p)^{1/2})$.
\end{lemma}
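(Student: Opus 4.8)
The plan is to read off $\hat d_j$ from the diagonal normalization in the defining system (\ref{eq:HR}): at convergence the pair $(\hat{\boldsymbol\theta},\hat{\mathbf D})$ satisfies, for every $j=1,\dots,p$,
\begin{equation*}
\frac{p}{n}\sum_{i=1}^n \widehat U_{ij}^2=1,\qquad \widehat U_{ij}=\frac{(X_{ij}-\hat\theta_j)/\hat d_j}{\|\hat{\mathbf D}^{-1/2}(\boldsymbol X_i-\hat{\boldsymbol\theta})\|}.
\end{equation*}
Because this system is invariant under a common rescaling $\hat{\mathbf D}\mapsto c\hat{\mathbf D}$, I first fix the scale by the convention $\sigma_{jj}=1$ (equivalently $\tr(\R)=p$), which is exactly the population form $p\,\E(U_{ij}^2)=1$ of the equation and forces $d_j^2=[\mathbf\Sigma]_{jj}$. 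Setting $\eta_j=\hat d_j^2/d_j^2-1$, the claim $\max_j|\hat d_j-d_j|=O_p(n^{-1/2}(\log p)^{1/2})$ is equivalent to $\max_j|\eta_j|=O_p(n^{-1/2}(\log p)^{1/2})$ once the $d_j$ are bounded away from $0$ and $\infty$.

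Next I would linearize the system. Granting a preliminary consistency $\max_j|\eta_j|=o_p(1)$ (which follows from the coordinatewise monotonicity of the left-hand side in $\hat d_j$) and the negligibility of replacing $\hat{\boldsymbol\theta}$ by $\boldsymbol\theta$, I replace the shared denominator $\|\hat{\mathbf D}^{-1/2}(\boldsymbol X_i-\boldsymbol\theta)\|^2$ by $R_i^2=\|\mathbf D^{-1/2}(\boldsymbol X_i-\boldsymbol\theta)\|^2$, whose relative error equals $-\sum_kU_{ik}^2\eta_k$. Using $(X_{ij}-\theta_j)^2=d_j^2R_i^2U_{ij}^2$, a first-order expansion turns the equation into
\begin{equation*}
\eta_j=\Delta_j+\sum_{k=1}^pN_{jk}\eta_k+o_p(\max_k|\eta_k|),\qquad \Delta_j:=\frac{p}{n}\sum_{i=1}^n U_{ij}^2-1,\quad N_{jk}:=\frac{p}{n}\sum_{i=1}^n U_{ij}^2U_{ik}^2 .
\end{equation*}
Here $N$ has row sums $1+\Delta_j$ and diagonal $N_{jj}=\tfrac{p}{n}\sum_iU_{ij}^4=O_p(p^{-1})$, so to leading order $N\approx p^{-1}\mathbf 1\mathbf 1^\top$ and $I-N$ acts as the projection off the scale direction $\mathbf 1$. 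Since $\sum_j\Delta_j\equiv0$, solving $(I-N)\eta=\Delta$ on the complement of $\mathbf 1$ (the scale being fixed by the normalization) gives $\eta_j=\Delta_j+{}$(common shift), hence $\max_j|\eta_j|=O_p(\max_j|\Delta_j|)$ and $\hat d_j-d_j=\tfrac12 d_j\eta_j(1+o_p(1))$. It therefore suffices to bound $\max_j|\Delta_j|$.

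The core step is a uniform concentration bound for $\Delta_j=n^{-1}\sum_i(pU_{ij}^2-1)$, a centered average of i.i.d. summands. Writing $\boldsymbol Z_i=\mathbf D^{-1/2}\mathbf\Gamma\boldsymbol W_i$ so that $\boldsymbol U_i=\boldsymbol Z_i/\|\boldsymbol Z_i\|$ and $pU_{ij}^2=pZ_{ij}^2/\|\boldsymbol Z_i\|^2$, I would show that $pU_{ij}^2$ is sub-exponential with an $O(1)$ parameter. This rests on controlling the denominator: $\|\boldsymbol Z_i\|^2=\boldsymbol W_i^\top\mathbf\Gamma^\top\mathbf D^{-1}\mathbf\Gamma\boldsymbol W_i$ is a quadratic form in the independent, sub-exponential coordinates of $\boldsymbol W_i$ (Assumption \ref{max1}) with mean $\tr(\R)=p$ and fluctuation governed by $\tr(\R^2)$. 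By a Hanson--Wright type inequality together with Assumption \ref{sum2}(i), $\mathbb P(\|\boldsymbol Z_i\|^2<p/2)$ decays fast enough that $\min_i\|\boldsymbol Z_i\|^2\ge p/2$ on an event of probability tending to one; on this event $pU_{ij}^2\le 2Z_{ij}^2$, and $Z_{ij}$ is a linear form in $\boldsymbol W_i$, so $Z_{ij}^2$ is sub-exponential with $O(1)$ norm. In particular $\E(pU_{ij}^2)=1$ and $\var(pU_{ij}^2)=O(1)$.

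Finally, with $O(1)$ sub-exponential summands and $\log p=o(n)$ from Assumption \ref{sum2}, Bernstein's inequality in its sub-Gaussian regime gives $\mathbb P(|\Delta_j|>C\sqrt{\log p/n})\le 2p^{-cC^2}$ for each $j$; a union bound over $j=1,\dots,p$ with $C$ large enough yields $\max_j|\Delta_j|=O_p(n^{-1/2}(\log p)^{1/2})$, which completes the proof. The principal obstacle is this core step, namely the uniform sub-exponential control of $pU_{ij}^2$ via quadratic-form concentration of the shared denominator $\|\boldsymbol Z_i\|^2$, carried out simultaneously with the disentangling of the cross-coordinate coupling in $N$, so that only the genuinely coordinatewise fluctuation $\Delta_j$ drives the rate.
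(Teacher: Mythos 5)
Your proposal is correct in outline but follows a genuinely different route at the decisive step. The paper also linearizes the defining equations, but abstractly: following the Hettmansperger--Randles M-estimation argument it derives, for each coordinate, $\sqrt{n}(\hat d_j-d_j)\stackrel{d}{\rightarrow}N(0,\sigma_{d,j}^2)$, and then obtains the uniform rate by a union bound in which the finite-sample tail probabilities are replaced by the limiting Gaussian tails $1-\Phi(\sqrt{2\log p})$. You instead solve the linearized system explicitly --- reducing $\max_j|\hat d_j-d_j|$ to $\max_j|\Delta_j|$ after inverting $I-N$ off the scale direction --- and bound $\max_j|\Delta_j|$ by non-asymptotic Bernstein-plus-union-bound arguments. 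Your route avoids the paper's weakest point (substituting a pointwise limit law into a union bound over $p$ growing coordinates strictly requires a uniform, moderate-deviation-type normal approximation that the paper never establishes), and it makes explicit two issues the paper glosses over: the common-scale indeterminacy of system (\ref{eq:HR}), so that the lemma is only meaningful after fixing a normalization, and the cross-coordinate coupling through $N$. The price is concentrated in your core step: the uniform-in-$i$ lower-tail control of $\|\boldsymbol Z_i\|^2$ needs $\mathbb P(\|\boldsymbol Z_1\|^2<p/2)=o(1/n)$, and since Assumption \ref{max1} only gives $\psi_\alpha$-control with $1\le\alpha\le 2$ you need a generalized (not sub-Gaussian) Hanson--Wright inequality; moreover Assumptions \ref{max1} and \ref{sum2} alone do not rule out $p$ as small as $\log n$, where any exponential-in-$p$ tail bound fails to beat the union over $n$ radii, so your argument implicitly uses the regime conditions (e.g.\ $\log n=o(p^{1/3\wedge\delta})$) imposed in the results where the paper actually applies this lemma, whereas the paper's Gaussian union bound is insensitive to how small $p$ is. Two minor inaccuracies do not affect the rate: the identification $d_j^2=[\mathbf\Sigma]_{jj}$ holds only approximately (up to $O(p^{-\delta/2})$), though your expansion only uses the definitional identity $(X_{ij}-\theta_j)^2=d_j^2R_i^2U_{ij}^2$; and the preliminary consistency and the removal of $\hat{\boldsymbol\theta}$ are granted rather than proved --- but the paper's own proof is no more complete on those points.
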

\begin{proof}
    The proof of this lemma is along the same lines as the proof of Lemma A.2. in \cite{feng2016multivariate}, but differs in that the assumptions about the model in this paper are more general, with different constraints controlling the correlation matrix $R$.

    Denote $\boldsymbol\eta=(\boldsymbol\theta^\top, d_1,d_2,\cdots,d_p)^\top$ and $\hat{\boldsymbol\eta}$ as the estimated version. By first-order Taylor expansion, we have
    \begin{equation}\label{eq:expansion_of_U_i}
            \begin{aligned}
        U(\mathbf D^{-1/2}(\bm X_i-\bm \theta))&=\frac{\mathbf D^{-1/2}\mathbf{\Gamma} U(\bm W_i)}{(1+U(\bm W_i)^\top (\mathbf R-\mathbf I_p)U(\boldsymbol W_i))^{1/2}}\\
        &=\mathbf D^{-1/2}\mathbf\Gamma U(\boldsymbol W_i)+C_1 U(\boldsymbol W_i)^\top (\mathbf R-\mathbf I_p)U(\boldsymbol W_i) \mathbf D^{-1/2}\mathbf\Gamma U(\boldsymbol W_i),
    \end{aligned}
    \end{equation}
    where $C_1$ is a bounded random variable between $-0.5$ and $-0.5(1+U(\boldsymbol W_i)^\top (\mathbf R-\mathbf I_p)U(\boldsymbol W_i))^{-3/2}$. By Cauchy inequality and Lemma \ref{lemma1_like_2016} and Assumption \ref{sum2}, we get
    $$
\begin{aligned}
    \mathbb E\left(U(\mathbf D^{-1/2}(\boldsymbol X_i-\boldsymbol\theta)\right)&\leq C_2\left\{ \mathbb E(U(\boldsymbol W_i)^\top (\mathbf R-\mathbf I_p)U(\boldsymbol W_i))^2 \mathbb E(\mathbf D^{-1/2}(\boldsymbol X_i-\boldsymbol\theta)^2 \right\}^{1/2} \\
    &=O(p^{-1}\sqrt{\text{tr}(\mathbf R^2)-p})=o(n^{-1/2}).
\end{aligned}
  $$

    Similarly, we can show that
    $$
\begin{aligned}
    &\mathbb E\left(\operatorname{diag}\left\{ U(\mathbf D^{-1/2}(\boldsymbol X_i-\boldsymbol\theta))U(\mathbf D^{-1/2}(\boldsymbol X_i-\boldsymbol\theta))^\top\right\}-\frac{1}{p}\mathbf I_p\right)\leq O(n^{-1/2}),
\end{aligned}
    $$
    by first-order Taylor expansion for $U(\mathbf D^{-1/2}(\boldsymbol X_i-\boldsymbol\theta))U(\mathbf D^{-1/2}(\boldsymbol X_i-\boldsymbol\theta))^\top$, Cauchy inequality and Lemma  \ref{lemma1_like_2016}. The above two equations define the functional equation for each component of $\boldsymbol\eta$,
    \begin{equation}\label{eq:T_j}
        T_j(F,  \eta_j)=o_p(n^{-1/2}),
    \end{equation}
    where $F$ represent the distribution of $\boldsymbol X$, $\boldsymbol\eta=(\eta_1,\cdots,\eta_{2p})$. Similar to \cite{hettmansperger2002practical}, the linearisation of this equation shows
    $$
    n^{1/2}\left(\hat\eta_j-\eta_j\right)=-\mathbf H_{j}^{-1}n^{1/2}\left( T_j(F_n,\eta_j)-T_j(F,\eta_j)\right)+o_p(1),
    $$
    where $F_n$ represents the empirical distribution function based on $\boldsymbol X_1,\boldsymbol X_2,\cdots,\boldsymbol X_n$, $\mathbf H_j$ is the corresponding Hessian matrix of the functional defined in Equation \ref{eq:T_j} and
    $$
T(F_n,\boldsymbol\eta)=\left(n^{-1}\sum_{j=1}^n U(\mathbf D^{-1/2}(\boldsymbol X_i-\boldsymbol\theta))^\top,\vec( \operatorname{diag}(n^{-1}U(\mathbf D^{-1/2}(\boldsymbol X_i-\boldsymbol\theta))U(\mathbf D^{-1/2}(\boldsymbol X_i-\boldsymbol\theta))^\top-\frac{1}{p}\mathbf I_p))  \right).
    $$
    Thus, for each $\hat d_j$ we have
    $$
    \sqrt n (\hat d_j-d_j)\stackrel{d}{\rightarrow}N(0,\sigma^2_{d,j}).
    $$
    where $\sigma^2_{d,j}$ is the corresponding asymptotic variance. Define $\sigma_{d,max}=\max_{1\leq j\leq p}\sigma_{d,j}$. As $p\rightarrow\infty$,
    $$
\begin{aligned}
    \mathbb P&(\max_{j=1,2,\cdots,p}(\hat d_j-d_j)>\sqrt 2\sigma_{d,max}n^{-1/2}(\log p)^{1/2})\\
    &\leq \sum_{j=1}^p \mathbb P(\sqrt n(\hat d_j-d_j)>\sqrt 2\sigma_{d,max}(\log p)^{1/2})\\
    &=\sum_{j=1}^p (1-\Phi(\sqrt 2 \sigma_{d,max}\sigma_{d,j}^{-1}(\log p)^{1/2}))\leq p(1-\Phi((2\log p)^{1/2}))\\
    &\leq \frac{p}{\sqrt{2\pi \log p}}\exp(-\log p)=(4\pi)^{-1/2}(\log p)^{-1/2}\rightarrow 0,
\end{aligned}
    $$
    which means that $\max_{j=1,2,\cdots p} (\hat d_j -d_j )=O_p(n^{-1/2}(\log p)^{1/2})$.
\end{proof}

\begin{lemma}\label{lemma:consistency_zeta}
    Suppose the Assumptions in Lemma \ref{lemma:D_rate} hold, then $\hat\zeta_1\stackrel{p}{\rightarrow}\zeta_1$.
\end{lemma}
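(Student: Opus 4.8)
The plan is to establish the ratio statement $\hat\zeta_1=\zeta_1(1+o_p(1))$, which in particular delivers $\hat\zeta_1\stackrel{p}{\to}\zeta_1$; note both quantities are of order $p^{-1/2}$ under Assumption \ref{max2}, so the substantive claim is that the \emph{relative} error vanishes. I would start from the split
$$\hat\zeta_1-\zeta_1=\frac1n\sum_{i=1}^n\left(\hat R_i^{-1}-R_i^{-1}\right)+\left(\frac1n\sum_{i=1}^n R_i^{-1}-\zeta_1\right)=:(\mathrm I)+(\mathrm{II}).$$
Term $(\mathrm{II})$ is handled by the weak law of large numbers: the $R_i$ are i.i.d.\ and $\mathrm{Var}(R_i^{-1})\le\zeta_2<\infty$ by Assumption \ref{max2}, so $(\mathrm{II})=O_p(n^{-1/2}\zeta_2^{1/2})=O_p(n^{-1/2}p^{-1/2})=o_p(\zeta_1)$. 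The same law gives $n^{-1}\sum_i R_i^{-1}\stackrel{p}{\to}\zeta_1$ and $n^{-1}\sum_i R_i^{-2}\stackrel{p}{\to}\zeta_2$, which I will reuse when bounding $(\mathrm I)$.

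For $(\mathrm I)$ I write $\hat R_i^{-1}-R_i^{-1}=(R_i-\hat R_i)/(R_i\hat R_i)$ and seek a uniform-in-$i$ bound on the numerator. With $\mathbf a_i=\boldsymbol X_i-\boldsymbol\theta$ and $\mathbf b=\hat{\boldsymbol\theta}-\boldsymbol\theta$, the reverse triangle inequality gives $|R_i-\hat R_i|\le\|(\mathbf D^{-1/2}-\hat{\mathbf D}^{-1/2})\mathbf a_i\|+\|\hat{\mathbf D}^{-1/2}\mathbf b\|$. Writing $Y_{ij}=a_{ij}/d_j$ so that $R_i^2=\sum_j Y_{ij}^2$, the first piece equals $(\sum_j((\hat d_j-d_j)/\hat d_j)^2 Y_{ij}^2)^{1/2}\le\kappa_n R_i$ with $\kappa_n:=\max_j|\hat d_j-d_j|/\hat d_j$; by Lemma \ref{lemma:D_rate} and the lower bound on $d_j$ in Assumption \ref{max3}, $\kappa_n=O_p(n^{-1/2}(\log p)^{1/2})$. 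The second piece is a single random quantity $\rho_n:=\|\hat{\mathbf D}^{-1/2}(\hat{\boldsymbol\theta}-\boldsymbol\theta)\|$, not depending on $i$. Thus $|R_i-\hat R_i|\le\kappa_n R_i+\rho_n$ uniformly in $i$.

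The Euclidean rate $\rho_n=O_p((p/n)^{1/2})$ is where the Bahadur representation enters. By Lemma \ref{lemma1}, $\hat{\mathbf D}^{-1/2}(\hat{\boldsymbol\theta}-\boldsymbol\theta)=n^{-1}\zeta_1^{-1}\sum_i\boldsymbol U_i+n^{-1/2}C_n$; the leading term has squared Euclidean norm with mean $n^{-2}\zeta_1^{-2}\sum_i\mathbb E\|\boldsymbol U_i\|^2=n^{-1}\zeta_1^{-2}\asymp p/n$ (cross terms vanish since $\mathbb E\boldsymbol U_i=0$), while the remainder obeys $\|n^{-1/2}C_n\|\le p^{1/2}n^{-1/2}\|C_n\|_\infty$, which the stated growth conditions ($\log p=o(n^{1/5})$, $\log n=o(p^{1/3\wedge\delta})$) render $o_p((p/n)^{1/2})$ term by term. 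To convert the numerator bound into a bound on $(\mathrm I)$ I also need $\hat R_i$ bounded below: a union bound with the finite fourth inverse moment of Assumption \ref{max2} gives $\min_{i\le n}R_i\gtrsim n^{-1/4}p^{1/2}$ with high probability, so $\rho_n/\min_i R_i=O_p(n^{-1/4})=o_p(1)$ and hence $\hat R_i\ge\tfrac12 R_i$ uniformly. Combining,
$$|(\mathrm I)|\le 2\kappa_n\,\frac1n\sum_i R_i^{-1}+2\rho_n\,\frac1n\sum_i R_i^{-2}=O_p(\kappa_n\zeta_1)+O_p(\rho_n\zeta_2)=o_p(\zeta_1),$$
since $\rho_n\zeta_2\asymp (p/n)^{1/2}p^{-1}=n^{-1/2}p^{-1/2}=o(\zeta_1)$ and $\kappa_n=o(1)$. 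Together with $(\mathrm{II})=o_p(\zeta_1)$ this yields $\hat\zeta_1=\zeta_1(1+o_p(1))$.

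The \textbf{main obstacle} is the uniform-in-$i$ control of $R_i-\hat R_i$ combined with extracting a Euclidean-norm rate for $\hat{\boldsymbol\theta}-\boldsymbol\theta$ out of a representation stated only in the maximum norm: the crude $\|\cdot\|_\infty$-to-$\|\cdot\|_2$ conversion of $C_n$ must be verified to stay of smaller order than the leading term $(p/n)^{1/2}$, and the lower bound on $\min_i R_i$ must be squeezed out of only four finite inverse moments. The decisive simplification that makes the argument go through is that splitting $|R_i-\hat R_i|$ into a multiplicative error $\kappa_n R_i$ and an additive error $\rho_n$ lets me average against $n^{-1}\sum_i R_i^{-1}$ and $n^{-1}\sum_i R_i^{-2}$ rather than against $\min_i R_i$, so the dependence between $\hat{\boldsymbol\theta}$ and the individual $\boldsymbol X_i$ never needs to be disentangled.
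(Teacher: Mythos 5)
Your proof is correct, and at the top level it follows the same strategy as the paper's: compare $\hat R_i^{-1}$ with $R_i^{-1}$, show the estimation error is negligible relative to $\zeta_1\asymp p^{-1/2}$, and finish with a law of large numbers for $n^{-1}\sum_i R_i^{-1}$. The execution, however, differs in two substantive ways. First, the paper works multiplicatively: it expands $\hat R_i = R_i(1+\cdot)^{1/2}$, argues each correction term (e.g.\ $R_i^{-2}\Vert\hat{\mathbf D}^{-1/2}\hat{\boldsymbol\mu}\Vert^2=O_p(n^{-1})$) is $o_p(1)$, and factorizes $\hat\zeta_1=(n^{-1}\sum_i R_i^{-1})(1+o_p(1))$; your additive split $(\mathrm I)+(\mathrm{II})$, with the bound $|R_i-\hat R_i|\le\kappa_n R_i+\rho_n$ averaged against $n^{-1}\sum_i R_i^{-1}$ and $n^{-1}\sum_i R_i^{-2}$, is actually tidier on the point the paper glosses over, namely uniformity in $i$ when summing the per-observation errors (your union-bound control of $\min_i R_i$ and the resulting $\hat R_i\ge R_i/2$ make this explicit). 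Second, and this is the real divergence, you source the Euclidean rate $\rho_n=O_p((p/n)^{1/2})$ from Lemma \ref{lemma1} (the sup-norm Bahadur representation, converted to $\ell_2$ via the crude $\sqrt p$ factor), which formally imports Lemma \ref{lemma1}'s conditions ($a_0(p)\asymp p^{1-\delta}$, $\log p=o(n^{1/3})$, $\log n=o(p^{1/3\wedge\delta})$) — assumptions not contained in "the Assumptions in Lemma \ref{lemma:D_rate}" that the statement grants. The paper instead pulls the location rate $\Vert\hat{\boldsymbol\theta}-\boldsymbol\theta\Vert=O_p(\zeta_1^{-1}n^{-1/2})$ out of the HR-estimator functional expansion underlying Lemma \ref{lemma:D_rate} itself (equivalently, Lemma A.3 of Feng, Zou and Wang, 2016), so it stays nominally within the stated hypotheses. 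In fairness, the stated hypotheses are loose either way (the paper's own proof silently uses Assumptions \ref{max2} and \ref{max3}, as do you), and the lemma is only ever invoked where Theorem \ref{thm1}'s conditions hold, so your extra assumptions are harmless in context; but a version of your argument that derives $\rho_n$ without appealing to the full Bahadur representation would match the stated scope of the lemma more faithfully.
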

\begin{proof}
Denote $\hat{\boldsymbol\mu}=\hat{\boldsymbol\theta}-\boldsymbol\theta$.
    $$
\begin{aligned}
    \Vert \hat{\mathbf D}^{-1/2}(\boldsymbol X_i-\hat{\boldsymbol\theta})\Vert=\Vert {\mathbf D}^{-1/2}(\boldsymbol X_i-{\boldsymbol\theta})\Vert(& 1+ R_i^{-2}\Vert (\hat{\mathbf D}^{-1/2}-{\mathbf D}^{-1/2})(\boldsymbol X_i-{\boldsymbol\theta})\Vert^2
    \\
    &+ R_i^{-2}\Vert \hat{\mathbf{D}}^{-1/2}\hat {\boldsymbol\mu}\Vert^2+2R_i^{-2}\boldsymbol U_i^\top (\hat{\mathbf D}^{-1/2}-{\mathbf D}^{-1/2}){\mathbf D}^{1/2}\boldsymbol U_i)\\
    &-2R_i^{-1}\boldsymbol U_i^\top \hat{\mathbf D}^{-1/2}\hat{\boldsymbol\mu}-2R_i^{-1}\boldsymbol U_i\mathbf D^{1/2}(\hat{\mathbf D}^{-1/2}-{\mathbf D}^{-1/2})\hat{\mathbf D}^{-1/2}\hat{\boldsymbol\mu})^{1/2}.
\end{aligned}
    $$
    According to the proof and conclusion in Lemma \ref{lemma:D_rate}, we can show that $R_{i}^{-2}\Vert(\hat{\mathbf{D}}^{-1 / 2}-\mathbf{D}^{-1 / 2})(\boldsymbol{X}_{i}-\boldsymbol{\theta})\Vert^2=O_p\left((\log p / n)^{1 / 2}\right)=o_p(1)$ and $R_{i }^{-2}\Vert\hat{\mathbf{D}}^{-1 / 2} \hat{\boldsymbol\mu}\Vert^2=O_p(n^{-1})=o_p(1)$ and by the Cauchy inequality, the other parts are also $o_p(1)$. So,
$$
n^{-1} \sum_{i=1}^{n}\left\|\hat{\mathbf{D}}^{-1 / 2}\left(\boldsymbol{X}_{i }-\hat{\boldsymbol{\theta}}\right)\right\|^{-1}=\left(n^{-1} \sum_{i=1}^{n}\left\|\mathbf{D}^{-1 / 2}\left(\boldsymbol{X}_{i}-\boldsymbol{\theta}\right)\right\|^{-1}\right)\left(1+o_p(1)\right) .
$$
Obviously, $E\left(n^{-1} \sum_{i=1}^{n} R_{i}^{-1}\right)=\zeta_i$ and $\operatorname{var}\left(n^{-1} \zeta_i^{-1} \sum_{i=1}^{n} R_{i}^{-1}\right)=O\left(n^{-1}\right)$. Finally, the proof is completed.
\end{proof}
\begin{lemma}\label{Qjl}
    Suppose Assumptions \ref{max1}-\ref{max3} holds with $a_0(p)\asymp p^{1-\delta}$ for some positive constant $\delta\leq 1/2$. Define a random $p\times p$ matrix $\hat{\mathbf Q}=n^{-1}\sum_{i=1}^n \hat R^{-1}_i \hat{\boldsymbol U}_i \hat{\boldsymbol U}_i^\top$ and let $\hat{\mathbf Q}_{jl}$ be the $(j,l)$th element of $\hat{\mathbf Q}$. Then,
    $$\left|\hat{\mathbf Q}_{j \ell}\right| \lesssim \zeta_1 p^{-1}\left|\sigma_{j \ell}\right|+O_p\left(\zeta_1 n^{-1 / 2} p^{-1}+\zeta_1 p^{-7 / 6}+\zeta_1 p^{-1-\delta / 2}+\zeta_1n^{-1/2}(\log p)^{1/2}( p^{-5/2}+p^{-1-\delta/2})\right).$$
\end{lemma}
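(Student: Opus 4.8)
The plan is to pass first from the $\hat{\mathbf D}$-based statistic to its population-$\mathbf D$ analogue and then to split that analogue into a mean and a fluctuation. Set $\mathbf Q=n^{-1}\sum_{i=1}^n R_i^{-1}\boldsymbol U_i\boldsymbol U_i^\top$, built from the true $\mathbf D$, and write $U_{ij}$ for the $j$th coordinate of $\boldsymbol U_i$, so that $\hat{\mathbf Q}_{j\ell}=\mathbb E\mathbf Q_{j\ell}+(\mathbf Q_{j\ell}-\mathbb E\mathbf Q_{j\ell})+(\hat{\mathbf Q}_{j\ell}-\mathbf Q_{j\ell})$. I would account for the four $O_p$ terms as follows: the two bias terms $\zeta_1p^{-7/6}$ and $\zeta_1p^{-1-\delta/2}$ come from approximating $\mathbb E\mathbf Q_{j\ell}$ by $\zeta_1p^{-1}\sigma_{j\ell}$; the term $\zeta_1n^{-1/2}p^{-1}$ is the fluctuation $\mathbf Q_{j\ell}-\mathbb E\mathbf Q_{j\ell}$; and the last term, which carries the factor $n^{-1/2}(\log p)^{1/2}$, is the cost of replacing $\mathbf D$ by $\hat{\mathbf D}$.

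For the mean I would decompose $R_i^{-1}=\zeta_1+(R_i^{-1}-\zeta_1)$ to get $\mathbb E\mathbf Q_{j\ell}=\zeta_1(\mathbf\Sigma_u)_{j\ell}+\cov(R_i^{-1},U_{ij}U_{i\ell})$. Using the expansion \eqref{eq:expansion_of_U_i}, $\boldsymbol U_i=\mathbf D^{-1/2}\mathbf\Gamma U(\boldsymbol W_i)\{1+O(U(\boldsymbol W_i)^\top(\mathbf R-\mathbf I_p)U(\boldsymbol W_i))\}$, the leading part gives $(\mathbf\Sigma_u)_{j\ell}=p^{-1}\sigma_{j\ell}$ because $\mathbb E\{U(\boldsymbol W_i)U(\boldsymbol W_i)^\top\}=p^{-1}\mathbf I_p$ under Assumption \ref{max1}, while the quadratic correction (including the normalization effect) is controlled by Lemma \ref{lemma1_like_2016}: since $\tr\{(\mathbf R-\mathbf I_p)^2\}=\tr(\mathbf R^2)-p\lesssim p^{2-\delta}$ under Assumption \ref{max3}, its root-mean-square is $O(p^{-\delta/2})$, producing $\zeta_1p^{-1-\delta/2}$. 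The covariance $\cov(R_i^{-1},U_{ij}U_{i\ell})$ measures the weak dependence between the radius and a fixed pair of direction coordinates; a crude Cauchy--Schwarz bound only gives $\zeta_1p^{-1}$, so I would sharpen it by truncating $R_i$, justified by the moment Assumptions \ref{max1}--\ref{max2} (the same device that underlies the $p^{-1/6}$ rate in Lemma \ref{lemma1}), gaining the extra factor $p^{-1/6}$ and yielding $\zeta_1p^{-7/6}$.

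The fluctuation is routine: the summands $R_i^{-1}U_{ij}U_{i\ell}$ are i.i.d.\ with $\mathbb E(R_i^{-2}U_{ij}^2U_{i\ell}^2)=O(\zeta_1^2p^{-2})$ by Assumption \ref{max2} and $U_{ij}^2\asymp p^{-1}$, so $\var(\mathbf Q_{j\ell})=O(n^{-1}\zeta_1^2p^{-2})$ and $\mathbf Q_{j\ell}-\mathbb E\mathbf Q_{j\ell}=O_p(\zeta_1n^{-1/2}p^{-1})$. For the replacement error I would expand $\hat R_i^{-1}\hat U_{ij}\hat U_{i\ell}-R_i^{-1}U_{ij}U_{i\ell}$ to first order in $\Delta_m=\hat d_m-d_m$. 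Writing $R_i^{-1}U_{ij}U_{i\ell}=d_j^{-1}d_\ell^{-1}Y_{ij}Y_{i\ell}R_i^{-3}$ with $R_i^2=\sum_k d_k^{-2}Y_{ik}^2$ and $Y_{ik}$ the $k$th coordinate of $\boldsymbol X_i-\boldsymbol\theta$, the explicit prefactor $d_j^{-1}d_\ell^{-1}$ contributes only a factor $1+O_p(n^{-1/2}(\log p)^{1/2})$ (as $d_m\asymp1$ by Assumption \ref{max3}), absorbed into the leading term, whereas differentiation through $R_i^{-3}$ gives the clean identity $\partial(R_i^{-1}U_{ij}U_{i\ell})/\partial d_m=3d_m^{-1}R_i^{-1}U_{ij}U_{i\ell}U_{im}^2$. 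Summing against $\Delta_m$ and averaging over $i$, the identity $\sum_mU_{im}^2=1$ collapses the common-mode part into $O_p(n^{-1/2}(\log p)^{1/2})\cdot\mathbf Q_{j\ell}$ (again absorbed), and the residual is bounded using $\max_m|\Delta_m|=O_p(n^{-1/2}(\log p)^{1/2})$ from Lemma \ref{lemma:D_rate} together with entrywise fourth-moment bounds on $R_i^{-1}U_{ij}U_{i\ell}U_{im}^2$ (Lemma \ref{lemma1_like_2016}, Assumption \ref{max3}), which yield $\zeta_1n^{-1/2}(\log p)^{1/2}(p^{-5/2}+p^{-1-\delta/2})$.

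The main obstacle is the $\hat{\mathbf D}\to\mathbf D$ replacement, because a perturbation of $\hat d_m$ enters $\hat R_i^{-3}$ simultaneously through all $p$ coordinates, so bounding it coordinatewise would lose a factor $p$; the essential device is to use $\sum_mU_{im}^2=1$ to peel off the harmless common-mode rescaling and leave only the genuinely small fluctuation. A secondary difficulty is sharpening $\cov(R_i^{-1},U_{ij}U_{i\ell})$ beyond Cauchy--Schwarz to extract the $p^{-1/6}$ gain, which needs a careful truncation exploiting the moment bounds of Assumption \ref{max2}.
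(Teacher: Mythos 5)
Your decomposition is genuinely different from the paper's. The paper never isolates a population matrix and its mean/fluctuation in this proof; instead it uses the model $\boldsymbol X_i=v_i\mathbf\Gamma\boldsymbol W_i$ to write $\hat{\mathbf Q}_{j\ell}$ as diagonal-ratio factors times $n^{-1}\sum_i v_i^{-1}(\mathbf D^{-1/2}\mathbf\Gamma_j\boldsymbol W_i)(\mathbf D^{-1/2}\mathbf\Gamma_\ell\boldsymbol W_i)\Vert\hat{\mathbf D}^{-1/2}\mathbf\Gamma\boldsymbol W_i\Vert^{-3}$, then splits the scalar $\Vert\hat{\mathbf D}^{-1/2}\mathbf\Gamma\boldsymbol W_i\Vert^{-3}$ into a $\hat{\mathbf D}\to\mathbf D$ replacement piece, a norm-concentration piece $\Vert\mathbf D^{-1/2}\mathbf\Gamma\boldsymbol W_i\Vert^{-3}-p^{-3/2}$, and a main piece with $p^{-3/2}$; the last two are quoted wholesale from the proof of Lemma A3 in \cite{cheng2023} (restated as Lemma \ref{LemmaA3}), so the only new work is the replacement piece, handled by the uniform ratio bound $\Vert\hat{\mathbf D}^{-1/2}\mathbf\Gamma\boldsymbol W_i\Vert^{k}=\Vert\mathbf D^{-1/2}\mathbf\Gamma\boldsymbol W_i\Vert^{k}(1+H_k)$ with $|H_k|=O_p(n^{-1/2}(\log p)^{1/2})$ from Lemma \ref{lemma:D_rate}. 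You instead re-derive the population part from scratch (your mean and fluctuation computations are correct, and attributing the $\zeta_1 p^{-1-\delta/2}$ bias to $\mathbb E(U_{ij}U_{i\ell})=p^{-1}\sigma_{j\ell}+O(p^{-1-\delta/2})$, i.e.\ Lemma \ref{LemmaA4}(iii), is legitimate), and you replace the ratio device by a coordinate-wise Taylor expansion; your identity $\partial(R_i^{-1}U_{ij}U_{i\ell})/\partial d_m=3d_m^{-1}R_i^{-1}U_{ij}U_{i\ell}U_{im}^2$ and the common-mode cancellation via $\sum_m U_{im}^2=1$ are a clean equivalent of the paper's $H_3$ factorization. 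A shorter version of your route would simply cite Lemma \ref{LemmaA3} for the population part, exactly as the paper does.

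There is, however, a concrete gap in the replacement step, which is the one place where your route cannot lean on \cite{cheng2023}. After removing the common mode you must bound $n^{-1}\sum_i R_i^{-1}U_{ij}U_{i\ell}\bigl(\sum_m \tilde\epsilon_m U_{im}^2\bigr)$, where $\tilde\epsilon_m$ are the centered relative errors of $\hat d_m$. The tools you cite do not, as stated, give the claimed rate: pulling out $\max_m|\tilde\epsilon_m|=O_p(n^{-1/2}(\log p)^{1/2})$ and using the crude entrywise bound $\mathbb E[R_i^{-1}|U_{ij}U_{i\ell}|U_{im}^2]\lesssim\zeta_1 p^{-2}$, summed over the $p$ values of $m$, yields only $O_p(\zeta_1 n^{-1/2}(\log p)^{1/2}p^{-1})$, which exceeds the allowed error by a factor $p^{\delta/2}$. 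What closes the gap is the finer signed mixed-moment bound $\sum_m\bigl|\mathbb E[R_i^{-1}U_{ij}U_{i\ell}U_{im}^2]\bigr|\lesssim\zeta_1\bigl(p^{-1}|\sigma_{j\ell}|+p^{-1-\delta/2}\bigr)$, a Wick-type expansion exploiting $\sum_m|\sigma_{jm}\sigma_{\ell m}|\leq a_0(p)=p^{1-\delta}$ from Assumption \ref{max3}; this is not contained in Lemma \ref{lemma1_like_2016}, which controls second moments of quadratic forms rather than individual mixed fourth moments with the required $\sigma$-dependence. (The $|\sigma_{j\ell}|$ part is then absorbed multiplicatively, leaving exactly $\zeta_1 n^{-1/2}(\log p)^{1/2}p^{-1-\delta/2}$.) A similar remark applies to your $\zeta_1 p^{-7/6}$ term: "truncation gains $p^{-1/6}$" is asserted, not derived; what actually fits the budget is a variance bound of the form $\var\bigl(\Vert\mathbf D^{-1/2}\mathbf\Gamma\boldsymbol W_i\Vert^{-1}\bigr)\lesssim p^{-1}(p^{-\delta}+p^{-1/3})$ (norm concentration, with truncation under Assumptions \ref{max1}--\ref{max2} justifying the inverse moments), after which Cauchy--Schwarz gives $\cov(R_i^{-1},U_{ij}U_{i\ell})\lesssim\zeta_1(p^{-1-\delta/2}+p^{-7/6})$, both terms being allowed. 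With these two computations supplied, your architecture works; without them, the two hardest estimates in the lemma are assumed rather than proved.
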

\begin{proof}
   Recall that $\hat{\mathbf Q}=\frac{1}{n}\sum_{i=1}^n \hat R_i^{-1}\hat{\boldsymbol U}_i\hat{\boldsymbol U}_i^\top$, then,
$$
\begin{aligned}
\hat{\mathbf Q}_{jl}=&\frac{1}{n}\sum_{i=1}^n \hat R_i^{-1}\hat{\boldsymbol U}_{i,j}\hat{\boldsymbol U}_{i,l}\\
=&\frac{1}{n}\sum_{i=1}^nv_i^{-1}(\hat{\mathbf D}^{-1/2}\mathbf\Gamma_j \boldsymbol W_i)(\hat{\mathbf D}^{-1/2}\mathbf\Gamma_l \boldsymbol W_i)^\top\Vert \hat{\mathbf D}^{-1/2}\mathbf\Gamma \boldsymbol W_i\Vert^{-3}\\
=&\hat{ \mathbf D}^{-1/2}\mathbf D^{1/2}\left\{\frac{1}{n}\sum_{i=1}^n v_i^{-1}( \mathbf D^{-1/2}\mathbf\Gamma_j \boldsymbol W_i)( \mathbf D^{-1/2}\mathbf\Gamma_l \boldsymbol W_i)^\top\Vert \hat{\mathbf D}^{-1/2}\mathbf\Gamma \boldsymbol W_i\Vert^{-3}\right\}\hat{\mathbf D}^{-1/2}\mathbf D^{1/2}.
\end{aligned}
$$

We first consider the middle term,
\begin{equation}\label{eq:Qjl}
\begin{aligned}
&\frac{1}{n}\sum_{i=1}^n v_i^{-1}( \mathbf D^{-1/2}\mathbf\Gamma_j\boldsymbol W_i)( \mathbf D^{-1/2}\mathbf\Gamma_l \boldsymbol W_i)^\top\Vert \hat{ \mathbf D}^{-1/2}\mathbf\Gamma \boldsymbol W_i\Vert^{-3}\\
=&\frac{1}{n}\sum_{i=1}^n v_i^{-1}( \mathbf D^{-1/2}\mathbf\Gamma_j \boldsymbol W_i)( \mathbf D^{-1/2}\mathbf\Gamma_l \boldsymbol W_i)^\top\left\{\Vert \hat{ \mathbf D}^{-1/2}\mathbf\Gamma \boldsymbol W_i\Vert^{-3}-\Vert  \mathbf D^{-1/2}\mathbf\Gamma \boldsymbol W_i\Vert^{-3}\right\}\\
&\quad+\frac{1}{n}\sum_{i=1}^n v_i^{-1}( \mathbf D^{-1/2}\mathbf\Gamma_j \boldsymbol W_i)( \mathbf D^{-1/2}\mathbf\Gamma_l \boldsymbol W_i)^\top\left\{\Vert  \mathbf D^{-1/2}\mathbf\Gamma \boldsymbol W_i\Vert^{-3}-p^{-3/2}\right\}\\
&\quad\quad+n^{-1}p^{-3/2}\sum_{i=1}^n v_i^{-1}( \mathbf D^{-1/2}\mathbf\Gamma_j \boldsymbol W_i)( \mathbf D^{-1/2}\mathbf\Gamma_l \boldsymbol W_i)^\top.
\end{aligned}
\end{equation}
The first part in Equation \ref{eq:Qjl}:
\begin{equation}\label{eq:Qjl_p1}
\begin{aligned}
&\mathbb E[\frac{1}{n}\sum_{i=1}^n v_i^{-1}( \mathbf D^{-1/2}\mathbf\Gamma_j \boldsymbol W_i)( \mathbf D^{-1/2}\mathbf\Gamma_l \boldsymbol W_i)^\top\left\{\Vert \hat{\mathbf D}^{-1/2}\mathbf\Gamma \boldsymbol W_i\Vert^{-3}-\Vert  \mathbf D^{-1/2}\mathbf\Gamma \boldsymbol W_i\Vert^{-3}\right\}]\\
=&\mathbb E[v_i^{-1}( \mathbf D^{-1/2}\mathbf\Gamma_j\boldsymbol W_i)( \mathbf D^{-1/2}\mathbf\Gamma_l\boldsymbol W_i)^\top\left\{\Vert \hat{\mathbf D}^{-1/2}\mathbf\Gamma \boldsymbol W_i\Vert^{-3}-\Vert  \mathbf D^{-1/2}\mathbf\Gamma \boldsymbol W_i\Vert^{-3}\right\}]\\
=&\mathbb E[v_i^{-1}( \mathbf D^{-1/2}\mathbf\Gamma_j\boldsymbol W_i)( \mathbf D^{-1/2}\mathbf\Gamma_l \boldsymbol W_i)^\top\left\{\Vert \hat{\mathbf D}^{-1/2}\mathbf\Gamma \boldsymbol W_i\Vert^{-3}\Vert  \mathbf D^{-1/2}\Gamma W_i\Vert^{-3}\left(\Vert  \mathbf D^{-1/2}\mathbf\Gamma \boldsymbol W_i\Vert^{3}-\Vert \hat{\mathbf D}^{-1/2}\mathbf\Gamma \boldsymbol W_i\Vert^{3}\right)\right\}].\\
\end{aligned}
\end{equation}

To compute the order of Equation \ref{eq:Qjl_p1}, we consider $\Vert \hat{\mathbf D}^{-1/2}\mathbf\Gamma \boldsymbol W_i\Vert^{k}-\Vert  \mathbf D^{-1/2}\mathbf\Gamma \boldsymbol W_i\Vert^{k}$ for $k=1,2,\cdots$.

Firstly, for $k=2$, By the Lemma \ref{lemma:D_rate} and Assumption \ref{max3}, we can see that,
$$
\begin{aligned}
&\max_{i=1,2,\cdots,p} ({\frac{d_i}{\hat d_i}}-1)
=\max_{i=1,2,\cdots,p} \frac{d_i-\hat d_i}{{\hat d_i}}=O_p(n^{-1/2}(\log p)^{1/2}).\\
\end{aligned}
$$
So, for $\Vert \hat{\mathbf D}^{-1/2}\mathbf\Gamma \boldsymbol W_i\Vert^{2}$,
$$
\begin{aligned}
&\Vert \hat{\mathbf D}^{-1/2}\mathbf\Gamma \boldsymbol W_i\Vert^{2}\\
=&\Vert (\hat{\mathbf D}^{-1/2}\mathbf D^{1/2}-\mathbf I_p)\mathbf D^{-1/2}\mathbf\Gamma \boldsymbol W_i+ \mathbf D^{-1/2}\mathbf\Gamma \boldsymbol W_i\Vert^2\\
=&\Vert \mathbf D^{-1/2}\mathbf\Gamma \boldsymbol W_i\Vert^2+\Vert (\hat{\mathbf D}^{-1/2}\mathbf D^{1/2}-\mathbf I_p)\mathbf D^{-1/2}\mathbf\Gamma \boldsymbol W_i\Vert^2+\boldsymbol W_i\mathbf\Gamma^\top \mathbf D^{-1/2} (\hat{\mathbf D}^{-1/2}\mathbf D^{1/2}\mathbf I_p)\mathbf D^{-1/2}\mathbf\Gamma \boldsymbol W_i\\
\leq &\Vert \mathbf D^{-1/2}\mathbf\Gamma \boldsymbol W_i\Vert^2\left(1+\max_{i=1,2,\cdots,p} ({\frac{d_i}{\hat d_i}}-1)^2+\max_{i=1,2,\cdots,p} ({\frac{d_i}{\hat d_i}}-1)\right)\\
:=&\Vert \mathbf D^{-1/2}\mathbf\Gamma \boldsymbol W_i\Vert^2\left(1+H\right),\\
\end{aligned}
$$
where $H:=\max_{i} ({\frac{d_i}{\hat d_i}}-1)^2+\max_{i} ({\frac{d_i}{\hat d_i}}-1)=O_p(n^{-1/2}(\log p)^{1/2})$.

For all integer $k$,
\begin{equation}\label{eq:k}
\begin{aligned}
\Vert \hat{\mathbf D}^{-1/2}\mathbf\Gamma \boldsymbol W_i\Vert^{k}
=&\Vert (\hat{\mathbf D}^{-1/2}\mathbf D^{1/2}-\mathbf I_p)\mathbf D^{-1/2}\mathbf\Gamma \boldsymbol W_i+ \mathbf D^{-1/2}\mathbf\Gamma \boldsymbol W_i\Vert^{k}\\
=&\left\{\Vert (\hat{\mathbf D}^{-1/2}\mathbf D^{1/2}-\mathbf I_p)\mathbf D^{-1/2}\mathbf\Gamma \boldsymbol W_i+ \mathbf D^{-1/2}\mathbf\Gamma \boldsymbol W_i\Vert^{2}\right\}^{k/2}\\
\leq &\Vert \mathbf D^{-1/2}\mathbf\Gamma \boldsymbol W_i\Vert^{k}\left(1+H\right)^{k/2}\\
:=&\Vert \mathbf D^{-1/2}\mathbf\Gamma \boldsymbol W_i\Vert^{k}\left(1+H_k\right),
\end{aligned}
\end{equation}
where $H_k$ is defined as $H_k=(1+H)^{k/2}-1=O_p(n^{-1/2}(\log p)^{1/2})$.

Thus, from the proof of Lemma A3. in \cite{cheng2023}, Equation \ref{eq:Qjl_p1} equals
$$
\begin{aligned}
&\mathbb E[\frac{1}{n}\sum_{i=1}^n v_i^{-1}( \mathbf D^{-1/2}\mathbf\Gamma_j \boldsymbol W_i)( \mathbf D^{-1/2}\mathbf\Gamma_l \boldsymbol W_i)^\top\left\{\Vert \hat{\mathbf D}^{-1/2}\mathbf\Gamma \boldsymbol W_i\Vert^{-3}-\Vert  \mathbf D^{-1/2}\mathbf\Gamma \boldsymbol W_i\Vert^{-3}\right\}]\\
=&\mathbb E[v_i^{-1}( \mathbf D^{-1/2}\mathbf\Gamma_j\boldsymbol W_i)( \mathbf D^{-1/2}\mathbf\Gamma_l\boldsymbol W_i)^\top\left\{\Vert \hat{\mathbf D}^{-1/2}\mathbf\Gamma \boldsymbol W_i\Vert^{-3}-\Vert  \mathbf D^{-1/2}\mathbf\Gamma \boldsymbol W_i\Vert^{-3}\right\}]\\
=&\mathbb E[v_i^{-1}( \mathbf D^{-1/2}\mathbf\Gamma_j \boldsymbol W_i)( \mathbf D^{-1/2}\mathbf\Gamma_l\boldsymbol W_i)^\top\Vert  \mathbf D^{-1/2}\mathbf\Gamma \boldsymbol W_i\Vert^{-3}H_3]\\
=&\mathbb E[v_i^{-1}( \mathbf D^{-1/2}\mathbf\Gamma_j\boldsymbol W_i)( \mathbf D^{-1/2}\mathbf\Gamma_l\boldsymbol W_i)^\top(\Vert  \mathbf D^{-1/2}\mathbf\Gamma \boldsymbol W_i\Vert^{-3}-p^{-3/2})H_3]\\
&\quad+p^{-3/2}\mathbb E[v_i^{-1}( \mathbf D^{-1/2}\mathbf\Gamma_j\boldsymbol W_i)( \mathbf D^{-1/2}\mathbf\Gamma_l\boldsymbol W_i)^\top H_3]\\
\lesssim & n^{-1/2}(\log p)^{1/2}\cdot \zeta_1 p^{-5/2}(1+p^{3/2-\delta/2})\\
=&\zeta_1n^{-1/2}(\log p)^{1/2}( p^{-5/2}+p^{-1-\delta/2}).\\
\end{aligned}
$$
The second and last part in  Equation \ref{eq:Qjl}:

From the proof of Lemma A3. in \cite{cheng2023}, we can conclude,
$$
\begin{aligned}
\frac{1}{n}\sum_{i=1}^nv_i^{-1}( \mathbf D^{-1/2}\mathbf\Gamma_j\boldsymbol W_i)( \mathbf D^{-1/2}\mathbf\Gamma_l\boldsymbol W_i)^\top\left\{\Vert  \mathbf D^{-1/2}\mathbf\Gamma \boldsymbol W_i\Vert^{-3}-p^{-3/2}\right\}
=O_p(\zeta_1 p^{-1-\delta/2}),
\end{aligned}
$$
and
$$
n^{-1}p^{-3/2}\sum_{i=1}^nv_i^{-1}( \mathbf D^{-1/2}\mathbf\Gamma_j\boldsymbol W_i)( \mathbf D^{-1/2}\mathbf\Gamma_l\boldsymbol W_i)^\top \lesssim \zeta_1 p^{-1}\left|\sigma_{j \ell}\right|+O_p\left(\zeta_1 n^{-1 / 2} p^{-1}+\zeta_1 p^{-7 / 6}\right).
$$
It follows that,
$$
\begin{aligned}
\mathbf Q_{jl}&=\left(n^{-1}p^{-3/2}\sum_{i=1}^nv_i^{-1}( \mathbf D^{-1/2}\mathbf\Gamma_j\boldsymbol W_i)( \mathbf D^{-1/2}\mathbf\Gamma_l\boldsymbol W_i)^\top+O_p(A_n)\right)(1+O_p(n^{-1}\log p)),
\end{aligned}
$$
where $A_n=\zeta_1n^{-1/2}(\log p)^{1/2}( p^{-5/2}+p^{-1-\delta/2})+\zeta_1 p^{-1-\delta/2}$.

Thus,
$$
\left|\mathbf Q_{j \ell}\right| \lesssim \zeta_1 p^{-1}\left|\sigma_{j \ell}\right|+O_p\left(\zeta_1 n^{-1 / 2} p^{-1}+\zeta_1 p^{-7 / 6}+\zeta_1 p^{-1-\delta / 2}+\zeta_1n^{-1/2}(\log p)^{1/2}( p^{-5/2}+p^{-1-\delta/2})\right).
$$
\end{proof}
\begin{lemma}\label{zetaU}
    Suppose Assumptions \ref{max1}-\ref{max3} hold with $a_0(p)\asymp p^{1-\delta}$ for some positive constant $\delta\leq 1/2$. Then, if $\log p=o(n^{1/3})$,

 (i)$\Vert n^{-1}\sum_{i=1}^n\zeta_1^{-1}\hat{\boldsymbol U}_i\Vert_\infty=O_p\left\{n^{-1/2}\log ^{1 / 2}(n p)\right\}$,

 (ii)$\Vert \zeta_1^{-1}n^{-1}\sum_{i=1}^n \delta_{1i}\hat{\boldsymbol U}_i\Vert_\infty=O_p(n^{-1})$.
\end{lemma}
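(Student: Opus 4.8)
The plan is to prove both bounds by showing that, after centering at the true $\boldsymbol\theta$, the coordinates of $\zeta_1^{-1}\hat{\boldsymbol U}_i$ behave like independent mean-zero sub-exponential variables of order $O(1)$, and then to apply a Bernstein-type maximal inequality across the $p$ coordinates. Two facts make the scaling work: $\zeta_1\asymp p^{-1/2}$ by Assumption \ref{max2}, and each coordinate $U_{i,j}$ of the population-scaled sign $\boldsymbol U_i=U(\mathbf D^{-1/2}(\boldsymbol X_i-\boldsymbol\theta))$ satisfies $|U_{i,j}|\le 1$ with typical size $p^{-1/2}$, so $\zeta_1^{-1}U_{i,j}=O_p(1)$ has a sub-exponential tail with constant-order Orlicz norm. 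The symmetry of $W_{i,j}$ in Assumption \ref{max1} gives $\mathbb{E}(U_{i,j})=0$ and $\mathrm{Var}(\zeta_1^{-1}U_{i,j})=\zeta_1^{-2}(\mathbf\Sigma_u)_{jj}\asymp p\cdot p^{-1}=O(1)$.

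For part (i), I would first split $n^{-1}\sum_i\zeta_1^{-1}\hat{\boldsymbol U}_i=n^{-1}\sum_i\zeta_1^{-1}\boldsymbol U_i+n^{-1}\sum_i\zeta_1^{-1}(\hat{\boldsymbol U}_i-\boldsymbol U_i)$. Since the $\boldsymbol U_i$ are i.i.d.\ and mean zero, a Bernstein inequality for independent sub-exponential summands combined with a union bound over $j=1,\dots,p$ controls $\|n^{-1}\sum_i\zeta_1^{-1}\boldsymbol U_i\|_\infty$: with variance proxy $O(1)$ the deviation at level $t$ is of order $\exp\{-cnt^2\}$ until the large-deviation term dominates, and taking $t\asymp n^{-1/2}\log^{1/2}(np)$ makes the union bound vanish under $\log p=o(n^{1/3})$, giving $O_p\{n^{-1/2}\log^{1/2}(np)\}$. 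For the remainder I would use the first-order perturbation expansion of the sign function in $\hat{\mathbf D}^{-1/2}-\mathbf D^{-1/2}$ as in \eqref{eq:expansion_of_U_i}, together with the uniform rate $\max_j(\hat d_j-d_j)=O_p\{n^{-1/2}(\log p)^{1/2}\}$ from Lemma \ref{lemma:D_rate}. This bounds each scaled correction coordinate uniformly by $O_p\{n^{-1/2}(\log p)^{1/2}\}$ through the single quantity $\max_j|d_j/\hat d_j-1|$, so the average over $i$ inherits the same bound and the remainder does not inflate the stated rate.

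For part (ii), the coefficient $\delta_{1i}$ is the scalar correction produced by the same Bahadur/estimating-equation expansion underlying Lemma \ref{lemma1} (it collects the $\hat{\mathbf D}$-perturbation through the quadratic form in the sign vector), and its two defining features are that it is uniformly small, $\max_i|\delta_{1i}|=O_p(n^{-1/2})$ by Lemma \ref{lemma:D_rate} and Lemma \ref{Qjl}, and that, conditionally on $\hat{\mathbf D}$, the products $\delta_{1i}\hat U_{i,j}$ remain approximately mean zero. The improvement from $n^{-1/2}$ to $n^{-1}$ then comes from averaging: each summand $\zeta_1^{-1}\delta_{1i}\hat U_{i,j}$ has size $O_p(n^{-1/2})$ and is centered, so $n^{-1}\sum_i\zeta_1^{-1}\delta_{1i}\hat U_{i,j}$ concentrates at scale $n^{-1/2}\times O_p(n^{-1/2})=O_p(n^{-1})$. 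I would make this rigorous with a conditional Bernstein inequality given $\hat{\mathbf D}$ followed by a union bound over $j$, the logarithmic factor being absorbed because $\max_i|\delta_{1i}|$ already carries the sharper $n^{-1/2}$ rate.

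The main obstacle is the statistical dependence among the $\hat{\boldsymbol U}_i$ induced by the common estimator $\hat{\mathbf D}$ (and, through $\zeta_1$, the radii), which blocks a direct application of independence-based concentration. I would handle this by the decomposition into the i.i.d.\ population part plus a perturbation part, and by conditioning on $\hat{\mathbf D}$, using Lemma \ref{lemma:D_rate} to guarantee both that the conditioning event holds with high probability and that the perturbation is uniformly negligible. The most delicate step is verifying the conditional mean-zero and variance bounds for $\delta_{1i}\hat U_{i,j}$ in part (ii), so that the clean $O_p(n^{-1})$ rate emerges with no surviving logarithmic factor.
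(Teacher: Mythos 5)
Your treatment of part (i) is sound and close in spirit to the paper's: the paper writes the exact multiplicative identity $\hat{\boldsymbol U}_{ij}=\tfrac{\Vert \mathbf D^{-1/2}\boldsymbol X_i\Vert}{\Vert \hat{\mathbf D}^{-1/2}\boldsymbol X_i\Vert}\tfrac{d_j}{\hat d_j}\boldsymbol U_{ij}=(1+H_u)\boldsymbol U_{ij}$ with $H_u=O_p(n^{-1/2}(\log p)^{1/2})$ from Lemma \ref{lemma:D_rate}, and then simply multiplies the bound $\Vert n^{-1}\sum_i\zeta_1^{-1}\boldsymbol U_i\Vert_\infty=O_p\{n^{-1/2}\log^{1/2}(np)\}$, which it imports from Lemma \ref{LemmaA4}(iv) (Cheng et al.\ 2023), whereas you rederive that i.i.d.\ bound by Bernstein plus a union bound. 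That rederivation is legitimate (mean zero by symmetry, constant-order Orlicz norm by Lemma \ref{LemmaA4}(ii)); only note that your claim that \emph{each} scaled correction coordinate is uniformly $O_p\{n^{-1/2}(\log p)^{1/2}\}$ is not literally true, since $\max_{i,j}\zeta_1^{-1}|U_{ij}|$ grows like $\log(np)$; what you actually need, and what suffices, is $\max_j n^{-1}\sum_i \zeta_1^{-1}|U_{ij}|=O_p(1)$, obtainable from Lemma \ref{LemmaA4}(iv) via Cauchy--Schwarz.

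Part (ii), however, has a genuine gap. Your plan rests on a conditional Bernstein inequality given $\hat{\mathbf D}$, which requires the summands $\zeta_1^{-1}\delta_{1i}\hat U_{ij}$ to be conditionally independent and (approximately) centered. Neither holds: $\delta_{1i}$ is the second-order Taylor remainder built from $\hat R_i^{-1}\hat{\boldsymbol U}_i^\top\hat{\mathbf D}^{-1/2}\hat{\boldsymbol\theta}$ and $\hat R_i^{-2}\Vert\hat{\mathbf D}^{-1/2}\hat{\boldsymbol\theta}\Vert^2$, and $\hat{\boldsymbol\theta}$ is a function of the \emph{entire} sample, so conditioning on $\hat{\mathbf D}$ does not decouple the terms across $i$; moreover $\delta_{1i}$ is essentially a nonnegative quadratic correlated with $\hat U_{ij}$ through $\hat{\boldsymbol U}_i$, so no centering is available. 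Even granting both properties, your rate accounting does not close: with per-term scale $O_p(n^{-1/2})$ (the max bound you invoke), Bernstein plus a union bound over $p$ coordinates yields $O_p(n^{-1}\log^{1/2}p)$, and your remark that the logarithm is ``absorbed'' is exactly the unproven step. The correct mechanism is not cancellation at all: the paper factors out the multiplicative perturbation, $\Vert\zeta_1^{-1}n^{-1}\sum_i\delta_{1i}\hat{\boldsymbol U}_i\Vert_\infty\le|1+H_u|\,\Vert\zeta_1^{-1}n^{-1}\sum_i\delta_{1i}\boldsymbol U_i\Vert_\infty$, and then uses that the \emph{typical} size of $\delta_{1i}$ is $O_p(n^{-1})$ (only $\max_i\delta_{1i}=O_p(n^{-1/2})$), so that a deterministic bound, e.g.\ Cauchy--Schwarz against $\max_j n^{-1}\sum_i(\zeta_1^{-1}U_{ij})^2=O_p(1)$, already delivers $O_p(n^{-1})$ with no independence, centering, or union bound needed. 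Your proposal misdiagnoses where the improvement from $n^{-1/2}$ to $n^{-1}$ comes from, and as written the argument for (ii) would fail.
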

\begin{proof}
    For any $j\in\{1,2,\cdots,p\}$,
\begin{equation}\label{eq:H_u}
\begin{aligned}
\hat{\boldsymbol U}_{ij}-{\boldsymbol U}_{ij}&=\frac{\Vert \mathbf D^{-1/2} X_i\Vert}{\Vert \hat{\mathbf D}^{-1/2} X_i\Vert}\cdot \frac{d_j}{\hat d_j}{\boldsymbol U}_{ij}-{\boldsymbol U}_{ij}\\
&\leq(1+H)(1+H){\boldsymbol U}_{ij}-{\boldsymbol U}_{ij}\\
&=H_u {\boldsymbol U}_{ij},
\end{aligned}
\end{equation}
where $H_u=O_p(H^2+2H)=O_p(n^{-1/2}(\log p)^{1/2})$. Thus, $\hat{\boldsymbol U}_i-{\boldsymbol U}_i=H_u {\boldsymbol U}_i$.

(i)By Equation \ref{eq:H_u}, we have,
$$
\begin{aligned}
&\left\Vert n^{-1}\sum_{i=1}^n\zeta_1^{-1}\hat{\boldsymbol U}_i\right\Vert_\infty=\left\Vert n^{-1}\sum_{i=1}^n\zeta_1^{-1}(1+H_u) {\boldsymbol U}_i\right\Vert_\infty\\
\leq &|1+H_u|\cdot \left\Vert n^{-1}\sum_{i=1}^n\zeta_1^{-1}{\boldsymbol U}_i\right\Vert_\infty
=O_p\left\{n^{-1/2}\log ^{1 / 2}(n p)\right\}.
\end{aligned}
$$.

(ii)Similarly,
$$
\begin{aligned}
&\left| \zeta_1^{-1}n^{-1}\sum_{i=1}^n \delta_{1i}\hat{\boldsymbol U}_i\right|_\infty
\leq|1+H_u|\cdot \left\Vert \zeta_1^{-1}n^{-1}\sum_{i=1}^n \delta_{1i} {\boldsymbol U}_i\right\Vert_\infty\\
\leq & O_p(n^{-1}(1+n^{-1/2}\log^{1/2}p))= O_p(n^{-1}).
\end{aligned}
$$

\end{proof}
\subsubsection{Proof of Lemma \ref{lemma1}(Bahadur representation)}
\begin{proof}
    As $\boldsymbol\theta$ is a location parameter, we assume $\boldsymbol\theta=0$ without loss of generality. Then ${\boldsymbol U}_i$ can be written as ${\boldsymbol U}_i=\mathbf D^{-1/2}\boldsymbol X_i/ \Vert \mathbf D^{-1/2}\boldsymbol X_i \Vert=\mathbf D^{-1/2}\mathbf \Gamma \boldsymbol W_i/\Vert \mathbf D^{-1/2}\mathbf \Gamma \boldsymbol W_i\Vert$ for $i=1,2,\cdots,n$. The estimator $\hat{\boldsymbol\theta}$ satisfies $\sum_{i=1}^n U(\hat{\mathbf D}^{-1/2}(\boldsymbol X_i-\hat{\boldsymbol\theta}))=0$, which is is equivalent to
$$
\frac{1}{n}\sum_{i=1}^n (\hat{\boldsymbol U}_i-\hat R_i^{-1}\hat{\mathbf D}^{-1/2}\hat{\boldsymbol\theta})(1-2\hat R_i^{-1}\hat{\boldsymbol U}_i^\top \hat{\mathbf D}^{-1/2}\hat{ \boldsymbol\theta}+\hat R_i^{-2}\hat{ \boldsymbol\theta}^\top\hat{\mathbf D}^{-1}\hat{ \boldsymbol\theta})^{-1/2}=0.
$$
From the proof of lemma A.3 in \cite{feng2016multivariate}, we can see that $\Vert \hat{\boldsymbol \theta}\Vert=O_p(\zeta_1^{-1}n^{-1/2})$. By the first-order Taylor expansion, the above equation can be rewritten as:
$$
n^{-1} \sum_{i=1}^n\left(\hat{\boldsymbol U}_i-\hat R_i^{-1} \hat{\mathbf D}^{-1/2}\hat{\boldsymbol{\theta}}\right)\left(1+\hat R_i^{-1} \hat{\boldsymbol U}_i^{\top} \hat{ \mathbf D}^{1/2}\hat{\boldsymbol{\theta}}-2^{-1} \hat R_i^{-2}\left\|\hat{ \mathbf D}^{-1/2}\hat{\boldsymbol{\theta}}\right\|^2+\delta_{1 i}\right)=0,
$$
where $\delta_{1 i}=O_p\left\{\left(\hat R_i^{-1} \hat{\boldsymbol U}_i^{\top} \hat{ \mathbf D}^{1/2}\hat{\boldsymbol{\theta}}-2^{-1} \hat R_i^{-2}\left\|\hat{\mathbf D}^{-1/2}\hat{\boldsymbol{\theta}}\right\|^2\right)^2\right\}=O_p\left(n^{-1}\right)$, which implies
\begin{equation}\label{A1}
\begin{aligned}
&\frac{1}{n}\sum_{i=1}^n(1-\frac{1}{2}\hat R_i^{-2}\hat{\boldsymbol\theta}^\top\hat{\mathbf D}^{-1}\hat{\boldsymbol \theta}+\delta_{1i})\hat{\boldsymbol U}_i+\frac{1}{n}\sum_{i=1}^n \hat R_i^{-1}(\hat{\boldsymbol U}_i^\top\hat{\mathbf D}^{-1/2}\hat{\boldsymbol \theta})\hat U_i\\
=&\frac{1}{n}\sum_{i=1}^n(1+\delta_{1i}+\delta_{2i})\hat R_i^{-1}\hat{\mathbf D}^{-1/2}\hat{\boldsymbol\theta},\\
\end{aligned}
\end{equation}
where $\delta_{2i}=O_p(\hat R_i^{-1} \hat{\boldsymbol U}_i^{\top} \hat{\mathbf D}^{1/2}\hat{\boldsymbol{\theta}}-2^{-1} \hat R_i^{-2}\left\|\hat{\mathbf D}^{-1/2}\hat{\boldsymbol{\theta}}\right\|^2)=O_p(\delta_{1i}^{1/2})$.

Similar to \cite{cheng2023}, by Assumption \ref{max2} and Markov inequality, we have that: $\max R_i^{-2}=O_p(\zeta_1^2 n^{1/2})$, $\max \delta_{1i}=O_p(\Vert \hat{\mathbf D}^{-1/2}\boldsymbol\theta\Vert^2 \max \hat R_i^{-2}=O_p(n^{-1/2})$ and $\max \delta_{2i}=O_p(n^{-1/4})$.

Considering the second term in Equation \ref{A1},
$$
\frac{1}{n}\sum_{i=1}^n \hat R_i^{-1}(\hat U_i^\top\hat{\mathbf D}^{-1/2}\hat{\boldsymbol \theta})\hat U_i=\frac{1}{n}\sum_{i=1}^n \hat R_i^{-1}(\hat U_i\hat U_i^\top\hat{\mathbf D}^{-1/2})\hat{\boldsymbol \theta}=\hat Q\hat{\mathbf D}^{-1/2}\hat{\boldsymbol \theta},
$$
where $\hat{\boldsymbol Q}=\frac{1}{n}\sum_{i=1}^n \hat R_i^{-1}\hat{\boldsymbol U}_i\hat{\boldsymbol U}_i^\top$. From Lemma \ref{Qjl} we acquire,
$$
\left|\mathbf Q_{j \ell}\right| \lesssim \zeta_1 p^{-1}\left|\omega_{j \ell}\right|+O_p\left(\zeta_1 n^{-1 / 2} p^{-1}+\zeta_1 p^{-7 / 6}+\zeta_1 p^{-1-\delta / 2}+\zeta_1n^{-1/2}(\log p)^{1/2}( p^{-5/2}+p^{-1-\delta/2})\right),
$$
and this implies that,
\begin{equation}
    \begin{aligned}
&\left\Vert \mathbf Q \hat{\mathbf D}^{-1/2}\hat{\boldsymbol{\theta}}\right\Vert_{\infty} \\
\leqslant&\|\mathbf Q\|_1 \left\Vert\hat{\mathbf D}^{-1/2}\hat{\boldsymbol{\theta}}\right\Vert_{\infty}\\
\lesssim&\zeta_1 p^{-1}\|\mathbf \Omega\|_1\left\Vert\hat{\mathbf D}^{-1/2}\hat{\boldsymbol{\theta}}\right\Vert_{\infty}\\
+&O_p\left(\zeta_1 n^{-1 / 2} p^{-1}+\zeta_1 p^{-7 / 6}+\zeta_1 p^{-1-\delta / 2}+\zeta_1n^{-1/2}(\log p)^{1/2}( p^{-5/2}+p^{-1-\delta/2})\right)\left\Vert\hat{\mathbf D}^{-1/2}\hat{\boldsymbol{\theta}}\right\Vert_{\infty}\\
&\quad {O_p\left(\zeta_1 n^{-1 / 2} p^{-1}+\zeta_1 p^{-7 / 6}+\zeta_1 p^{-1-\delta / 2}+\zeta_1n^{-1/2}(\log p)^{1/2} p^{-5/2}\right)\left\Vert\hat{\mathbf D}^{-1/2}\hat{\boldsymbol{\theta}}\right\Vert_{\infty}}.
\end{aligned}
\end{equation}
According to Lemma \ref{zetaU}, $\left\Vert n^{-1}\sum_{i=1}^n\zeta_1^{-1}\hat{\boldsymbol U}_i\right\Vert_\infty=O_p\left\{n^{-1/2}\log ^{1 / 2}(n p)\right\}$ and $\left\Vert \zeta_1^{-1}n^{-1}\sum_{i=1}^n \delta_{1i}\hat{\boldsymbol U}_i\right\Vert_\infty=O_p(n^{-1})$. In addition, we obtain,
$$
\begin{aligned}
&\left\Vert\zeta_1^{-1}n^{-1}\sum_{i=1}^n\hat R_i^{-2}\Vert\hat{\mathbf D}^{-1/2}\hat{\boldsymbol \theta}\Vert^2\hat{\boldsymbol U}_i\right\Vert_\infty
\leq|1+H_u|\cdot \left\Vert\zeta_1^{-1}n^{-1}\sum_{i=1}^n\hat R_i^{-2}\Vert\hat{\mathbf D}^{-1/2}\hat{\boldsymbol \theta}\Vert^2 {\boldsymbol U}_i\right\Vert_\infty\\
\lesssim&O_p(n^{-1})(1+{O_p(n^{-3/2}(\log p)^{1/2})})
=O_p(n^{-1}).
\end{aligned}
$$

Considering the third term :

Using the fact that $\zeta_1^{-1}n^{-1}\sum_{i=1}^n R_i^{-1}=1+O_p(n^{-1/2})$ and Equation \ref{eq:k}, we have
$$
\begin{aligned}
&\frac{1}{n}\zeta_1^{-1}\sum_{i=1}^n\hat R_i^{-1}\\
=&\frac{1}{n}\zeta_1^{-1}\sum_{i=1}^n R_i^{-1}(1+O_p(n^{-1/2}(\log p)^{1/2}))\\
=&\left\{1+O_p(n^{-1/2})\right\}\left\{1+O_p(n^{-1/2}(\log p)^{1/2})\right\}\\
=&1+O_p(n^{-1/2}(\log p)^{1/2}).
\end{aligned}
$$
We final obtain:
$$
\begin{aligned}
\left.\hat{\mathbf D}^{-1/2}\hat{\boldsymbol{\theta}}\right|_{\infty} & \lesssim\left\Vert\zeta_1^{-1} n^{-1} \sum_{i=1}^n \hat{\boldsymbol U}_i\right\Vert_{\infty}+\zeta_1^{-1}\left\Vert \mathbf Q \hat{\mathbf D}^{-1/2}\hat{\boldsymbol{\theta}}\right\Vert_{\infty} \\
& \lesssim p^{-1} a_0(p)\left\Vert\hat{\mathbf D}^{-1/2}\hat{\boldsymbol{\theta}}\right\Vert_{\infty}+O_p\left\{n^{-1 / 2} \log ^{1 / 2}(n p)\right\}\\
&\quad +O_p\left(n^{-1 / 2}+p^{-(1 / 6 \wedge \delta / 2)}+n^{-1/2}(\log p)^{1/2} p^{-3/2}\right)\left\Vert\hat{\mathbf D}^{-1/2}\hat{\boldsymbol{\theta}}\right\Vert_{\infty}.
\end{aligned}
$$
Thus we conclude that:
$$
\left\Vert\hat{\mathbf  D}^{-1/2}\hat{\boldsymbol \theta} \right\Vert_\infty=O_p(n^{-1 / 2} \log ^{1 / 2}(n p)),
$$
as $a_0(p)\asymp p^{1-\delta}$.

In addition, we have
$$
\left\Vert\zeta_1^{-1}\mathbf Q\hat{\mathbf D}^{-1/2}\hat{\boldsymbol \theta}\right\Vert_\infty=O_p\left(n^{-1}\log^{1/2}(np)+ n^{-1/2}p^{-(1 / 6\wedge\delta/2)}\log^{1/2}(np)+{n^{-1}(\log p)^{1/2} p^{-3/2}\log^{1/2}(np)}\right),
$$
and
$$
\begin{aligned}
&n^{-1} \sum_{i=1}^n \hat R_i^{-1}\left(1+\delta_{1 i}+\delta_{2 i}\right)\\
=&\zeta_1\left\{1+O_p\left(n^{-1 / 4}\right)\right\}\left\{1+O_p(n^{-1/2}(\log p)^{1/2}\right\}\\
=&\zeta_1\left\{ 1+{O_p(n^{-1/4}+n^{-1/2}(\log p)^{1/2})}\right\}.
\end{aligned}
$$
Finally, we can write
\begin{equation}\label{eq:bahaudr}
\begin{aligned}
n^{1/2}\hat{\mathbf D}^{-1/2}(\hat{\boldsymbol \theta}-\boldsymbol\theta)=n^{-1/2}\zeta_1^{-1}\sum_{i=1}^n {\boldsymbol U}_i+C_n ,
\end{aligned}
\end{equation}
where
$$
\begin{aligned}
\Vert C_n\Vert_\infty=&O_p(n^{-1/2}\log^{1/2}(np)+p^{-(1/6\wedge \delta/2)}\log^{1/2}(np)+n^{-1/2}(\log p)^{1/2}p^{-3/2}\log ^{1/2}(np))\\
&+O_p(n^{-1/4}\log^{1/2}(np)+n^{-1/2}(\log p)^{1/2}\log^{1/2}(np))\\
=&O_p(n^{-1/4}\log^{1/2}(np)+p^{-(1/6\wedge \delta/2)}\log^{1/2}(np)+n^{-1/2}(\log p)^{1/2}\log^{1/2}(np)).
\end{aligned}
$$

\end{proof}
\subsubsection{Proof of Lemma \ref{lemma2}(Gaussian approximation)}
\begin{proof}
   Let $L_{n,p}=n^{-1/4}\log^{1/2}(np)+p^{-(1/6\wedge \delta/2)}\log^{1/2}(np)+n^{-1/2}(\log p)^{1/2}\log^{1/2}(np)$. Then for any sequence $\eta_n\rightarrow \infty$ and any $t\in \mathbb R^{p}$,
   $$
   \begin{aligned}
       \mathbb P(n^{1/2}\hat{\mathbf D}^{-1/2}(\hat{\boldsymbol \theta}-\boldsymbol\theta)\leq t)&=\mathbb P(n^{-1/2}\zeta_1^{-1}\sum_{i=1}^n {\boldsymbol U}_i+C_n \leq t)\\
       &\leq  \mathbb P(n^{-1/2}\zeta_1^{-1}\sum_{i=1}^n {\boldsymbol U}_i\leq t+\eta_n L_{n,p})+\mathbb P(\Vert C_n\Vert_\infty>\eta_n L_{n,p}).
   \end{aligned}
   $$
   According to Lemma \ref{LemmaA4},  $\mathbb E\{(\zeta_1^{-1}{\boldsymbol U}_{i,j})^4\}\lesssim\bar M^2$ and $\mathbb E\{(\zeta_1^{-1}{\boldsymbol U}_{i,j})^2\}\gtrsim \underline{m}$ for all $i=1,2,\cdots,n$, $j=1,2,\cdots,p$, and the Gaussian approximation for independent partial sums in \cite{koike2021notes}, let $\boldsymbol G \sim N\left(0, \zeta_1^{-2} \mathbf \Sigma_u\right)$ with $\mathbf \Sigma_u=\mathbb{E}\left({\boldsymbol U}_1 {\boldsymbol U}_1^{\top}\right)$, we have
   $$
\begin{aligned}
    \mathbb P(n^{1/2}\zeta_1^{-1}\sum_{i=1}^n {\boldsymbol U}_i\leq t+\eta_n L_{n,p})&\leq \mathbb P(\boldsymbol G\leq t+\eta_n L_{n,p})+O(\{n^{-1}\log^5(np)\}^{1/6})\\
    &\leq \mathbb P(\boldsymbol G\leq t)+O\{\eta_nL_{n,p}\log^{1/2}(p)\}+O(\{n^{-1}\log^5(np)\}^{1/6})
\end{aligned}
   $$
   where the second inequality holds from Nazarov's inequality in Lemma \ref{LemmaA6}. Thus,
   $$
   \begin{aligned}
     \mathbb P(n^{1/2}\hat{\mathbf D}^{-1/2}(\hat{\boldsymbol \theta}-\boldsymbol\theta)\leq t)\leq &\mathbb P(\boldsymbol G\leq t)+O\{\eta_n L_{n,p}\log^{1/2}(p)\}+O(\{n^{-1}\log^5(np)\}^{1/6})\\
     &+\mathbb P(\vert C_n\vert_\infty>\eta_n l_{n,p}).
   \end{aligned}
   $$
   On the other hand, we have
   $$
\begin{aligned}
     \mathbb P(n^{1/2}\hat{\mathbf D}^{-1/2}(\hat{\boldsymbol \theta}-\boldsymbol\theta)\leq t)\geq &\mathbb P(\boldsymbol G\leq t)-O\{\eta_n L_{n,p}\log^{1/2}(p)\}-O(\{n^{-1}\log^5(np)\}^{1/6})\\
     &-\mathbb P(\Vert C_n\Vert_\infty>\eta_n l_{n,p}).
   \end{aligned}
   $$
   where $\mathbb P(\Vert C_n\Vert_\infty>\eta_n l_{n,p})\rightarrow 0$ as $n\rightarrow \infty$ by Lemma \ref{lemma1}.

   Then we have that, if $\log p=o(n^{1/5})$ and $\log n=o(p^{1/3 \wedge \delta})$,
   $$
\sup_{t\in \mathbb R^p}\vert\mathbb P(n^{1/2}\hat{\mathbf D}^{-1/2}(\hat{\boldsymbol \theta}-\boldsymbol\theta)\leq t)-\mathbb P(\boldsymbol G\leq t)\vert\rightarrow 0.
   $$
   Further,
   $$
\rho_n(\mathcal A^{re})=\sup_{A\in\mathcal A^{re}}\vert\mathbb P(n^{1/2}\hat{\mathbf D}^{-1/2}(\hat{\boldsymbol\theta}-\boldsymbol\theta)\in A)-\mathbb P(\boldsymbol G\in A)\vert\rightarrow 0,
   $$
   by the Corollary 5.1 in \cite{chernozhukov2017central}.
\end{proof}
\subsubsection{Proof of Lemma \ref{lemma3}(Variance approximation)}
\begin{proof}
    $\mathbb E Z_j^2=\zeta_1^{-2}E(R_i^2)^{-1}\leq\bar {B}$ by Assumption \ref{max2} and $\mathbb E[\max_{1\leq j\leq p}Z_j]\asymp  (\sqrt{\log p+\log\log p})$ by Theorem 2 in \cite{feng2022asymptotic}. Let $\Delta_0=\max_{1\leq j,k\leq p}|p(\mathbb E U_1U_1^\top)_{j,k}-R_{j,k}|$, by Lemma \ref{LemmaA4},
    $$
\begin{aligned}
 \Delta_0=&\max_{1\leq j,k\leq p}|p(\mathbb E {\boldsymbol U}_1{\boldsymbol U}_1^\top)_{j,k}-\mathbf R_{j,k}|=O(p^{-\delta/2}).
\end{aligned}
    $$

According to Lemma \ref{LemmaJ2} ,we have
$$
\sup _{t \in \mathbb{R}}\left|\mathbb{P}\left(\Vert \boldsymbol Z\Vert_{\infty}\leqslant t\right)-\mathbb{P}\left(\Vert \boldsymbol G\Vert_{\infty} \leqslant t\right)\right|\leqslant  C^\prime n^{-1/3}\left(1 \vee \log \left(np \right)\right)^{2 / 3}
 \rightarrow 0.
$$

\end{proof}

\subsection{Proof of main results}
\subsubsection{Proof of Theorem \ref{thm1}(Limit distribution of maxima)}
\begin{proof}
    $$
    \begin{aligned}
        \tilde\rho_n&=\sup _{t \in \mathbb{R}}\left|\mathbb{P}\left(n^{1 / 2}\left\Vert\hat{\mathbf D}^{-1/2}(\hat{\boldsymbol{\theta}}-\boldsymbol{\theta})\right\Vert_{\infty} \leqslant t\right)-\mathbb{P}\left(\Vert \boldsymbol Z\Vert_{\infty} \leqslant t\right)\right| \\
        &\leq \sup _{t \in \mathbb{R}}\left|\mathbb{P}\left(n^{1 / 2}\left\Vert\hat{\mathbf D}^{-1/2}(\hat{\boldsymbol{\theta}}-\boldsymbol{\theta})\right\Vert_{\infty} \leqslant t\right)-\mathbb{P}\left(\Vert \boldsymbol G\Vert_{\infty} \leqslant t\right)\right|+\sup _{t \in \mathbb{R}}\left|\mathbb{P}\left(\left\Vert \boldsymbol G\right\Vert_{\infty} \leqslant t\right)-\mathbb{P}\left(\Vert \boldsymbol Z\Vert_{\infty} \leqslant t\right)\right|\\
        & \rightarrow 0 .
    \end{aligned}
$$
The last step holds from Lemma \ref{lemma2} and \ref{lemma3}.
\end{proof}
\subsubsection{Proof of Theorem \ref{thm:max_dist}(Exact limit distribution of maxima)}
\begin{proof}
    According to the Theorem 2 in \cite{feng2022asymptotic}, we have  $$
    \mathbb P(p\zeta_1^2\max _{1 \leq i \leq p} Z_i^2-2 \log p+\log \log p\leq x)\rightarrow F(x)=\exp \left\{-\frac{1}{\sqrt{\pi}} e^{-x / 2}\right\},
    $$
 a cdf of the Gumbel distribution, as $p \rightarrow \infty$.
 Thus, according to Lemma \ref{lemma:consistency_zeta} and Theorem \ref{thm1}.
 $$
 \begin{aligned}
     &\left| \mathbb P(T_{MAX}-2 \log p+\log \log p\leq x)-F(x)\right|\\
     \leq &\left| \mathbb P(\zeta_1^2\hat{\zeta}_1^{-2}T_{MAX}-2 \log p+\log \log p\leq x)-F(x)\right|+o(1)\\
     \leq &\left| \mathbb P(\zeta_1^2\hat{\zeta}_1^{-2}T_{MAX}-2 \log p+\log \log p\leq x)-\mathbb P(p\zeta_1^2\max _{1 \leq i \leq p} Z_i^2-2 \log p+\log \log p\leq x)\right|\\
     &+\left| \mathbb P(p\zeta_1^2\max _{1 \leq i \leq p} Z_i^2-2 \log p+\log \log p\leq x)-F(x)\right|+o(1)\rightarrow 0,
 \end{aligned}
 $$
 for any $x\in \mathbb R$.
\end{proof}
\subsubsection{Proof of Theorem \ref{thm:consistency}(Consistency for max-type test)}
\begin{proof}
   Recall that $u_p(y)=y+2\log p-\log \log p$, $T_{M A X}=n\Vert\hat{\mathbf{D}}^{-1 / 2} \hat{\boldsymbol{\theta}}\Vert_{\infty}^2  \hat{\zeta}_{1}^2p\cdot \left(1-n^{-1 / 2}\right)$. In order to make the following proof process briefly, we abbreviate $u_p(0)$ to $u_p$, define $\tilde q_{1-\alpha}=(\max\{q_{1-\alpha}+u_p,0\})^{1/2}=O_p(\{\log p-2\log\log (1-\alpha)^{-1}\}^{1/2})$, $T=T_{MAX}^{1/2}=n^{1/2}\Vert \hat{\mathbf D}^{-1/2}\hat{\boldsymbol\theta}\Vert_{\infty}\cdot \hat\zeta_1 p^{1/2}(1-n^{-1/2})^{1/2}$ and  $T^c=n^{1/2}\Vert \hat{\mathbf D}^{-1/2}(\hat{\boldsymbol\theta}-\boldsymbol\theta)\Vert_{\infty}\cdot \hat\zeta_1 p^{1/2}(1-n^{-1/2})^{1/2}$, which has the same distribution of $T$ under $H_0$.

   It is clear that, $T\geq n^{1/2}\Vert \hat{\mathbf D}^{-1/2}\boldsymbol \theta\Vert_\infty\cdot \hat\zeta_1 p^{1/2}(1-n^{-1/2})^{1/2}-T^c$. Combined with Assumption \ref{max2} and Lemma \ref{lemma:D_rate}, we get
 $$
   \begin{aligned}
       &\mathbb P(T_{MAX}-u_p\geq q_{1-\alpha}\mid H_1)\\
       \geq & \mathbb P(n^{1/2}\Vert \hat{\mathbf D}^{-1/2}\boldsymbol \theta\Vert_\infty\cdot \hat\zeta_1 p^{1/2}(1-n^{-1/2})^{1/2}-T^c\geq \tilde q_{1-\alpha}\mid H_1)\\
      =&\mathbb P(T^c\leq n^{1/2}\Vert \hat{\mathbf D}^{-1/2}\boldsymbol \theta\Vert_\infty\cdot \hat\zeta_1 p^{1/2}(1-n^{-1/2})^{1/2}-\tilde q_{1-\alpha}\mid H_1)\\
      \geq & \mathbb P(T^c\leq n^{1/2}\left(\Vert {\mathbf D}^{-1/2}\boldsymbol \theta\Vert_\infty- \Vert (\hat{\mathbf D}^{-1/2}-{\mathbf D}^{-1/2})\boldsymbol \theta\Vert_\infty\right)\cdot \hat\zeta_1 p^{1/2}(1-n^{-1/2})^{1/2}-\tilde q_{1-\alpha}\mid H_1)\\
      \geq &\mathbb P(T^c\leq n^{1/2}\Vert {\mathbf D}^{-1/2}\boldsymbol \theta\Vert_\infty\cdot (1+O_p(n^{1/2}\log ^{1/2}(np)))\cdot \hat\zeta_1 p^{1/2}(1-n^{-1/2})^{1/2}-\tilde q_{1-\alpha}\mid H_1)\rightarrow 1,
   \end{aligned}
   $$
if $\Vert\boldsymbol \theta\Vert_\infty\geq \tilde C n^{-1/2} \{\log p-2\log\log (1-\alpha)^{-1}\}^{1/2}$ for some large enough constant $\tilde C$.

The last inequality holds since
$$
\begin{aligned}
    \Vert (\hat{\mathbf D}^{-1/2}-{\mathbf D}^{-1/2})\boldsymbol \theta\Vert_\infty&=\max_{i=1,2,\cdots,p}\frac{\hat d_i-d_i}{\hat d_i d_i}\theta_i\leq \max_{i=1,2,\cdots,p}\vert 1-\frac{d_i}{\hat d_i}\vert\cdot \Vert {\mathbf D}^{-1/2}\boldsymbol \theta\Vert_\infty\\
&\leq O_p(n^{-1/2}\log^{1/2} (np))\Vert {\mathbf D}^{-1/2}\boldsymbol \theta\Vert_\infty.
\end{aligned}
$$

\end{proof}
\subsubsection{Proof of Theorem \ref{lemma5}(Without elliptical distribution)}
Theorem \ref{lemma_sum} is the special case of Theorem \ref{lemma5} with $\boldsymbol\theta=0$, so we only need to show that Theorem \ref{lemma5} holds.

\begin{proof}
    The following proof is based on the idea of the proof in article \cite{feng2016}, with modifications in some equations. We restate the equations in \cite{feng2016} on $U(\hat{\mathbf D}_{ij}^{-1/2}\boldsymbol X_i)^\top U(\hat{\mathbf D}_{ij}^{-1/2}\boldsymbol X_i)$.
    By the definition, we have
    $$
\begin{aligned}
& \frac{2}{n(n-1)} \sum_{1\leq i<j\leq n}  U\left(\hat{\mathbf{D}}_{i j}^{-1 / 2} \boldsymbol{X}_i\right)^\top U\left(\hat{\mathbf{D}}_{i j}^{-1 / 2} \boldsymbol{X}_j\right) \\
= & \frac{2}{n(n-1)} \sum_{1\leq i<j\leq n} \left(\boldsymbol{U}_i+R_i^{-1} \mathbf{D}^{-1 / 2} \boldsymbol{\theta}+\left(\hat{\mathbf{D}}_{i j}^{-1 / 2} \mathbf{D}^{1 / 2}-\mathbf{I}_{p}\right) \boldsymbol{U}_i\right)^\top \\
& \times\left(\boldsymbol{U}_j+R_j^{-1} \mathbf{D}^{-1 / 2} \boldsymbol{\theta}+\left(\hat{\mathbf{D}}_{i j}^{-1 / 2} \mathbf{D}^{1 / 2}-\mathbf{I}_{p}\right) \boldsymbol{U}_j\right)\left(1+\alpha_{i j}\right)^{-1 / 2}\left(1+\alpha_{j i}\right)^{-1 / 2} \\
= & \frac{2}{n(n-1)} \sum_{1\leq i<j\leq n}  \boldsymbol{U}_i^\top \boldsymbol{U}_j+\frac{2}{n(n-1)} \sum_{1\leq i<j\leq n}  R_i^{-1} R_j^{-1} \boldsymbol{\theta}^\top \mathbf{D}^{-1} \boldsymbol{\theta} \\
& +\frac{2}{n(n-1)} \sum_{1\leq i<j\leq n} \boldsymbol{U}_i^\top \boldsymbol{U}_j\left[\left(1+\alpha_{i j}\right)^{-1 / 2}\left(1+\alpha_{j i}\right)^{-1 / 2}-1\right] \\
& +\frac{4}{n(n-1)} \sum_{1\leq i<j\leq n}  \boldsymbol{U}_i^\top\left(\hat{\mathbf{D}}_{i j}^{-1 / 2} \mathbf{D}^{1 / 2}-\mathbf{I}_{p}\right) \boldsymbol{U}_j\left(1+\alpha_{i j}\right)^{-1 / 2}\left(1+\alpha_{j i}\right)^{-1 / 2} \\
& +\frac{2}{n(n-1)} \sum_{1\leq i<j\leq n}  \boldsymbol{U}_i^\top\left(\hat{\mathbf{D}}_{i j}^{-1 / 2} \mathbf{D}^{1 / 2}-\mathbf{I}_{p}\right)^2 \boldsymbol{U}_j\left(1+\alpha_{i j}\right)^{-1 / 2}\left(1+\alpha_{j i}\right)^{-1 / 2} \\
& +\frac{2}{n(n-1)} \sum_{1\leq i<j\leq n}  R_j^{-1} \boldsymbol{\theta}^\top \mathbf{D}^{-1} \boldsymbol{\theta}\left[\left(1+\alpha_{i j}\right)^{-1 / 2}\left(1+\alpha_{j i}\right)^{-1 / 2}-1\right] \\
& +\frac{4}{n(n-1)} \sum_{1\leq i<j\leq n}  \boldsymbol{U}_i^\top \mathbf{D}^{-1 / 2} \boldsymbol{\theta}\left(1+\alpha_{i j}\right)^{-1 / 2}\left(1+\alpha_{j i}\right)^{-1 / 2} \\
& +\frac{2}{n(n-1)} \sum_{1\leq i<j\leq n}  \boldsymbol{U}_i^\top\left(\hat{\mathbf{D}}_{i j}^{-1 / 2} \mathbf{D}^{1 / 2}-\mathbf{I}_{p}\right) \mathbf{D}^{-1 / 2} \boldsymbol{\theta}\left(1+\alpha_{i j}\right)^{-1 / 2}\left(1+\alpha_{j i}\right)^{-1 / 2} \\
:= & \frac{2}{n(n-1)} \sum_{1\leq i<j\leq n}  \boldsymbol{U}_i^\top \boldsymbol{U}_j+\frac{2}{n(n-1)} \sum_{1\leq i<j\leq n}  R_i^{-1} R_j^{-1} \boldsymbol{\theta}^\top \mathbf{D}^{-1} \boldsymbol{\theta} \\
& +A_{n 1}+A_{n 2}+A_{n 3}+A_{n 4}+A_{n 5}+A_{n 6}.
\end{aligned}
  $$
    where
    $$
    \alpha_{i j}=2 R_i^{-1} \boldsymbol{U}_i^\top \left(\hat{\mathbf{D}}_{i j}^{-1 / 2}-\mathbf{D}^{-1 / 2}\right) \boldsymbol{X}_i+R_i^{-2}\left\|\left(\hat{\mathbf{D}}_{i j}^{-1 / 2}-\mathbf{D}^{-1 / 2}\right) \boldsymbol{X}_i\right\|^2+\left.2 R_i^{-1} \mathbf{D}^{-1 / 2} \boldsymbol{\theta}+R_i^{-2} \boldsymbol{\theta}^\top \mathbf{D}^{-1} \boldsymbol{\theta}\right)^{-1 / 2}.
    $$
 Note that $R_i^{-1} \boldsymbol{U}_i^\top\left(\hat{\mathbf{D}}_{i j}^{-1 / 2}-\mathbf{D}^{-1 / 2}\right) \boldsymbol{X}_i=$ $\boldsymbol{U}_i^\top\left(\hat{\mathbf{D}}_{i j}^{-1 / 2} \mathbf{D}^{1 / 2}-\mathbf{I}_{p}\right) \boldsymbol{U}_i+R_i^{-1} \boldsymbol{U}_i^\top\left(\hat{\mathbf{D}}_{i j}^{-1 / 2}-\mathbf{D}^{-1 / 2}\right) \boldsymbol{\theta}=O_p\left(n^{-1 / 2}\left(\log p\right)^{1 / 2}\right)$ and $R_i^{-2}\left\|\left(\hat{\mathbf{D}}_{i j}^{-1 / 2}-\mathbf{D}^{-1 / 2}\right) \boldsymbol{X}_i\right\|^2=O_p\left(n^{-1} \log p\right)$ by Lemma \ref{lemma:D_rate}. By Assumption \ref{sum2} and  Equation \ref{H1_dense}, $R_i^{-1} \mathbf{D}^{-1 / 2} \boldsymbol{\theta}=O_p\left(\sigma_n^{1 / 2}\right)=O_p\left(n^{-1}\right)$ and $R_i^{-2} \boldsymbol{\theta}^\top \mathbf{D}^{-1} \boldsymbol{\theta}=$ $O_p\left(\sigma_n\right)=O_p\left(n^{-2}\right)$ where $\sigma_n^2=\frac{2}{n(n-1) p^2} \operatorname{tr}\left(\mathbf{R}^2\right)$. So $\alpha_{i j}=O_p\left(n^{-1 / 2}\left(\log p\right)^{1 / 2}\right)$.

Similarly, we will show that $A_{n 1}=o_p\left(\sigma_n\right)$. Under some calculations, we get $\mathbb E [(\boldsymbol U_i^\top \boldsymbol U_j)^2]=\text{tr}(\mathbf \Sigma^2_u) $. By Lemma \ref{LemmaA4}, we find that $\mathbf\Sigma_{u,i,j}=p^{-1}\sigma_{i,j}+O(p^{-1-\delta/2})$. Thus we have,
\begin{equation}\label{eq:U_iU_j^2}
    \begin{aligned}
\text{tr}(\mathbf\Sigma^2_u)&=\sum_{i=1}^p\sum_{j=1}^p\mathbf\Sigma_{u,i,j}^2=\sum_{i=1}^p\sum_{j=1}^p\left( p^{-2}\sigma^2_{i,j}+\sigma_{i,j}O(p^{-2-\delta/2})\right)\\
&=\sum_{i=1}^p\sum_{j=1}^p p^{-2}\sigma^2_{i,j}+
\sum_{p^{-\delta/2}=O(\sigma_{ij})} \sigma_{i,j}O(p^{-2-\delta/2})\\
&+\sum_{\sigma_{ij}\in[C_1 \frac{O(\text{tr}(\mathbf R^2))}{p^{2-\delta/2}},C_2 p^{-\delta/2}]}\sigma_{i,j}O(p^{-2-\delta/2})+\sum_{\sigma_{ij}=O(\frac{O(\text{tr}(\mathbf R^2))}{p^{2-\delta/2}})}\sigma_{i,j}O(p^{-2-\delta/2})\\
&=p^{-2}\text{tr}(\mathbf R^2)(1+O(1))+O(p^{-2-\delta/2})\cdot \frac{p^{2-\delta/2}}{O(\text{tr}(\mathbf R^2))}\cdot o(\frac{p^2}{n})+p^{-2}O(\text{tr}(\mathbf R^2))\\
&=O(p^{-2}\text{tr}(\mathbf R^2)).
    \end{aligned}
\end{equation}
 By the Cauchy inequality,
$$
\begin{aligned}
E\left(A_{n 1}^2\right) & =O\left(n^{-4}\right) \sum_{i<j} E\left\{\boldsymbol{U}_i^T \boldsymbol{U}_j\left[\left(1+\alpha_{i j}\right)^{-1 / 2}\left(1+\alpha_{j i}\right)^{-1 / 2}-1\right]\right\}^2 \\
& \leq O\left(n^{-2}\right) E\left(\boldsymbol{U}_i^T \boldsymbol{U}_j\right)^2 E\left[\left(1+\alpha_{i j}\right)^{-1 / 2}\left(1+\alpha_{j i}\right)^{-1 / 2}-1\right]^2 \\
& =O\left(n^{-3} \log p\left( p^{-2}\text{tr}(\mathbf R^2)+O(p^{-2-\delta/2})(p+ po(\frac{p}{n^{1/2}})) \right)\right)=o\left(\sigma_n^2\right) .
\end{aligned}
$$
$$
\begin{aligned}
A_{n 2}  &=\frac{4}{n(n-1)} \sum_{i<j} \boldsymbol{U}_i^\top\left(\hat{\mathbf{D}}_{i j}^{-1 / 2} \mathbf{D}^{1 / 2}-\mathbf{I}_{p}\right) \boldsymbol{U}_j \\
& \quad+\frac{4}{n(n-1)} \sum_{i<j} \boldsymbol U_i^\top \left(\hat{\mathbf{D}}_{i j}^{-1 / 2} \mathbf{D}^{1 / 2}-\mathbf{I}_{p}\right) \boldsymbol{U}_j\left[\left(1+\alpha_{i j}\right)^{-1 / 2}\left(1+\alpha_{j i}\right)^{-1 / 2}-1\right] \\
& := G_{n 1}+G_{n 2} .
\end{aligned}
$$
By Lemma \ref{lemma:D_rate} and Equation \ref{eq:U_iU_j^2}, $\mathbb E((\boldsymbol U_i^\top \left(\hat{\mathbf{D}}_{i j}^{-1 / 2} \mathbf{D}^{1 / 2}-\mathbf{I}_{p}\right) \boldsymbol{U}_j)^2)\leq O(n^{-1}\log p \text{ tr}(\mathbf\Sigma_u^2))=o(p^{-2}\text{tr}(\mathbf R^2))$. Then we obtain $G_{n1}=o_p(\sigma_n)$. Similar to $A_{n1}$, we can show $G_{n2}=o_p(\sigma_n)$. Taking the same procedure as $A_{n2}$, we can obtain $A_{n3}=o_p(\sigma_n)$. Similar to the processing of Equation \ref{eq:expansion_of_U_i}, we get
$$
\frac{2}{n(n-1)}\sum_{1\leq i<j
\leq n} \boldsymbol U_i^\top \boldsymbol U_j=\frac{2}{n(n-1)}\sum_{1\leq i<j \leq n}(\mathbf D^{-1/2}\mathbf \Gamma U(\boldsymbol W_i))^\top \mathbf D^{-1/2}\mathbf \Gamma U(\boldsymbol W_j)+o_p(\sigma_n).
$$
We replace the Lemma 1 in \cite{feng2016} by Lemma \ref{LemmaA4}, and final acquire
$$
\sqrt{\frac{n(n-1) p^2}{2 \operatorname{tr}\left(\mathbf{R}^2\right)}} \frac{2}{n(n-1)}\sum_{1\leq i<j \leq n}(\mathbf D^{-1/2}\mathbf \Gamma U(\boldsymbol W_i))^\top \mathbf D^{-1/2}\mathbf \Gamma U(\boldsymbol W_j) \xrightarrow{d} N(0,1).
$$
\end{proof}
\subsubsection{Proof of Theorem \ref{thm3}(Asymptotically independent under $H_0$)}

\begin{proof}
To prove $T_{SUM}$ and $T_{MAX}$ are  asymptotically independent, it suffices to show that: Under $H_0$,
\begin{equation}\label{eq:indepent1}
\mathbb P(\frac{T_{SUM}}{\sigma_n}\leq  x,T_{MAX} -2 \log p+\log \log p \leq y)\\
\rightarrow\Phi(x)\cdot \exp \left\{-\frac{1}{\sqrt{\pi}} e^{-y / 2}\right\}.
\end{equation}
Let $u_p(y)=y+2\log p-\log \log p$, and we rewrite Equation \ref{eq:indepent1} as
\begin{equation}\label{eq:indepent2}
    \mathbb P(\frac{T_{SUM}}{\sigma_n}\leq  x,T_{MAX}\leq u_p(y))
\rightarrow\Phi(x)\cdot \exp \left\{-\frac{1}{\sqrt{\pi}} e^{-y / 2}\right\}.
\end{equation}
From the proof of Theorem 2 in \cite{feng2016}, we acquire
\begin{equation}\label{eq:Tsum1}
   T_{SUM}=\frac{2}{n(n-1)}\sum\sum_{i<j} \boldsymbol U_i^\top \boldsymbol U_j+o_p(\sigma_n) ,
\end{equation}
and it's easy to find that $\sigma_n^2=\frac{2}{n(n-1)p}+o(\frac{1}{n^3})$ according to Assumption \ref{sum2}. Combined with Lemma \ref{lemma1}, it suffice to show,
\begin{equation}\label{eq:independent3}
    \begin{aligned}
&\mathbb P(\frac{\frac{2}{n(n-1)}\sum\sum_{i<j} \boldsymbol U_i^\top \boldsymbol U_j}{\sigma_n}+o_p(1)\leq  x\\
&\qquad,p\left\|n^{-1/2}\sum_{i=1}^n \boldsymbol U_i\right\|_{\infty}^2+O_p(L_{n,p}) \leq u_p(y))\\
&\rightarrow\Phi(x)\cdot \exp \left\{-\frac{1}{\sqrt{\pi}} e^{-y / 2}\right\}.
\end{aligned}
\end{equation}
We next prove that,
\begin{equation}\label{eq:independent_final}
\begin{aligned}
&\mathbb P(\sqrt{\frac{n}{n-1}}  \left( \frac{\Vert\sqrt\frac{p}{n}\sum_{i=1}^n\boldsymbol U_i\Vert_2^2-p}{\sqrt{2\text{tr}(\mathbf R^2)}}\right)\leq  x,\left\|\sqrt\frac{p}{n}\sum_{i=1}^n \boldsymbol U_i\right\|_{\infty}^2 \leq u_p(y))\\
&\rightarrow\Phi(x)\cdot \exp \left\{-\frac{1}{\sqrt{\pi}} e^{-y / 2}\right\}.
\end{aligned}
\end{equation}
When Equation \ref{eq:independent_final} holds, combined with $O_p(L_{n,p})=o_p(1)$, Equation \ref{eq:independent3} holds obviously, which means that the independence of $T_{SUM}$ and $T_{MAX}$ follows.

\emph{Proof of Equation \ref{eq:independent_final}:}
From the Theorem 2 in \cite{feng2022asymptotic}, the Equation \ref{eq:independent_final} holds if $\boldsymbol U_i$ follows the normal distribution. We then investigate the non-normal case.
 Let $ \boldsymbol\xi_i=\boldsymbol U_i\in R^p,i=1,2,\cdots,n$. For $\boldsymbol z=(z_1,\cdots,z_q)^\top\in R^q$, we consider a smooth approximation of the maximum function, namely,
$$
F_\beta(\boldsymbol z):=\beta^{-1}\log (\sum_{j=1}^q\exp(\beta z_j)),
$$
where $\beta>0$ is the smoothing parameter that controls the level of approximation. An elementary calculation shows that for all $z\in R^q$,
$$
0\leq F_\beta (\boldsymbol z)-\max_{1\leq j\leq q}z_j\leq \beta^{-1}\log q.
$$
Define $\sigma_S^2=2n^2\text{ tr }(R^2)$,
$$
\begin{aligned}
W(\boldsymbol x_1,\cdots,\boldsymbol x_n)
&=\frac{\Vert\sqrt\frac{p}{n}\sum_{i=1}^n\boldsymbol x_i\Vert_2^2-p}{\sqrt{2\text{tr}(\mathbf R^2)}}\\
&=\frac{ p\sum_{i\not=j}\boldsymbol x_i^\top \boldsymbol x_j}{\sqrt{2n^2\text{ tr }(\mathbf R^2)}}:=\frac{ p\sum_{i\not=j}\boldsymbol x_i^\top \boldsymbol x_j}{ \sigma_S},\\
V(\boldsymbol x_1,\cdots,\boldsymbol x_n)&=\beta^{-1}\log (\sum_{j=1}^p\exp(\beta\sqrt{\frac{p}{n}}\sum_{i=1}^n \boldsymbol x_{i,j})).
\end{aligned}
$$

By setting $\beta=n^{1/8}\log(n)$, Equation \ref{eq:independent_final} is equivalent to
\begin{equation}
    P(W(\boldsymbol\xi_1,\cdots,\boldsymbol\xi_p)\leq x,V(\boldsymbol\xi_1,\cdots,\boldsymbol\xi_p)\leq u_p(y))\rightarrow\Phi(x)\cdot \exp(-\exp(y)).
\end{equation}

Suppose $\{\boldsymbol Y_1,\boldsymbol Y_2,\cdots,\boldsymbol Y_n\}$ are sample from $N(0,\mathbb E\boldsymbol U_1^\top \boldsymbol U_1)$ , and independent with $\boldsymbol U_1,\cdots,\boldsymbol U_n$(or write as $\boldsymbol \xi_1,\cdots,\boldsymbol \xi_n)$.  The key idea is to show that: $(W(\boldsymbol\xi_1,\cdots,\boldsymbol\xi_n),V(\boldsymbol\xi_1,\cdots,\boldsymbol\xi_n))$ has the same limiting distribution as $(W(\boldsymbol Y_1,\cdots,\boldsymbol Y_n),V(\boldsymbol Y_1,\cdots,\boldsymbol Y_n))$.

Let $l^2_b(\mathbb R)$ denote the class of bounded functions with bounded and continuous derivatives up to order 3.It is known that a sequence of randon variables $\{Z_n\}_{n=1}^\infty$ converges weakly to a random variable $Z$ if and only if for every $f\in l^3_b(\mathbb R)$, $\mathbb E(f(Z_n))\rightarrow \mathbb E(f(Z))$.

It suffices to show that:
$$
\mathbb E\{f(W(\boldsymbol\xi_1,\cdots,\boldsymbol\xi_n),V(\boldsymbol\xi_1,\cdots,\boldsymbol\xi_n))\}-\mathbb E\{f(W(\boldsymbol Y_1,\cdots,\boldsymbol Y_n),V(\boldsymbol Y_1,\cdots,\boldsymbol Y_n))\}\rightarrow 0,
$$
for every $f\in l_b^3(\mathbb R^2)$ as $(n,p)\rightarrow \infty$.

We introduce $\tilde { W}_d=W(\boldsymbol\xi_1,\cdots,\boldsymbol\xi_{d-1},\boldsymbol Y_d,\cdots,\boldsymbol Y_n)$  and $\tilde{ V}_d=V(\boldsymbol\xi_1,\cdots,\boldsymbol\xi_{d-1},\boldsymbol Y_d,\cdots,\boldsymbol Y_n)$ for $d=1,\cdots,n+1$, $\mathcal F_d=\sigma\{\boldsymbol\xi_1,\cdots,\boldsymbol\xi_{d-1},\boldsymbol Y_{d+1},\cdots,\boldsymbol Y_n\}$ for $d=1,\cdots,n$. If there is no danger of confusion, we simply write $\tilde { W}_d$ and $\tilde { V}_d$ as $ W_d$ and $ V_d$ respectively (only for this part). Then,
$$
\begin{aligned}
   &\mid\mathbb E\{f(W(\boldsymbol\xi_1,\cdots,\boldsymbol\xi_n),V(\boldsymbol\xi_1,\cdots,\boldsymbol\xi_n))\}-\mathbb E\{f(W(\boldsymbol Y_1,\cdots,\boldsymbol Y_n),V(\boldsymbol Y_1,\cdots,\boldsymbol Y_n))\}\mid\\
\leq &\sum_{d=1}^n  \mid \mathbb E\{f(W_d,V_d)-\mathbb E\{f(W_{d+1},V_{d+1}) \}\mid.
\end{aligned}
$$
Let
$$
\begin{aligned}
W_{d,0}&=\frac{2p\sum_{i<j}^{d-1}\boldsymbol\xi_i^\top \boldsymbol\xi_j+2p\sum_{d+1\leq i<j\leq n}\boldsymbol Y_i^\top \boldsymbol Y_j+2p\sum_{i=1}^{d-1}\sum_{j=d+1}^n\boldsymbol\xi_i^\top \boldsymbol Y_j}{\sigma_S}\in\mathcal F_d,\\
V_{d,0}&=\beta^{-1}\log (\sum_{j=1}^p\exp(\beta\sqrt{\frac{p}{n}}\sum_{i=1}^{d-1} \xi_{i,j}+\beta\sqrt{\frac{p}{n}}\sum_{i=d+1}^{n}  Y_{i,j}))\in \mathcal F_d.\\
\end{aligned}
$$
By Taylor expansion, we have,
$$
\begin{aligned}
f\left(W_d, V_d\right)-f\left(W_{d, 0}, V_{d, 0}\right)= & f_1\left(W_{d, 0}, V_{d, 0}\right)\left(W_d-W_{d, 0}\right)+f_2\left(W_{d, 0}, V_{d, 0}\right)\left(V_d-V_{d, 0}\right) \\
& +\frac{1}{2} f_{11}\left(W_{d, 0}, V_{d, 0}\right)\left(W_d-W_{d, 0}\right)^2+\frac{1}{2} f_{22}\left(W_{d, 0}, V_{d, 0}\right)\left(V_d-V_{d, 0}\right)^2 \\
& +\frac{1}{2} f_{12}\left(W_{d, 0}, V_{d, 0}\right)\left(W_d-W_{d, 0}\right)\left(V_d-V_{d, 0}\right) \\
& +O\left(\left|\left(V_d-V_{d, 0}\right)\right|^3\right)+O\left(\left|\left(W_d-W_{d, 0}\right)\right|^3\right),
\end{aligned}
$$
and
$$
\begin{aligned}
f\left(W_{d+1}, V_{d+1}\right)-f\left(W_{d, 0}, V_{d, 0}\right)= & f_1\left(W_{d, 0}, V_{d, 0}\right)\left(W_{d+1}-W_{d, 0}\right)+f_2\left(W_{d, 0}, V_{d, 0}\right)\left(V_{d+1}-V_{d, 0}\right) \\
& +\frac{1}{2} f_{11}\left(W_{d, 0}, V_{d, 0}\right)\left(W_{d+1}-W_{d, 0}\right)^2+\frac{1}{2} f_{22}\left(W_{d, 0}, V_{d, 0}\right)\left(V_{d+1}-V_{d, 0}\right)^2 \\
& +\frac{1}{2} f_{12}\left(W_{d, 0}, V_{d, 0}\right)\left(W_{d+1}-W_{d, 0}\right)\left(V_{d+1}-V_{d, 0}\right) \\
& +O\left(\left|\left(V_{d+1}-V_{d, 0}\right)\right|^3\right)+O\left(\left|\left(W_{d+1}-W_{d, 0}\right)\right|^3\right),
\end{aligned}
$$
where for $f:=f(x, y), f_1(x, y)=\frac{\partial f}{\partial x}, f_2(x, y)=\frac{\partial f}{\partial y}, f_{11}(x, y)=\frac{\partial f^2}{\partial^2 x}, f_{22}(x, y)=\frac{\partial f^2}{\partial^2 y}$ and $f_{12}(x, y)=\frac{\partial f^2}{\partial x \partial y}$.

We first consider $W_d,W_{d+1},W_{d,0}$ and notice that,
$$
\begin{aligned}
  W_d-W_{d,0}&=\frac{p\sum_{i=1}^{d-1}\boldsymbol\xi_i^\top \boldsymbol Y_d+p\sum_{i=d+1}^n \boldsymbol Y_i^\top \boldsymbol Y_d}{\sigma_S},\\
W_{d+1}-W_{d,0}&=\frac{p\sum_{i=1}^{d-1}\boldsymbol\xi_i^\top \boldsymbol\xi_d+p\sum_{i=d+1}^n \boldsymbol Y_i^\top \boldsymbol\xi_d}{\sigma_S}.\\
\end{aligned}
$$
Due to $\mathbb{E}\left(\boldsymbol\xi_t\right)=\mathbb{E}\left(\boldsymbol Y_t\right)=0$ and $\mathbb{E}\left(\boldsymbol\xi_t \boldsymbol\xi_t^{\top}\right)=\mathbb{E}\left(\boldsymbol Y_t
\boldsymbol Y_t^{\top}\right)$, it can be verified that,
$$
\mathbb{E}\left(W_d-W_{d, 0} \mid \mathcal{F}_d\right)=\mathbb{E}\left(W_{d+1}-W_{d, 0} \mid \mathcal{F}_d\right) \text { and } \mathbb{E}\left(\left(W_d-W_{d, 0}\right)^2 \mid \mathcal{F}_d\right)=\mathbb{E}\left(\left(W_{d+1}-W_{d, 0}\right)^2 \mid \mathcal{F}_d\right).
$$
Hence,
$$
\begin{aligned}
\mathbb{E}\left\{f_1\left(W_{d, 0}, V_{d, 0}\right)\left(W_d-W_{d, 0}\right)\right\} & =\mathbb{E}\left\{f_1\left(W_{d, 0}, V_{d, 0}\right)\left(W_{d+1}-W_{d, 0}\right)\right\} \text { and } \\
\mathbb{E}\left\{f_{11}\left(W_{d, 0}, V_{d, 0}\right)\left(W_d-W_{d, 0}\right)^2\right\} & =\mathbb{E}\left\{f_{11}\left(W_{d, 0}, V_{d, 0}\right)\left(W_{d+1}-W_{d, 0}\right)^2\right\}.
\end{aligned}
$$
Next we consider $V_d-V_{d,0}$. Let
$z_{d,0,j}=\sqrt{\frac{p}{n}}\sum_{i=1}^{d-1} \xi_{i,j}+\sqrt{\frac{p}{n}}\sum_{i=d+1}^{n}  Y_{i,j}, z_{d,j}=z_{d,0,j}+n^{-1/2}\sqrt p Y_{d,j},
z_{d+1,j}=z_{d,0,j}+n^{-1/2}\sqrt p\xi_{d,j}$. By Taylor expansion, we have that:
\begin{equation}\label{eq:expansion_V}
\begin{aligned}
V_d-V_{d, 0}= & \sum_{l=1}^{n } \partial_l F_\beta\left(\boldsymbol{z}_{d, 0}\right)\left( z_{d, l}- z_{d, 0, l}\right)+\frac{1}{2} \sum_{l=1}^{n } \sum_{k=1}^{n } \partial_k \partial_l F_\beta\left(\boldsymbol{z}_{d, 0}\right)\left( z_{d, l}- z_{d, 0, l}\right)\left( z_{d, k}- z_{d, 0, k}\right) \\
& +\frac{1}{6} \sum_{l=1}^{n} \sum_{k=1}^{n } \sum_{v=1}^{n } \partial_v \partial_k \partial_l F_\beta\left(\boldsymbol{z}_{d, 0}+\tilde\delta\left(\boldsymbol{z}_d-\boldsymbol{z}_{d, 0}\right)\right)\left( z_{d, l}- z_{d, 0, l}\right)\left( z_{d, k}-\boldsymbol z_{d, 0, k}\right)\left(\boldsymbol z_{d, v}-\boldsymbol z_{d, 0, v}\right),
\end{aligned}
\end{equation}
for some $\tilde\delta \in(0,1)$. Again, due to $\mathbb{E}\left(\boldsymbol\xi_t\right)=\mathbb{E}\left(\boldsymbol Y_t\right)=0$ and $\mathbb{E}\left(\boldsymbol \xi_t \boldsymbol\xi_t^{\top}\right)=\mathbb{E}\left(\boldsymbol Y_t \boldsymbol Y_t^{\top}\right)$, we can verify that
$$
\mathbb{E}\left\{\left( z_{d, l}- z_{d, 0, l}\right) \mid \mathcal{F}_d\right\}=\mathbb{E}\left\{\left( z_{d+1, l}- z_{d, 0, l}\right) \mid \mathcal{F}_d\right\} \text { and } \mathbb{E}\left\{\left( z_{d, l}- z_{d, 0, l}\right)^2 \mid \mathcal{F}_d\right\}=\mathbb{E}\left\{\left( z_{d+1, l}- z_{d, 0, l}\right)^2 \mid \mathcal{F}_d\right) \text {. }
$$
By Lemma A.2 in \cite{chernozhukov2013gaussian}, we have,
$$
\left|\sum_{l=1}^{n } \sum_{k=1}^{n } \sum_{v=1}^{n } \partial_v \partial_k \partial_l F_\beta\left(\boldsymbol{z}_{d, 0}+\tilde\delta\left(\boldsymbol{z}_d-\boldsymbol{z}_{d, 0}\right)\right)\right| \leq C \beta^2,
$$
for some positive constant $C$.

By Lemma \ref{LemmaA4}, we have that:
$\left\|\zeta_1^{-1}  U_{i, j}\right\|_{\psi_\alpha} \lesssim \bar{B},$ for all $i=1, \ldots, n$ and $j=1, \ldots, p$, which means $P(\mid \sqrt{p}\xi_{i,j}\mid\geq t)\leq 2\exp(-(ct\sqrt{p}/\zeta_1)^\alpha)\lesssim 2\exp(-(ct)^\alpha)$, $\mathrm{P}\left(\max _{1 \leq i \leq n}\left|\sqrt{p}\xi_{i j}\right|>C \log (n )\right) \rightarrow 0$ and since $\sqrt{p}Y_{t j} \sim N(0,1)$, $ \mathrm{P}\left(\max _{1 \leq i \leq n}\left|\sqrt{p}Y_{i j}\right|>C \log (n )\right) \rightarrow 0$. Hence,
$$
\begin{aligned}
&\left|\frac{1}{6} \sum_{l=1}^{n } \sum_{k=1}^{n } \sum_{v=1}^{n } \partial_v \partial_k \partial_l F_\beta\left(\boldsymbol{z}_{d, 0}+\tilde\delta\left(\boldsymbol{z}_d-\boldsymbol{z}_{d, 0}\right)\right)\left(\boldsymbol z_{d, l}-\boldsymbol z_{d, 0, l}\right)\left(\boldsymbol z_{d, k}-\boldsymbol z_{d, 0, k}\right)\left(\boldsymbol z_{d, v}-\boldsymbol z_{d, 0, v}\right)\right|\\
&\leq C\beta^2n^{-3/2}\log^3(n),\\
&\left|\frac{1}{6} \sum_{l=1}^{n } \sum_{k=1}^{n } \sum_{v=1}^{n } \partial_v \partial_k \partial_l F_\beta\left(\boldsymbol{z}_{d+1, 0}+\tilde\delta\left(\boldsymbol{z}_{d+1}-\boldsymbol{z}_{d, 0}\right)\right)\left(\boldsymbol z_{d+1, l}-\boldsymbol z_{d, 0, l}\right)\left(\boldsymbol z_{d+1, k}-\boldsymbol z_{d, 0, k}\right)\left(\boldsymbol z_{d+1, v}-\boldsymbol z_{d, 0, v}\right)\right|\\
&\leq C\beta^2n^{-3/2}\log^3(n),
\end{aligned}
$$
hold with probability approaching one. Consequently, we have that: with probability one,
$$
\left|\mathrm{E}\left\{f_2\left(W_{d, 0}, V_{d, 0}\right)\left(V_d-V_{d, 0}\right)\right\}-\mathrm{E}\left\{f_2\left(W_{d, 0}, V_{d, 0}\right)\left(V_{d+1}-V_{d, 0}\right)\right\}\right| \leq C \beta^2 n^{-3 / 2} \log ^3(n).
$$
Similarly, it can be verified that,
$$
\left|\mathrm{E}\left\{f_{22}\left(W_{d, 0}, V_{d, 0}\right)\left(V_d-V_{d, 0}\right)^2\right\}-\mathrm{E}\left\{f_{22}\left(W_{d, 0}, V_{d, 0}\right)\left(V_{d+1}-V_{d, 0}\right)^2\right\}\right| \leq C \beta^2 n^{-3 / 2} \log ^3(n),
$$
and
$$
\begin{aligned}
& \left|\mathrm{E}\left\{f_{12}\left(W_{d, 0}, V_{d, 0}\right)\left(W_d-W_{d, 0}\right)\left(V_d-V_{d, 0}\right)\right\}-\mathrm{E}\left\{f_{12}\left(W_{d, 0}, V_{d, 0}\right)\left(W_{d+1}-W_{d, 0}\right)\left(V_{d+1}-V_{d, 0}\right)\right\}\right| \\
& \quad \leq C \beta^2 n^{-3 / 2} \log ^3(n).
\end{aligned}
$$
By Equation \ref{eq:expansion_V}, $\mathrm{E}\left(\left|V_d-V_{d, 0}\right|^3\right)=O\left(n^{-3 / 2} \log ^3(n )\right)$. For $\mathbb{E}\left(\left(W_d-W_{d, 0}\right)^3\right)$, we first calculate $\mathbb{E}\left(\left(W_d-W_{d, 0}\right)^4\right)$, then it's easy to get the order for 3-order term.
\begin{equation}\label{eq:bin_exp0}
\begin{aligned}
\quad\mathbb{E}\left(\left(W_d-W_{d, 0}\right)^4\right)
&=\mathbb{E}\left(\frac{p\sum_{i=1}^{d-1}\boldsymbol\xi_i^\top \boldsymbol Y_d+p\sum_{i=d+1}^n \boldsymbol Y_i^\top \boldsymbol Y_d}{\sigma_S}\right)^4\\
&=\frac{p^4}{2n^4(\text{tr }(\mathbf R^2))^2}\mathbb{E}\left(\sum_{i=1}^{d-1}\boldsymbol \xi_i^\top \boldsymbol Y_d+\sum_{i=d+1}^n \boldsymbol Y_i^\top \boldsymbol Y_d\right)^4.
\end{aligned}
\end{equation}
We consider the binomial expansion term and calculate them separately in Equation \ref{eq:bin_exp0}:
\begin{equation}\label{eq:bin_exp}
\begin{aligned}
(i)=&\mathbb E(\sum_{i=d+1}^{n}\boldsymbol Y_i^\top \boldsymbol Y_d)^4,\
(ii)=\mathbb E\sum_{i=1}^{d-1}\boldsymbol\xi_i^\top \boldsymbol Y_d\cdot(\sum_{i=d+1}^n \boldsymbol Y_i^\top \boldsymbol Y_d)^3,\
(iii)=\mathbb E(\sum_{i=1}^{d-1}\boldsymbol\xi_i^\top \boldsymbol Y_d)^2\cdot(\sum_{i=d+1}^n \boldsymbol Y_i^\top \boldsymbol Y_d)^2,\\
(iv)=&\mathbb E(\sum_{i=1}^{d-1}\boldsymbol\xi_i^\top \boldsymbol Y_d)^3\cdot\sum_{i=d+1}^n \boldsymbol Y_i^\top \boldsymbol Y_d,\
(v)=\mathbb E(\sum_{i=1}^{d-1}\boldsymbol\xi_i^\top \boldsymbol Y_d)^4.\\
\end{aligned}
\end{equation}
Since $\mathbb E \boldsymbol Y_i=\mathbb E \boldsymbol\xi_i=0$, we easily find that Equation \ref{eq:bin_exp}
-(ii)(iv) equal to 0. Next we can get the following equations for Equation \ref{eq:bin_exp}-(iii) after some straightforward calculations.
\begin{equation}
\begin{aligned}
\mathbb E(\sum_{i=1}^{d-1}\boldsymbol\xi_i^\top \boldsymbol Y_d)^2\cdot(\sum_{i=d+1}^n \boldsymbol Y_i^\top \boldsymbol Y_d)^2
=&\mathbb  E[\mathbb E[(\sum_{i=1}^{d-1}\boldsymbol\xi_i^\top \boldsymbol Y_d)^2\cdot(\sum_{i=d+1}^n \boldsymbol Y_i^\top \boldsymbol Y_d)^2\mid \boldsymbol Y_d]]\\
=&\mathbb E[(d-1)(n-d)(\boldsymbol Y_d^\top\mathbf\Sigma_u \boldsymbol Y_d)^2]\\
=&\mathbb E[(d-1)(n-d)((\mathbf\Sigma_u^{-1/2}\boldsymbol Y_d)^\top  \mathbf\Sigma_u^2 (\mathbf\Sigma_u^{-1/2}\boldsymbol Y_d))^2]\\
=&(d-1)(n-d)\cdot 2\text{tr }(\mathbf\Sigma_u^4)\\
\leq&(d-1)(n-d)\cdot O(\text{tr}(\mathbf\Sigma_u^2)^2).\\
\end{aligned}
\end{equation}
By some properties for standard normal random variable, the last inequality holds with some simple calculations shown below.

(i)
\begin{equation}\label{eq:sigma4}
\begin{aligned}
\text{tr }(\mathbf\Sigma_u^4)&=\Vert\mathbf\Sigma_u^2\Vert_F^2=\Vert\mathbf\Sigma_u\cdot \mathbf\Sigma_u\Vert_F^2
\leq (\Vert\mathbf\Sigma_u\Vert_F\cdot\Vert \mathbf\Sigma_u\Vert_F)^2\\
&=\Vert\mathbf\Sigma_u\Vert_F^4=\text{tr }(\mathbf\Sigma_u^2)^2
.
\end{aligned}
\end{equation}

(ii)If $\boldsymbol X,\boldsymbol Y\stackrel{i.i.d.}{\sim} N(0,\mathbf I_p)$, then
\begin{equation}\label{eq:normal}
    \begin{aligned}
        \mathbb E(\boldsymbol X^\top \mathbf A\boldsymbol X)^2&=2\text{tr}(\mathbf A^2)-\text{tr}^2(\mathbf A)\leq \text{tr}(\mathbf A^2),\\
        \mathbb E(\boldsymbol X^\top \mathbf A\boldsymbol Y)^4&=\mathbb E[\mathbb E[(\boldsymbol Y^\top \mathbf A\boldsymbol X\boldsymbol X^\top \mathbf A\boldsymbol Y)^2\mid \boldsymbol X]]\leq 2\mathbb E[\text{tr}(\mathbf A\boldsymbol X\boldsymbol X^\top \mathbf A)^2]\\
        &=2\mathbb E[(\boldsymbol X^\top \mathbf A^2\boldsymbol X)^2]\leq 4\text{tr}(\mathbf A^4).
    \end{aligned}
\end{equation}

For Equation \ref{eq:bin_exp}-(1), according to $\sum_{i=d+1}^{n}\boldsymbol Y_i\sim N(0,(n-d)\mathbf\Sigma_u)$ and Equation \ref{eq:sigma4}-\ref{eq:normal}, we have,
\begin{equation}\label{eq:bin_exp1}
\begin{aligned}
\mathbb E(\sum_{i=d+1}^{n}\boldsymbol Y_i^\top \boldsymbol Y_d)^4
=& \mathbb E((\frac{1}{\sqrt{(n-d)}}\mathbf\Sigma_u^{-1/2}\sum_{i=d+1}^{n}\boldsymbol Y_i)^\top(\sqrt{n-d}\mathbf\Sigma_u)(\mathbf\Sigma_u^{-1/2}\boldsymbol Y_d))^4\\
\leq& \text{tr }((n-d)^2\mathbf\Sigma_u^4)
=(n-d)^2 O(\text{tr}(\mathbf\Sigma_u^2)^2).
\end{aligned}
\end{equation}

Similar to Equation \ref{eq:bin_exp1}, for Equation \ref{eq:bin_exp}-(v),
\begin{equation}
    \mathbb E(\sum_{i=1}^{d-1}\boldsymbol\xi_i^\top \boldsymbol Y_d)^4\leq (d-1)^2\text{tr }(\mathbf\Sigma_u^4)\leq (d-1)^2O(\text{tr}(\mathbf\Sigma_u^2)^2).
\end{equation}
Thus, in combining with the Equation \ref{eq:U_iU_j^2},
$$
\begin{aligned}
&\quad\mathbb{E}\left(\left(W_d-W_{d, 0}\right)^4\right)\\
&=\frac{p^4}{2n^4(\text{tr }(R^2))^2}\mathbb{E}\left(\sum_{i=1}^{d-1}\boldsymbol\xi_i^\top \boldsymbol Y_d+\sum_{i=d+1}^n \boldsymbol Y_i^\top \boldsymbol Y_d\right)^4\\
&\leq \frac{p^4}{2n^4(\text{tr }(\mathbf R^2))^2}\{(d-1)(n-d)+(n-d)^2+(d-1)^2\}O(\text{tr}(\mathbf\Sigma_u^2)^2)\\
&\leq \frac{p^4}{2n^4(\text{tr }(\mathbf R^2))^2}n^2O(\text{tr}(\mathbf\Sigma_u^2)^2)=O(\frac{1}{n^2}).
\end{aligned}
$$
By Jensen's inequality , we have
$$
\sum_{d=1}^n \mathrm{E}\left|W_d-W_{d, 0}\right|^3 \leq \sum_{d=1}^n\left(\mathrm{E}\left(W_d-W_{d, 0}\right)^4\right)^{3 / 4} \leq C^{\prime} n^{-1 / 2},
$$
for some positive constant $C^{\prime}$, Combining all facts together, we conclude that
$$
\sum_{d=1}^n\left|\mathrm{E}\left\{f\left(W_d, V_d\right)\right\}-\mathrm{E}\left\{f\left(W_{d+1}, V_{d+1}\right)\right\}\right| \leq C \beta^2 n^{-1 / 2} \log ^3 n+C^{\prime} n^{-1 / 2} \rightarrow 0,
$$
as $(n, p) \rightarrow \infty$. The conclusion follows.
\end{proof}
\subsubsection{Proof of Theorem \ref{thm4}(asymptotically independent under $H_{1,local}$)}
\begin{proof}
From the proof of Theorem 2 in \cite{feng2016}, we can find that
$$
T_{SUM}=\frac{2}{n(n-1)}\sum\sum_{i<j} \boldsymbol U_i^\top \boldsymbol U_j+\zeta_1^2 \boldsymbol \theta^\top \mathbf D^{-1} \boldsymbol \theta+o_p(\sigma_n),
$$
and according to Lemma \ref{lemma1} with minor modifications, we get the Bahadur representation in $L^\infty$ norm,
$$
 n^{1/2}{\mathbf D}^{-1/2}(\hat{\boldsymbol \theta}-\boldsymbol \theta)=n^{-1/2}\zeta_1^{-1}\sum_{i=1}^n (\boldsymbol U_i+\zeta_1{\mathbf D}^{-1/2}\boldsymbol \theta)+C_n.
$$

Similar to the proof in Theorem \ref{thm3}, it's suffice to show the result holds for normal version, i.e. it suffice to show that:
$$
\Vert \sqrt{\frac{p}{n}}\sum_{i=1}^n\boldsymbol Y_i\Vert^2\text{ and }\Vert \sqrt{\frac{p}{n}}\sum_{i=1}^n(\boldsymbol Y_i+\zeta_1{\mathbf D}^{-1/2}\boldsymbol \theta)\Vert_\infty^2,
$$
are asymptotic independent, where $\{\boldsymbol Y_1,\boldsymbol Y_2,\cdots,\boldsymbol Y_n\}$ are sample from $N(0,\mathbb E\boldsymbol U_1^\top \boldsymbol U_1)$.

Denote $\sqrt{\frac{p}{n}}\sum_{i=1}^n \boldsymbol Y_i:=\boldsymbol \varphi=(\varphi_1,\cdots,\varphi_p)^\top$,$\boldsymbol\varphi_{\mathcal A}=(\varphi_{j_1},\cdots,\varphi_{j_d})^\top$, and $\boldsymbol\varphi_{\mathcal A^c}=(\varphi_{j_{d+1}},\cdots,\varphi_{j_p})^\top$, where $\mathcal A=\{j_1,j_2,\cdots,j_d\}$. Then, $S=\Vert\boldsymbol\varphi\Vert^2=\Vert\boldsymbol\varphi_{\mathcal A}\Vert^2+\Vert\boldsymbol\varphi_{\mathcal A^c}\Vert^2$, $M=\Vert\boldsymbol\varphi+\sqrt{np}\zeta_1{\mathbf D}^{-1/2}\boldsymbol \theta\Vert_\infty=\max_{i\in \mathcal A}(\varphi_i+\sqrt{np}\zeta_1{\mathbf D}^{-1/2}\boldsymbol \theta)+\max_{i\in \mathcal A^c}\varphi_i$. From the proof of Theorem \ref{thm3}, we know that $\Vert\boldsymbol\varphi_{\mathcal A^c}\Vert^2$ and $\max_{i\in \mathcal A^c}\varphi_i$ are asymptotically independent. Hence, it suffice to show that $\Vert\boldsymbol\varphi_{\mathcal A^c}\Vert^2$ is asymptotically independent with $\boldsymbol\varphi_{\mathcal A}$.

By Lemma \ref{lemma_normal_decom}, $\boldsymbol\varphi_{\mathcal A^c}$ can be decomposed as $\boldsymbol\varphi_{\mathcal A^c}=\boldsymbol E+\boldsymbol F$, where $\boldsymbol E=\boldsymbol\varphi_{\mathcal A^c}-\mathbf\Sigma_{U,\mathcal A^c,\mathcal A}\mathbf\Sigma^{-1}_{U,\mathcal A,\mathcal A}\boldsymbol\varphi_{\mathcal A},\boldsymbol F=\mathbf\Sigma_{U,\mathcal A^c,\mathcal A}\mathbf\Sigma^{-1}_{U,\mathcal A,\mathcal A}\boldsymbol\varphi_{\mathcal A}$, $\mathbf\Sigma_{U}=p\mathbb E\boldsymbol U_1\boldsymbol U_1^\top=p\mathbf\Sigma_{u}$ , which fulfill the properties $\boldsymbol E\sim N(0,\mathbf\Sigma_{U,\mathcal A^c,\mathcal A^c}-\mathbf\Sigma_{U,\mathcal A^c,\mathcal A}\mathbf\Sigma^{-1}_{U,\mathcal A,\mathcal A}\mathbf\Sigma_{U,\mathcal A,\mathcal A^c})$, $\boldsymbol F\sim N(0,\mathbf\Sigma_{U,\mathcal A^c,\mathcal A}\mathbf\Sigma^{-1}_{U,\mathcal A,\mathcal A}\mathbf\Sigma_{U,\mathcal A,\mathcal A^c})$ and
$\boldsymbol E$ and $\boldsymbol \varphi_{\mathcal A}$ are independent.

Then, we rewrite
$$
\Vert\boldsymbol\varphi_{\mathcal A^c}\Vert^2=\boldsymbol E^\top \boldsymbol E+\boldsymbol F^\top \boldsymbol F+2\boldsymbol E^\top \boldsymbol F.
$$
According the proof of lemma S.7 in \cite{feng2022asymptotic}, we have that:
$$
\mathbb P(\vert \boldsymbol F^\top \boldsymbol F+2\boldsymbol E^\top \boldsymbol F\vert \geq\epsilon \nu_p)\leq \frac{3}{p^t}\rightarrow 0,
$$
by $d=o(\lambda_{\min}(\mathbf R)\text{tr}(\mathbf R^2)^{1/2}/(\log p)^C)$, where $\nu_p=[2\text{tr}(\mathbf R^2)]^{1/2}$ , $t=t_p:=C \epsilon/8 \cdot v_p/[\lambda_{\max }(\mathbf{R}) \log p]\rightarrow \infty$,$\epsilon_p:=(\log p)^C/[v_p \lambda_{\min }(\mathbf{R})]\rightarrow 0$.

\end{proof}
\subsection{Some useful  lemmas}

\begin{lemma}\label{LemmaA3}
    (Lemma A3. in \cite{cheng2023}) Suppose Assumptions \ref{max1}-\ref{max3} holds with $a_0(p)\asymp p^{1-\delta}$ for some positive constant $\delta\leq 1/2$. Define a random $p\times p$ matrix $\mathbf Q=n^{-1}\sum_{i=1}^n R^{-1}_i \boldsymbol U_i \boldsymbol U_i^\top$ and let $\mathbf Q_{jl}$ be the $(j,l)$th element of $\mathbf Q$. Then,

    (i)$\vert \mathbf Q_{jl}\vert\lesssim\zeta_1 p^{-1}\vert \sigma_{jl}\vert +O_p(\zeta_1n^{-1/2}p^{-1}+\zeta_1 p^{-7/6}+\zeta_1p^{-1-\delta/2})$.

(ii)$\mathbf Q_{jl}=\mathbf Q_{0,jl}+O_p(\zeta_1p^{-7/6}+\zeta_1p^{-1-\delta/2})$, where $\mathbf Q_{0,jl}$ is the $(j,l)$th element of
$$
\mathbf Q_0=n^{-1}p^{-1/2}\sum_{i=1}^n \nu_i^{-1} \{\mathbf D^{-1/2}\mathbf\Gamma U(\boldsymbol W_i)\}\{\mathbf D^{-1/2}\mathbf \Gamma U(\boldsymbol W_i)\}^\top.
$$
In addition, $\mathbf Q_0$ satisfies
$$
\text{tr}[\mathbb E(\mathbf Q_0^2)-\{\mathbb E(\mathbf Q_0)\}^2]=O(n^{-1}p^{-1}).
$$
\end{lemma}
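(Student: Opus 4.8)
The plan is to reduce everything to the behaviour of the radius by passing from the scale-invariant signs back to the underlying components $\boldsymbol W_i$. Under model (\ref{modelx}) we have $\boldsymbol X_i-\boldsymbol\theta=v_i\mathbf\Gamma\boldsymbol W_i$, so the scalar $v_i$ cancels in $\boldsymbol U_i=\mathbf D^{-1/2}\mathbf\Gamma\boldsymbol W_i/\Vert\mathbf D^{-1/2}\mathbf\Gamma\boldsymbol W_i\Vert$, and a direct computation (identical to the opening display in the proof of Lemma \ref{Qjl}) gives
$$
R_i^{-1}\boldsymbol U_{i,j}\boldsymbol U_{i,l}=v_i^{-1}\bigl(\mathbf D^{-1/2}\mathbf\Gamma_j\boldsymbol W_i\bigr)\bigl(\mathbf D^{-1/2}\mathbf\Gamma_l\boldsymbol W_i\bigr)\Vert\mathbf D^{-1/2}\mathbf\Gamma\boldsymbol W_i\Vert^{-3}.
$$
Thus $\mathbf Q_{jl}$ is a sample average of these terms, and the whole argument hinges on the quadratic form $\Vert\mathbf D^{-1/2}\mathbf\Gamma\boldsymbol W_i\Vert^2=\boldsymbol W_i^\top\mathbf\Gamma^\top\mathbf D^{-1}\mathbf\Gamma\boldsymbol W_i$, whose mean is $\tr(\R)=p$.

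For part (ii) I would replace $\Vert\mathbf D^{-1/2}\mathbf\Gamma\boldsymbol W_i\Vert^{-3}$ by $p^{-3/2}$, which is exactly the passage from $\mathbf Q$ to $\mathbf Q_0$. Writing $\Vert\mathbf D^{-1/2}\mathbf\Gamma\boldsymbol W_i\Vert^2=p(1+\tau_i)$ and Taylor-expanding $(1+\tau_i)^{-3/2}$, the discrepancy $\mathbf Q_{jl}-\mathbf Q_{0,jl}$ is driven by $\tau_i=p^{-1}\bigl(\boldsymbol W_i^\top\mathbf\Gamma^\top\mathbf D^{-1}\mathbf\Gamma\boldsymbol W_i-p\bigr)$. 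Since the variance of this quadratic form is of order $\tr(\R^2)\lesssim p\,a_0(p)=p^{2-\delta}$ by Assumption \ref{max3}, one gets $\tau_i=O_p(p^{-\delta/2})$; combining this with the sub-exponential tails of Assumption \ref{max1} and the fourth-moment control of Lemma \ref{lemma1_like_2016} yields the stated remainder $O_p(\zeta_1 p^{-7/6}+\zeta_1 p^{-1-\delta/2})$, the $p^{-7/6}$ arising when the cubic Taylor remainder is balanced against the truncation level used to bound $\max_i|\tau_i|$ under $\log n=o(p^{1/3\wedge\delta})$.

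For part (i) I would compute the first two moments of $\mathbf Q_0$ in its stated form, where the summand depends on $\boldsymbol W_i$ only through the direction $U(\boldsymbol W_i)$. Using $\mathbb{E}\{U(\boldsymbol W_i)U(\boldsymbol W_i)^\top\}=p^{-1}\mathbf{I}_p$ exactly (by the sign-symmetry and exchangeability of the i.i.d.\ components of $\boldsymbol W_i$) together with the independence of the model's $v_i$ from the direction $U(\boldsymbol W_i)$, the mean factorizes to leading order as $\mathbb{E}(\mathbf Q_{0,jl})=p^{-1/2}\mathbb{E}(\nu_i^{-1})\,p^{-1}\sigma_{jl}=p^{-3/2}\mathbb{E}(\nu_i^{-1})\sigma_{jl}$, the residual magnitude--direction coupling entering only the lower-order terms. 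Since $\zeta_1=\mathbb{E}(R_i^{-1})\asymp p^{-1/2}\mathbb{E}(\nu_i^{-1})$ by the same radius concentration, this is $\zeta_1 p^{-1}\sigma_{jl}$, the leading term in (i); the fluctuation of the $n$-term average about its mean is $O_p(\zeta_1 n^{-1/2}p^{-1})$, and together with part (ii) this gives the bound in (i). The trace identity is the same second-moment bookkeeping summed over coordinates: $\tr[\mathbb{E}(\mathbf Q_0^2)-\{\mathbb{E}(\mathbf Q_0)\}^2]=\sum_{j,l}\mathrm{Var}(\mathbf Q_{0,jl})=n^{-1}\sum_{j,l}\mathrm{Var}(\text{single summand})$, and Lemma \ref{lemma1_like_2016} bounds the per-coordinate fourth moments of $U(\boldsymbol W_i)$ so that the double sum collapses to $O(n^{-1}p^{-1})$.

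The hardest step is part (ii): obtaining sharp high-probability control of the radius $\Vert\mathbf D^{-1/2}\mathbf\Gamma\boldsymbol W_i\Vert$ uniformly in $i$ under only the Orlicz-norm assumption, and then propagating this through the nonlinear $-3/2$ power while tracking how the correlation structure ($a_0(p)\asymp p^{1-\delta}$ in Assumption \ref{max3}) enters the exponents. This is precisely the content of Lemma A3 in \cite{cheng2023}; I would reuse their argument, replacing only their elliptical-distribution moment evaluations by the independent-component bounds of Assumption \ref{max1} and Lemma \ref{lemma1_like_2016}.
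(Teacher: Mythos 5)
The first thing to compare is not the argument but its existence: the paper does not prove this lemma at all. It is listed under ``Some useful lemmas,'' quoted verbatim and with attribution as Lemma A3 of \cite{cheng2023}, and then used purely as an imported ingredient in the proofs of Lemma \ref{Qjl} and Lemma \ref{lemma1}. That citation needs no adaptation, because the hypotheses coincide: by Remark~1, Assumption \ref{max1} here \emph{is} Condition C.1 of \cite{cheng2023} --- the same independent-components model $\boldsymbol X_i=\boldsymbol\theta+v_i\mathbf\Gamma\boldsymbol W_i$ with symmetric, sub-exponential entries --- and Assumptions \ref{max2}--\ref{max3} match their radius and correlation conditions. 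This undercuts the premise of your closing paragraph: \cite{cheng2023} does not work under an elliptical assumption, so there are no ``elliptical-distribution moment evaluations'' to replace with independent-component bounds; the lemma transfers verbatim, which is exactly why the authors state it without proof. The place where genuine adaptation is required, and where this paper does new work, is Lemma \ref{Qjl}, the $\hat{\mathbf D}$-estimated version of $\mathbf Q$, whose proof invokes the present lemma together with Lemma \ref{lemma:D_rate}.

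As a reconstruction of the cited proof, your sketch is structurally faithful: the identity $R_i^{-1}\boldsymbol U_{i,j}\boldsymbol U_{i,l}=v_i^{-1}(\mathbf D^{-1/2}\mathbf\Gamma_j\boldsymbol W_i)(\mathbf D^{-1/2}\mathbf\Gamma_l\boldsymbol W_i)\Vert\mathbf D^{-1/2}\mathbf\Gamma\boldsymbol W_i\Vert^{-3}$ is indeed the same starting point as the opening display in the paper's proof of Lemma \ref{Qjl}; the exact sign moments $\mathbb E\{U(\boldsymbol W_i)U(\boldsymbol W_i)^\top\}=p^{-1}\mathbf I_p$ together with the model's independence of $v_i$ from $U(\boldsymbol W_i)$ do give $\mathbb E(\mathbf Q_0)=p^{-3/2}\mathbb E(\nu_1^{-1})\R$ and hence the leading term $\zeta_1p^{-1}\sigma_{jl}$ in (i); the concentration $\tau_i=O_p(p^{-\delta/2})$ follows correctly from $\operatorname{tr}(\R^2)\lesssim p\,a_0(p)\asymp p^{2-\delta}$; and the trace claim is, as you say, the variance bookkeeping $\sum_{j,l}\mathrm{Var}(\mathbf Q_{0,jl})$ (valid because $\mathbf Q_0$ is symmetric). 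If your proposal were to stand as a self-contained proof, however, two steps are asserted rather than derived: the origin of the $\zeta_1p^{-7/6}$ term (your ``balancing the cubic Taylor remainder against the truncation level'' is a plausible story, not a computation), and the moment control needed for the trace bound, namely that $\mathbb E[\nu_1^{-2}\{U(\boldsymbol W_1)^\top\mathbf\Gamma^\top\mathbf D^{-1}\mathbf\Gamma\,U(\boldsymbol W_1)\}^2]=O(1)$, which requires Assumption \ref{max2} and a uniform-in-$i$ treatment of the radii. Since you explicitly delegate exactly these points to \cite{cheng2023}, nothing in your outline fails; it is simply a sketch-plus-citation, which in substance is the same move the paper makes, minus the unnecessary ``de-ellipticization'' step.
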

\begin{lemma}\label{LemmaA4}
    (Lemma A4. in \cite{cheng2023})Suppose Assumptions \ref{max1}-\ref{max3} hold with $a_0(p)\asymp p^{1-\delta}$ for some positive constant $\delta\leq 1/2$. Then,

(i) $\mathbb E\{(\zeta_1^{-1} U_{i,j})^4\}\lesssim\bar M^2$ and $\mathbb E\{(\zeta_1^{-1} U_{i,j})^2\}\gtrsim \underline{m}$ for all $i=1,2,\cdots,n$ and $j=1,2,\cdots,p$.

(ii) $\Vert \zeta_{1}^{-1} U_{i,j}\Vert_{\psi_\alpha}\lesssim \bar B$ for all $i=1,2,
\cdots,n$ and $j=1,2,\cdots,p$.

(iii )$\mathbb E( U_{i,j}^2)=p^{-1}+O(p^{-1-\delta/2})$ for $j=1,2,\cdots,p$ and $\mathbb E( U_{i,j} U_{i,l})=p^{-1}\sigma_{j,l} +O(p^{-1-\delta/2})$ for $1\leq j\not=l\leq p$.

(iv) if $\log p=o(n^{1/3})$,
$$
\left\vert n^{-1/2}\sum_{i=1}^n\zeta_1^{-1}\boldsymbol U_i\right\vert_\infty =O_p\{\log^{1/2}(np)\}\text{ and } \left\vert n^{-1} \sum_{i=1}^n(\zeta_1^{-1}\boldsymbol U_i)^2\right\vert_\infty=O_p(1).
$$

\end{lemma}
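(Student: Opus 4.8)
The plan is to establish the four claims in the order (iii), (i), (ii), (iv), because the cross-moment computation in (iii) produces the leading constants that the other three parts reuse. Throughout I take $\boldsymbol\theta=0$ and work from the exact representation $\boldsymbol U_i=\mathbf D^{-1/2}\mathbf\Gamma U(\boldsymbol W_i)\{1+U(\boldsymbol W_i)^\top(\mathbf R-\mathbf I_p)U(\boldsymbol W_i)\}^{-1/2}$ underlying Equation \ref{eq:expansion_of_U_i}. The conceptual difference from the elliptical setting of \cite{cheng2023} is that here $U(\boldsymbol W_i)$ is not uniform on the sphere, so every moment identity must be drawn from Assumption \ref{max1} and Lemma \ref{lemma1_like_2016} rather than from exact spherical symmetry.

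For (iii) I would first use that the independent symmetric coordinates of $\boldsymbol W_i$ give $\mathbb E\{U(\boldsymbol W_i)U(\boldsymbol W_i)^\top\}=p^{-1}\mathbf I_p$: off-diagonal entries vanish by the single-coordinate sign-flip symmetry, and the diagonal entries equal $p^{-1}$ by exchangeability together with $\|U(\boldsymbol W_i)\|=1$. Writing $\mathbf a_j$ for the $j$th row of $\mathbf D^{-1/2}\mathbf\Gamma$ (so that $\mathbf a_j^\top\mathbf a_l=\sigma_{jl}$), the leading term of $\mathbb E(U_{i,j}U_{i,l})$ is therefore $\mathbf a_j^\top(p^{-1}\mathbf I_p)\mathbf a_l=p^{-1}\sigma_{jl}$. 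The remainder is governed by $X:=U(\boldsymbol W_i)^\top(\mathbf R-\mathbf I_p)U(\boldsymbol W_i)$ through the factor $\{1+X\}^{-1/2}$; Lemma \ref{lemma1_like_2016} gives $\mathbb E(X^2)=O\{p^{-2}\tr((\mathbf R-\mathbf I_p)^2)\}$, and Assumption \ref{max3} with $a_0(p)\asymp p^{1-\delta}$ yields $\tr((\mathbf R-\mathbf I_p)^2)=\sum_{j\neq l}\sigma_{jl}^2\le p\,a_0(p)=O(p^{2-\delta})$, whence $X=O_p(p^{-\delta/2})$ and each (cross-)moment picks up only an $O(p^{-1-\delta/2})$ correction.

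For (i) I would combine (iii) with Assumption \ref{max2}. Since $\mathbb E(R_i/\sqrt p)^{-1}=\sqrt p\,\zeta_1$ is bounded above and below, $\zeta_1\asymp p^{-1/2}$, and then $\mathbb E\{(\zeta_1^{-1}U_{i,j})^2\}=(\zeta_1\sqrt p)^{-2}\,p\,\mathbb E(U_{i,j}^2)\gtrsim\underline m$ from $\mathbb E(U_{i,j}^2)=p^{-1}(1+o(1))$. The delicate point is the fourth moment: the crude bound $U_{i,j}^4\le U_{i,j}^2$ only gives $O(p^{-1})$, which is useless after multiplication by $\zeta_1^{-4}\asymp p^2$. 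Instead I would write $U_{i,j}=Z_{i,j}/\widetilde R_i$ with $Z_{i,j}=\mathbf a_j^\top\boldsymbol W_i$ (variance $\sigma_{jj}=1$) and $\widetilde R_i^2=\boldsymbol W_i^\top\mathbf\Gamma^\top\mathbf D^{-1}\mathbf\Gamma\boldsymbol W_i$, split the expectation over the high-probability event $\{\widetilde R_i^2\ge p/2\}$ and its rare complement, and use the concentration of $\widetilde R_i^2$ about its mean $p$ (a Hanson--Wright bound under Assumption \ref{max1}) together with the bounded fourth moment of the sub-exponential linear form $Z_{i,j}$ to get $\mathbb E(U_{i,j}^4)=O(p^{-2})$; on the rare complement the deterministic bound $|U_{i,j}|\le1$ suffices. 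Hence $\mathbb E\{(\zeta_1^{-1}U_{i,j})^4\}=(\zeta_1\sqrt p)^{-4}\,p^2\,\mathbb E(U_{i,j}^4)=O(1)\lesssim\bar M^2$.

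Part (ii) uses the same decomposition: on $\{\widetilde R_i^2\asymp p\}$ one has $\zeta_1^{-1}U_{i,j}\asymp Z_{i,j}$, and since $\|\mathbf a_j\|_2=1$ and $\|\mathbf a_j\|_\infty\le1$ place its Bernstein tail in the sub-exponential regime, the linear form $Z_{i,j}$ inherits a $\psi_\alpha$ Orlicz norm bounded in terms of $c_0$ from Assumption \ref{max1}; the bound $|U_{i,j}|\le1$ again absorbs the rare event, giving $\|\zeta_1^{-1}U_{i,j}\|_{\psi_\alpha}\lesssim\bar B$. Finally, for (iv) I would apply a Bonferroni bound over the $p$ coordinates to the i.i.d.\ mean-zero (by symmetry) summands $\zeta_1^{-1}U_{i,j}$: part (i) supplies the variance proxy and part (ii) the sub-exponential tail, so an Orlicz/Bernstein maximal inequality yields $\big|n^{-1/2}\sum_{i=1}^n\zeta_1^{-1}\boldsymbol U_i\big|_\infty=O_p\{\log^{1/2}(np)\}$ whenever $\log p=o(n^{1/3})$, and the same tail control plus the fourth-moment bound gives the law-of-large-numbers statement $\big|n^{-1}\sum_{i=1}^n(\zeta_1^{-1}\boldsymbol U_i)^2\big|_\infty=O_p(1)$. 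The main obstacle, relative to \cite{cheng2023}, is exactly the sharp fourth-moment and Orlicz estimates in (i)--(ii): without spherical uniformity they must be driven by the radius concentration of $\widetilde R_i$ and the sub-exponential structure of Assumption \ref{max1}, and every maximal bound in (iv) hinges on getting those two rates correct.
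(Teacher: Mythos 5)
The paper never proves this lemma: it is listed in the appendix under ``Some useful lemmas'' and imported verbatim from Lemma A4 of \cite{cheng2023} (applied with $\mathbf\Gamma$ replaced by $\mathbf D^{-1/2}\mathbf\Gamma$, so that the shape matrix there becomes $\mathbf R$). Your attempt is therefore necessarily a different route --- a from-scratch reconstruction --- and the relevant question is whether it is sound. It is sound for (i) and for the leading term of (iii), but it has a genuine gap in part (ii), which then propagates to (iv).

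The gap: for (ii) you split on the event $\{\widetilde R_i^2\ge p/2\}$ and claim the deterministic bound $|U_{i,j}|\le 1$ ``absorbs'' the complement. For a fourth-moment bound this absorption works, because the rare-event contribution is at most $\zeta_1^{-4}\,\mathbb P(\widetilde R_i^2<p/2)\lesssim p^2 e^{-cp^{\delta}}=o(1)$ --- an exponential beats any polynomial. For an Orlicz-norm bound it does not: on the complement you only know $|\zeta_1^{-1}U_{i,j}|\le\zeta_1^{-1}\asymp p^{1/2}$, so boundedness of $\|\zeta_1^{-1}U_{i,j}\|_{\psi_\alpha}$ forces the tail inequality $\mathbb P(|\zeta_1^{-1}U_{i,j}|>t)\lesssim\exp\{-(t/\bar B)^{\alpha}\}$ to hold all the way up to $t\asymp p^{1/2}$, which requires $\mathbb P(\widetilde R_i^2<p/2)\lesssim\exp(-cp^{\alpha/2})$. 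Hanson--Wright (or its $\psi_\alpha$ extensions) under Assumption \ref{max3}, which permits $\lambda_{\max}(\mathbf R)\asymp p^{1-\delta}$ and $\operatorname{tr}(\mathbf R^2)\asymp p^{2-\delta}$, only guarantees $\exp(-cp^{\delta})$, and this rate is attained (e.g.\ by block-structured $\mathbf R$ with $p^{\delta}$ blocks of size $p^{1-\delta}$). Since $\alpha\ge1$ and $\delta\le1/2$ give $\delta\le\alpha/2$ with equality only when $\alpha=1,\delta=1/2$, the rare event is generically not rare enough for your argument. A correct proof must exploit the interaction between the two events: when the radius collapses via the large-eigenvalue directions, $|U_{i,j}|$ is simultaneously capped far below $1$ (at roughly $\lambda_{\max}(\mathbf R)^{-1/2}$, so $\zeta_1^{-1}|U_{i,j}|\lesssim p^{\delta/2}$, whose required tail $\exp(-cp^{\alpha\delta/2})$ is weaker than $\exp(-cp^{\delta})$). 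Your sketch skips exactly this point, and since (iv) is run off the $\psi_\alpha$ bound of (ii), the maximal inequalities there inherit the gap.

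A secondary, fixable issue is in (iii): $X=O_p(p^{-\delta/2})$ alone does not yield an $O(p^{-1-\delta/2})$ correction. Writing the correction exactly as $-\mathbb E\{U_{i,j}U_{i,l}X\}$ and applying Cauchy--Schwarz, you need $\mathbb E\{U_{i,j}^2U_{i,l}^2\}=O(p^{-2})$, i.e.\ the fourth-moment estimate of (i) or the direct bound $\mathbb E\{(\mathbf a_j^{\top}U(\boldsymbol W_i))^4\}=O(p^{-2})$ obtained from Lemma \ref{lemma1_like_2016} with $\mathbf M=\mathbf a_j\mathbf a_j^{\top}$; the crude bound $U_{i,j}^2U_{i,l}^2\le U_{i,j}^2$ gives only $O(p^{-1/2-\delta/2})$. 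So the fourth-moment bound must precede (iii), reversing the order you chose; once reordered, (i) and (iii) go through as you describe.
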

\begin{lemma}\label{LemmaA6}
    (Nazarov's inequality) Let $\boldsymbol Y_0=(Y_{0,1},Y_{0,2},\cdots,Y_{0,p})^\top$ be a centered Gaussian random vector in $\mathbb R^p$ and $\mathbb E(Y_{0,j}^2)\geq b$ for all $j=1,2,\cdots,p$ and some constant $b>0$, then for every $y\in\mathbb R^p$ and $a>0$,
    $$
\mathbb P(\boldsymbol Y_0\leq y+a)-\mathbb P(\boldsymbol Y_0\leq y)\lesssim a\log^{1/2}(p).
    $$

\end{lemma}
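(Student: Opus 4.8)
The plan is to recognize Lemma~\ref{LemmaA6} as the classical Gaussian anti-concentration bound of Nazarov, already available in the references invoked elsewhere in this appendix (see \cite{chernozhukov2017central} and the companion notes \cite{koike2021notes}), so the cleanest route is to cite it directly. For completeness I would reconstruct the argument as follows. Writing $M=\max_{1\le j\le p}(Y_{0,j}-y_j)$, the componentwise event $\{\boldsymbol Y_0\le y+a\}$ is exactly $\{M\le a\}$ and $\{\boldsymbol Y_0\le y\}$ is $\{M\le 0\}$, so the left-hand side equals $\mathbb P(0<M\le a)=\int_0^a f_M(t)\,dt$, where $f_M$ is the density of $M$. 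The whole statement therefore reduces to the uniform density bound $\sup_t f_M(t)\lesssim \log^{1/2}(p)$, after which integrating over $[0,a]$ produces the factor $a$ and yields the claim uniformly in $y$ and $a>0$.

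To bound the density of the maximum, I would use the co-area / surface-integral representation
$$
f_M(t)=\sum_{j=1}^p \psi_j(t+y_j)\,\mathbb P\big(Y_{0,k}-y_k\le t,\ \forall k\ne j \,\big|\, Y_{0,j}=t+y_j\big),
$$
where $\psi_j$ is the $N(0,\sigma_j^2)$ marginal density and $\sigma_j^2=\mathbb E(Y_{0,j}^2)\ge b$. Each marginal density is bounded by $(2\pi b)^{-1/2}$, but the naive sum over $j$ only gives the useless bound $p/\sqrt b$; the crux is to turn this factor $p$ into $\log^{1/2}(p)$. I would achieve this by splitting the range of $t$ at a threshold of order $\sqrt{b\log p}$: for $t+y_j$ beyond the threshold the Gaussian tail of $\psi_j$ decays fast enough to absorb the number of summands, while below the threshold the conditional exceedance probabilities remain summable because only a logarithmic number of coordinates can simultaneously sit near the maximum. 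An equivalent and smoother route is the log-sum-exp device already used in the proof of Theorem~\ref{thm3}: approximate $M$ by $F_\beta(z)=\beta^{-1}\log\sum_j e^{\beta(z_j-y_j)}$, note $0\le F_\beta-M\le \beta^{-1}\log p$ and that $F_\beta$ is $1$-Lipschitz with $\sum_j\partial_j F_\beta=1$, apply Gaussian integration by parts (Stein's identity) to a smoothed indicator $g(F_\beta)$, and then optimize $\beta$ to balance the smoothing error $\beta^{-1}\log p$ against the resulting Lipschitz-type density bound, which produces the $\log^{1/2}(p)$ rate.

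The main obstacle is precisely this passage from the trivial factor $p$ to $\log^{1/2}(p)$: one must quantify how the Gaussian tail of the marginal densities trades off against the number of coordinates, and the variance lower bound $\mathbb E(Y_{0,j}^2)\ge b$ is exactly what keeps each $\psi_j$ and each conditional correction under control so that the threshold split closes. Everything else—rewriting the box difference as $\mathbb P(0<M\le a)$ and integrating the density bound over $[0,a]$—is routine, so the entire weight of the lemma rests on the anti-concentration estimate for the maximum of the correlated Gaussian vector $\boldsymbol Y_0$.
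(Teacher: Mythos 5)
The paper itself gives no proof of this lemma: it is stated as the classical Nazarov anti-concentration inequality (Nazarov, 2003; see also Lemma A.1 of \cite{chernozhukov2017central}) and used as a black box in the proof of Lemma \ref{lemma2}, so your primary plan of citing it directly is exactly the paper's treatment. Your reconstruction sketch is a reasonable outline of how such bounds are proved, though its key step---converting the naive factor $p$ into $\log^{1/2}(p)$, whether by the threshold split or the log-sum-exp smoothing---is left heuristic (e.g.\ the claim that only logarithmically many coordinates can sit near the maximum is not itself a proof); since the result is standard and cited, this does not affect the correctness of your proposal.
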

\begin{lemma}(Theorem 2 in \cite{chernozhukov2015comparison})\label{LemmaJ2}
Let $\boldsymbol X=\left(X_1, \ldots, X_p\right)^\top$ and $\boldsymbol Y=\left(Y_1, \ldots, Y_p\right)^\top$ be centered Gaussian random vectors in $\mathbb{R}^p$ with covariance matrices $\mathbf \Sigma^X=$ $\left(\sigma_{j k}^X\right)_{1 \leq j, k \leq p}$ and $\mathbf\Sigma^Y=\left(\sigma_{j k}^Y\right)_{1 \leq j, k \leq p}$, respectively. In terms of $p$,
$$
\Delta:=\max _{1 \leq j, k \leq p}\left|\sigma_{j k}^X-\sigma_{j k}^Y\right|, \text { and } a_p:=\mathrm{E}\left[\max _{1 \leq j \leq p}\left(Y_j / \sigma_{j j}^Y\right)\right].
$$
Suppose that $p \geq 2$ and $\sigma_{j j}^Y>0$ for all $1 \leq j \leq p$. Then
$$
\begin{aligned}
& \sup _{x \in \mathbb{R}}\left|\mathrm{P}\left(\max _{1 \leq j \leq p} X_j \leq x\right)-\mathrm{P}\left(\max _{1 \leq j \leq p} Y_j \leq x\right)\right| \\
& \quad \leq C \Delta^{1 / 3}\left\{\left(1 \vee a_p^2 \vee \log (1 / \Delta)\right\}^{1 / 3} \log ^{1 / 3} p,\right.
\end{aligned}
$$
where $C>0$ depends only on $\min _{1 \leq j \leq p} \sigma_{j j}^Y$ and $\max _{1 \leq j \leq p} \sigma_{j j}^Y$ (the right side is understood to be 0 when $\Delta=0$ ). Moreover, in the worst case, $a_p \leq \sqrt{2 \log p}$, so that
$$
\sup _{x \in \mathbb{R}}\left|\mathrm{P}\left(\max _{1 \leq j \leq p} X_j \leq x\right)-\mathrm{P}\left(\max _{1 \leq j \leq p} Y_j \leq x\right)\right| \leq C^{\prime} \Delta^{1 / 3}\{1 \vee \log (p / \Delta)\}^{2 / 3},
$$
where as before $C^{\prime}>0$ depends only on $\min _{1 \leq j \leq p} \sigma_{j j}^Y$ and $\max _{1 \leq j \leq p} \sigma_{j j}^Y$.
\end{lemma}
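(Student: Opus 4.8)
The statement is the Gaussian comparison bound of \cite{chernozhukov2015comparison}, so the plan is to reproduce a Slepian-type smart-path interpolation combined with a smoothing of the maximum function. First I would replace the non-smooth map $\boldsymbol z\mapsto\max_{1\le j\le p}z_j$ by the log-sum-exp surrogate $F_\beta(\boldsymbol z)=\beta^{-1}\log\sum_{j=1}^p e^{\beta z_j}$, which obeys $0\le F_\beta(\boldsymbol z)-\max_j z_j\le\beta^{-1}\log p$, and I would also replace the indicator $\mathbb I\{\,\cdot\le x\}$ by a thrice-differentiable surrogate $g$ carrying a bandwidth parameter. These two smoothing steps reduce the Kolmogorov distance $\sup_x|\mathrm P(\max_j X_j\le x)-\mathrm P(\max_j Y_j\le x)|$ to a comparison of $\mathrm E[g(F_\beta(\boldsymbol X))]$ and $\mathrm E[g(F_\beta(\boldsymbol Y))]$, up to the two controllable approximation errors.

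Taking $\boldsymbol X$ and $\boldsymbol Y$ to be independent (replacing them by independent copies if needed), I would then introduce the interpolating Gaussian vector $\boldsymbol Z(t)=\sqrt t\,\boldsymbol X+\sqrt{1-t}\,\boldsymbol Y$ for $t\in[0,1]$ and write the target difference as $\int_0^1\frac{d}{dt}\mathrm E[(g\circ F_\beta)(\boldsymbol Z(t))]\,dt$. Its covariance is $t\boldsymbol\Sigma^X+(1-t)\boldsymbol\Sigma^Y$, so differentiating in $t$ and applying Gaussian integration by parts (Stein's identity) expresses the derivative as $\tfrac12\sum_{j,k}(\sigma_{jk}^X-\sigma_{jk}^Y)\,\mathrm E[\partial_j\partial_k(g\circ F_\beta)(\boldsymbol Z(t))]$. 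Bounding each covariance difference by $\Delta$ and invoking the structural identities of the log-sum-exp Hessian — namely $\sum_j\partial_jF_\beta=1$ and $\sum_{j,k}|\partial_j\partial_kF_\beta|\lesssim\beta$ — together with bounds on the first two derivatives of $g$, yields a control of the form $\lesssim\Delta\,(\|g''\|_\infty+\beta\|g'\|_\infty)$.

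The remaining work is to balance the two smoothing scales. The bandwidth of $g$ governs $\|g''\|_\infty$, but the cost of having used a smooth $g$ in place of a sharp indicator is controlled by a Gaussian anti-concentration inequality for $\max_j Y_j$; this is precisely where $a_p=\mathrm E[\max_j(Y_j/\sigma_{jj}^Y)]$ enters, since the spread of the maximum is governed by $a_p$. Optimizing the softmax parameter $\beta$ and the bandwidth against $\Delta$ produces the exponent $1/3$ and the factor $\{1\vee a_p^2\vee\log(1/\Delta)\}^{1/3}\log^{1/3}p$, and substituting the worst-case estimate $a_p\le\sqrt{2\log p}$ gives the cruder second inequality. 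The main obstacle is exactly this anti-concentration/smoothing tradeoff: one must show that the law of $\max_j Y_j$ has no sharp spikes, so that smoothing the indicator costs only a controlled amount, and then tune $\beta$ so that neither the surrogate error $\beta^{-1}\log p$ nor the interpolation error of order $\Delta\beta$ dominates.
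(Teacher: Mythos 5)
Your proposal is correct and follows essentially the same route as the original proof of this result: the paper itself does not prove Lemma \ref{LemmaJ2} but imports it verbatim as Theorem 2 of \cite{chernozhukov2015comparison}, and that reference's argument is exactly your combination of Slepian--Stein interpolation along $\sqrt{t}\,\boldsymbol X+\sqrt{1-t}\,\boldsymbol Y$, the log-sum-exp smoothing with its Hessian identities, the Gaussian anti-concentration bound for $\max_j Y_j$ in terms of $a_p$, and the final optimization over the smoothing parameters that yields the $\Delta^{1/3}$ rate. The only ingredient you treat as a black box is the anti-concentration inequality itself (Theorem 3 in that reference), which is a separate nontrivial result, but invoking it is exactly what the source does.
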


\begin{lemma}\label{lemma_normal_decom}
    (Theorem 1.2.11 in \cite{muirhead2009aspects})Let $\boldsymbol{X} \sim N(\boldsymbol{\mu}, \mathbf{\Sigma})$ with invertible $\mathbf{\Sigma}$, and partition $\boldsymbol{X}, \boldsymbol{\mu}$ and $\mathbf{\Sigma}$ as
$$
\boldsymbol{X}=\left(\begin{array}{c}
\boldsymbol{X}_1 \\
\boldsymbol{X}_2
\end{array}\right), \quad \boldsymbol{\mu}=\left(\begin{array}{l}
\boldsymbol{\mu}_1 \\
\boldsymbol{\mu}_2
\end{array}\right), \quad \mathbf{\Sigma}=\left(\begin{array}{cc}
\mathbf\Sigma_{11} & \mathbf\Sigma_{12} \\
\mathbf\Sigma_{21} & \mathbf\Sigma_{22}
\end{array}\right) .
$$
Then $\boldsymbol{X}_2-\mathbf{\Sigma}_{21} \mathbf{\Sigma}_{11}^{-1} \boldsymbol{X}_1 \sim N\left(\boldsymbol{\mu}_2-\mathbf{\Sigma}_{21} \mathbf{\Sigma}_{11}^{-1} \boldsymbol{\mu}_1, \mathbf{\Sigma}_{22 \cdot 1}\right)$ and is independent of $\boldsymbol{X}_1$, where $\mathbf{\Sigma}_{22 \cdot 1}=$ $\mathbf\Sigma_{22}-\mathbf\Sigma_{21} \mathbf\Sigma_{11}^{-1} \mathbf\Sigma_{12}$.
\end{lemma}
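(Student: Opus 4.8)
The plan is to use the two defining properties of the multivariate normal law: an affine image of a Gaussian vector is again Gaussian, and within a \emph{jointly} Gaussian vector two sub-blocks are independent as soon as their cross-covariance vanishes. First I would set $\boldsymbol Y:=\boldsymbol X_2-\mathbf\Sigma_{21}\mathbf\Sigma_{11}^{-1}\boldsymbol X_1$ and observe that the stacked vector $(\boldsymbol X_1^\top,\boldsymbol Y^\top)^\top$ is a linear image of $\boldsymbol X$,
\begin{equation*}
\begin{pmatrix}\boldsymbol X_1\\ \boldsymbol Y\end{pmatrix}=\mathbf A\boldsymbol X,\qquad \mathbf A=\begin{pmatrix}\mathbf I & \mathbf 0\\ -\mathbf\Sigma_{21}\mathbf\Sigma_{11}^{-1} & \mathbf I\end{pmatrix},
\end{equation*}
so that $(\boldsymbol X_1,\boldsymbol Y)$ is jointly Gaussian. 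Here $\mathbf\Sigma_{11}^{-1}$ exists because invertibility of $\mathbf\Sigma$ forces its diagonal block $\mathbf\Sigma_{11}$ to be invertible.

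Next I would read off the first two moments by linearity. The mean is $\mathbb E(\boldsymbol Y)=\boldsymbol\mu_2-\mathbf\Sigma_{21}\mathbf\Sigma_{11}^{-1}\boldsymbol\mu_1$, matching the claimed location. Expanding the quadratic form and using $\mathbf\Sigma_{12}=\mathbf\Sigma_{21}^\top$, the two cross terms and the last term collapse to a single copy of $\mathbf\Sigma_{21}\mathbf\Sigma_{11}^{-1}\mathbf\Sigma_{12}$, giving
\begin{equation*}
\mathrm{Var}(\boldsymbol Y)=\mathbf\Sigma_{22}-\mathbf\Sigma_{21}\mathbf\Sigma_{11}^{-1}\mathbf\Sigma_{12}=\mathbf\Sigma_{22\cdot1},
\end{equation*}
which is exactly the stated Schur-complement covariance.

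The decisive step is the cross-covariance, $\mathrm{Cov}(\boldsymbol Y,\boldsymbol X_1)=\mathbf\Sigma_{21}-\mathbf\Sigma_{21}\mathbf\Sigma_{11}^{-1}\mathbf\Sigma_{11}=\mathbf 0$; this vanishing is precisely what the ``regression coefficient'' $\mathbf\Sigma_{21}\mathbf\Sigma_{11}^{-1}$ is engineered to produce. Since $(\boldsymbol X_1,\boldsymbol Y)$ is jointly Gaussian and its off-diagonal covariance block is zero, the two pieces are independent, and $\boldsymbol Y$ carries the Gaussian law identified above. There is no substantive obstacle in this classical fact; the only points requiring care are justifying that $\mathbf\Sigma_{11}$ is invertible and invoking the implication ``uncorrelated $\Rightarrow$ independent,'' which is valid only because of \emph{joint}---not merely marginal---normality of $(\boldsymbol X_1,\boldsymbol Y)$.
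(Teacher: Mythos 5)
Your proof is correct and complete. The paper itself gives no proof of this lemma at all — it imports it verbatim as Theorem 1.2.11 of Muirhead (2009) — and your argument (the affine map $\mathbf A$ giving joint normality of $(\boldsymbol X_1,\boldsymbol Y)$, the moment computations, and zero cross-covariance plus \emph{joint} Gaussianity yielding independence, with $\mathbf\Sigma_{11}$ inheriting invertibility from the positive definite $\mathbf\Sigma$) is precisely the standard textbook proof of that cited result.
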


\end{document}